\newtheorem{theorem}{Theorem}
\newtheorem{lemma}{Lemma}
\newtheorem{definition}{Definition}
\newtheorem{remark}{Remark}
\newcommand{\n}{\nonumber}
\newcommand\numberthis{\addtocounter{equation}{1}\tag{\theequation}}
\newcounter{const-no}
\DeclarePairedDelimiterX{\card}[1]{\lvert}{\rvert}{#1}
\DeclarePairedDelimiterX{\norm}[1]{\lVert}{\rVert}{#1}
\DeclareMathOperator*{\argmax}{arg\,max}
\newcommand*\diff{\mathop{}\!\mathsf{d}}
\newcommand{\ind}[1]{\mathds{1}\left\lbrace #1 \right\rbrace}
\newcommand\givenbase[1][]{\:#1\lvert\:}
\let\given\givenbase
\DeclarePairedDelimiterX\Basics[1](){\let\given\sgiven #1}
\def\EE{{\mathbb{E}}}\def\PP{{\mathbb{P}}}
\def\RR{\mathbb{R}}\def\NN{\mathbb{N}}
\begin{document}

\title{Augmenting Max-Weight with Explicit Learning for Wireless Scheduling with Switching Costs}
\author{Subhashini~Krishnasamy,
        Akhil~P~T,
        Ari~Arapostathis,~\IEEEmembership{Fellow,~IEEE,}
        Rajesh~Sundaresan,~\IEEEmembership{Senior~Member,~IEEE,}
        and~Sanjay~Shakkottai,~\IEEEmembership{Fellow,~IEEE}
\thanks{S. Krishnasamy, A. Arapostathis and S. Shakkottai are with the Department
of Electrical and Computer Engineering, The University of Texas at
Austin, Austin, TX 78712 USA (e-mail: subhashini.kb@utexas.edu).}
\thanks{P. T. Akhil is with the Department of Electrical Communication Engineering, Indian Institute of Science, Bangalore 560012, India.}
\thanks{R. Sundaresan is with the Department of Electrical Communication Engineering and with The Robert Bosch Centre for Cyber-Physical Systems, Indian Institute of Science, Bangalore 560012, India.}
\thanks{A shorter version of this paper appears in the
  Proceedings of IEEE Conference on Computer Communications (IEEE
  Infocom 2017) \cite{KAASS17}.}}
%

%

\maketitle
\begin{abstract}
  In small-cell wireless networks where users are connected to multiple base stations (BSs), it is often advantageous to switch off dynamically a subset of BSs to minimize energy costs. We consider two types of energy cost: (i) the cost of maintaining a BS in the active state, and (ii) the cost of switching a BS from the active state to inactive state. The problem is to operate the network at the lowest possible energy cost (sum of activation and switching costs) subject to queue stability. In this setting, the traditional approach --- a Max-Weight algorithm along with a Lyapunov-based stability argument --- does not suffice to show queue stability, essentially due to the temporal co-evolution between channel scheduling and the BS activation decisions induced by the switching cost. Instead, we develop a learning and BS activation algorithm with slow temporal dynamics, and a Max-Weight based channel scheduler that has fast temporal dynamics. We show using convergence of time-inhomogeneous Markov chains, that the co-evolving dynamics of learning, BS activation and queue lengths lead to near optimal average energy costs along with queue stability.
\end{abstract}

\begin{IEEEkeywords}
wireless scheduling, base-station activation, energy minimization
\end{IEEEkeywords}

 \section{Introduction}
\label{sec:intro}
Due to the tremendous increase in demand for data traffic, modern cellular
networks have taken the densification route to support peak traffic
demand \cite{nagabhushan-etal14nw-densification-5g}. While increasing
the density of base-stations gives greater spectral efficiency, it
also results in increased costs of operating and maintaining the
deployed base-stations. Rising energy cost is a cause for concern, not
only from an environmental perspective, but also from an economic
perspective for network operators as it constitutes a significant
portion of the operational expenditure. To address this challenge,
latest research aims to design energy efficient networks that balance
the trade-off between spectral efficiency, energy efficiency and user
QoS requirements \cite{wu2015recent, oh2011toward}.

Studies reveal that base-stations contribute to more than half of the
energy consumption in cellular networks \cite{marsan2009optimal,
  wu2015energy}. Although dense deployment of base-stations are useful
in meeting demand in peak traffic hours, they regularly have excess
capacity during off-peak hours \cite{oh2011toward,jie2012dynamic}. A fruitful way to
conserve power is, therefore, to dynamically switch off under-utilized
base-stations. Even in networks that do not have fluctuations in traffic load, switching base-stations dynamically is a useful way to reduce power consumption while meeting the network traffic demand. For this purpose, modern cellular standards incorporate
protocols that include \emph{sleep} and \emph{active} modes for
base-stations. The sleep mode allows for selectively switching
under-utilized base-stations to low energy consumption modes. This
includes completely switching off base-stations or switching off only
certain components.

Consider a time-slotted multi base-station (BS) cellular network where
subsets of BSs can be dynamically activated. Since turning off BSs
could adversely impact the performance perceived by users, it is
important to consider the underlying energy vs. performance trade-off
in designing BS activation policies. In this paper, we study the joint
problem of dynamically selecting the BS activation sets and user
rate allocation depending on the network load. We take into account
two types of overheads involved in implementing different activation
modes in the BSs.

\noindent {\bf (i) Activation cost} occurs due to maintaining a BS in
the active state. This includes energy spent on main power supply, air
conditioning, transceivers and signal processing
\cite{jie2012dynamic}. Surveys show that a dominant part of the energy
consumption of an active base-station is due to static factors that do
not have dependencies with traffic load intensities \cite{oh2011toward,
  arnold2010power}. Therefore, an active BS consumes almost the same
energy irrespective of the amount of traffic it serves. Typically, the
operation cost (including energy consumption) in the sleep state is
much lower than that in the active state since it requires only
minimal maintenance signaling \cite{wu2015energy}.

  \noindent {\bf (ii) Switching cost} is the penalty due to switching
  a BS from active state to sleep state or vice-versa. This factors in
  the signaling overhead (control signaling to users,
  signaling over the backhaul to other BSs and/or the BS controller),
  state-migration processing, and switching energy consumption
  associated with dynamically changing the BS modes
  \cite{jie2012dynamic}.

  Further, switching between these states typically cannot occur
  instantaneously. Due to the hysteresis time involved in migrating
  between the active and sleep states, BS switching can be done only
  at a slower time-scale than that of channel scheduling
  \cite{abbasi2013distributed, zheng2015optimal}.


\subsection*{Main Contributions}
\label{sec:contrib}

We formulate the problem in a (stochastic) network cost minimization
framework. The task is to select the set of active BSs in every
time-slot, and then based on the instantaneous channel state for the
activated BSs, choose a feasible allocation of rates to users. Our
aim is to minimize the total network cost (sum of activation and
switching costs) subject to stability of the user queues at the BSs.

While BS switching can be used to reduce energy costs both when the traffic load is dynamic and static, we consider the static case in this paper. Specifically, we assume that the incoming traffic for each user to a BS is independent and identically distributed (i.i.d.) with fixed rates. In this stationary setting, the task is to find the right way to activate and de-activate BSs so as to serve the incoming load while minimizing the energy cost. This is challenging especially because the energy cost includes the cost of switching the BSs from one state to the other. In practice, one could model the non-stationary setting as one with {\em regime changes}. One could then separately apply the main findings of our i.i.d. traffic load study to each regime. Our simulation studies described later in the paper suggest the modifications needed for application of our findings to settings with regime changes.

{\bf Insufficiency of the standard Lyapunov technique:} Such
stochastic network resource allocation problems typically adopt greedy
primal dual algorithms along with virtual-queues to accommodate
resource constraints \cite{GeoNeeTas_06,LinShrSri_06,sryi14}. To
ensure stability, this technique crucially relies on achieving
negative Lyapunov drift in some fixed number of time-slots. In our problem, unlike in the traditional setting, such an approach cannot be applied because the rates available for allocation in a time-slot is correlated with the network state in the previous time-slot.
See Section \ref{optFramework-Discussion-VirtualQueues}.1 for more details.

To circumvent difficulties introduced through this co-evolution, we
propose an approach that uses queue-lengths for channel scheduling at
a fast time-scale, but explicitly uses arrival and channel statistics
(using learning via an explore-exploit learning policy) for activation
set scheduling at a slower time-scale. Our main contributions are as
follows.

%
%
\begin{enumerate}

\item {\bf Static-split Activation + Max-Weight Channel
    Scheduling:}  We propose a solution that explicitly controls the time-scale separation between BS activation and rate allocation decisions. At BS switching instants (which occurs at a slow time-scale), the strategy uses
  a static-split rule (time-sharing) which is pre-computed using the explicit knowledge of the arrival and channel statistics for selecting the activation state. This activation
  algorithm is combined with a  queue-length based
  Max-Weight algorithm for rate allocation (applied at the fast
  time-scale of channel variability). We show that the joint dynamics of these two algorithms lead to stability; further, the choice of parameters for the algorithm enables us to achieve an average network cost that is arbitrarily close to the optimal cost.

\item {\bf Learning algorithm with provable guarantees:} In the
  setting where the arrival and channel statistics are not known, we
  propose an \textit{explore-exploit} policy that estimates arrival
  and channel statistics in the explore phase, and uses the estimated
  statistics for activation decisions in the exploit phase (this phase includes BS switching at a slow
  time-scale). This is combined with a
  Max-Weight based rate allocation rule restricted to the activated BSs (at a fast time-scale).  We prove that this joint
  learning-cum-scheduling algorithm can ensure queue stability while
  achieving close to optimal network cost.

\item {\bf Convergence bounds for time-inhomogeneous Markov chains:}
  In the course of proving the theoretical guarantees for our
  algorithm, we derive useful technical results on convergence of
  time-inhomogeneous Markov chains. More specifically, we derive
  explicit convergence bounds for the marginal distribution of a
  finite-state time-inhomogeneous Markov chain whose transition
  probability matrices at each time-step are arbitrary (but small)
  perturbations of a given stochastic matrix. We believe that these
  bounds are useful not only in this specific problem, but are of
  independent interest.
\end{enumerate}

To summarize then, our approach can be viewed as an algorithmically
engineered separation of time-scales for only the activation set
dynamics, while adapting to the channel variability for the queue
dynamics. Such an engineering of time-scales leads to coupled
fast-slow dynamics, the `fast' due to opportunistic channel allocation
and packet queue evolution with Max-Weight, and the `slow' due to
infrequent base-station switching using learned statistics. Through a
novel Lyapunov technique for convergent time-inhomogeneous Markov
chains, we show that we can achieve queue stability while operating at
a near-optimal network cost.

\subsection*{Related Work}
While mobile networks have been traditionally designed with the
objective of optimizing spectral efficiency, design of energy
efficient networks has been of recent interest. A survey of various
techniques proposed to reduce operational costs and carbon footprint
can be found in \cite{oh2011toward, hasan2011green, wu2015recent,
  wu2015energy}. The survey in \cite{wu2015energy} specially focuses
on sleep mode techniques in BSs.

Various techniques have been proposed to exploit BS sleep mode to
reduce energy consumption in different settings. Most of them aim to
minimize energy consumption while guaranteeing minimum user QoS
requirements. For example, \cite{han2012energy, jie2012dynamic,
  gong2014base} consider inelastic traffic and consider outage
probability or blocking probability as metrics for measuring QoS.  In
\cite{abbasi2013distributed}, the problem is formulated as a utility
optimization problem with the constraint that the minimum rate demand
should be satisfied. But they do not explicitly evaluate the
performance of their algorithm with respect to user QoS.  The authors
in \cite{kamitsos2010optimal, guo2016delay} model a single BS scenario
with elastic traffic as an M/G/1 vacation queue and characterize the
impact of sleeping on mean user delay and energy consumption. In
\cite{zheng2015optimal}, the authors consider the multi BS setting
with Poisson arrivals and delay constraint at each BS.

Most papers that study BS switching use models that ignore switching
cost. Nonetheless, a few papers acknowledge the importance of avoiding
frequent switching. For example, Oh et al. \cite{oh2013dynamic}
implement a hysteresis time for switching in their algorithm although
they do not consider it in their theoretical analysis.  Gou et
al. \cite{guo2016delay} also study hysteresis sleeping schemes which
enforce a minimum sleeping time. In \cite{abbasi2013distributed} and
\cite{zheng2015optimal}, it is ensured that interval between switching
times are large enough to avoid overhead due to transient network
states. Finally Jie et al. \cite{jie2012dynamic} consider BS sleeping
strategies which explicitly incorporate switching cost in the model
(but they do not consider packet queue dynamics). They emphasize that
frequent switching should be avoided considering its effect on
signaling overhead, device lifetime and switching energy consumption,
and also note that incorporating switching cost introduces time
correlation in the system dynamics.

Finally, this paper builds on the rich MaxWeight literature for
opportunistic scheduling
\cite{taseph92,AndKumRamStoVijWhi_00,neemodroh05}. The literature has
considered many aspects of utility maximization and tail performance
\cite{stolyar2005maximizing,LinShrSri_06,venlin10}, partial channel
information \cite{NeRaLaP12,gocash12}, and heterogeneous and
inconsistent network information \cite{yish11}; we refer to
\cite{sryi14} for a comprehensive survey. Most related among these are
the studies with partial information and two-stage decision making
\cite{NeRaLaP12,gocash12,manikandan2009cross}, with MaxWeight averaged through an
appropriate conditional expectation for first-stage decision making,
and the usual MaxWeight rule for the second stage, and with the proofs
of stability shown using a Lyapunov argument. Our work differs in
that the switching stemming from base-station activation does not
directly permit a standard Lyapunov argument to hold (see
Section~\ref{optFramework-Discussion-VirtualQueues}.1 for additional discussion); thus we use explicit
learning in the first stage, followed by the usual MaxWeight for the
second stage. Our proof technique also substantially differs, as our
first stage arguments are based on an analysis of time-inhomogeneous
Markov Chains. 



\paragraph*{Notation} Important notation for the problem setting can
be found in Table~\ref{tab:notation}.
For any two vectors
$\mathbf{v}_1$, $\mathbf{v}_2$ and scalar $a$,
$\mathbf{v}_1 \cdot \mathbf{v}_2$ denotes the dot product between the
two vectors and $\mathbf{v}_1 + a = \mathbf{v}_1 + a\mathbf{1}$.

 \section{System Model}
We consider a time-slotted cellular network with $n$ users and $M$ base-stations (BS) indexed by $u = 1, \dots, n$ and $m = 1, \dots, M$ respectively. Users can possibly be connected to multiple BSs. It is assumed that the user-BS association does not vary with time. 

\subsection{Arrival and Channel Model}
\label{subsec:arrival-channel}
Data packets destined for a user $u$ arrive at a connected BS $m$ as a bounded (at most $\bar{\mathsf{A}}$ packets in any time-slot), i.i.d.\ process $\left\lbrace A_{m,u}(t) \right\rbrace_{t \geq 1}$  with rate $\EE\left[ A_{m,u}(t) \right] = \lambda_{m,u}$. Arrivals get queued if they are not immediately transmitted. Let $Q_{m,u}(t)$ represent the queue-length of user $u$ at BS $m$ at the beginning of time-slot $t$.

The channel between the BSs and their associated users is also time-varying and i.i.d across time (but can be correlated across links), which we represent by the network channel-state process $\left\lbrace H(t) \right\rbrace_{t > 0}$. At any time $t$, $H(t)$ can take values from a finite set $\mathcal{H}$ with probability mass function given by $\bm{\mu}$. Let $\bar{\mathsf{R}}$ be the maximum number of packets that can be transmitted over any link in a single time-slot.  
We consider an abstract model for interference by working with the set $\mathcal{R}(\mathbf{1}, h) \subset \{0, 1, \dots, \bar{R}\}^{M \times n}$ defined as the set of all possible rate vectors (the number of packets that can be transmitted in a time-slot) achievable by non-randomized scheduling rules in a single time-slot, given that the channel state in that time-slot is $h$. Since the number of packets that can be transmitted per link is upper bounded by  $\bar{\mathsf{R}}$, $\mathcal{R}(\mathbf{1}, h)$ has finite cardinality. For concrete examples of interference models, we refer the reader to \cite[Ch.~2]{GeoNeeTas_06}.

\subsection{Resource Allocation}
\label{subsec:allocation}
At any time-slot $t$, the scheduler has to make two types of allocation decisions:

\noindent {\bf BS Activation:}  Each BS can be scheduled to be in one of the two states, \textit{ON} (\textit{active} mode) and \textit{OFF} (\textit{sleep} mode). Packet transmissions can be scheduled only from BSs in the \textit{ON} state. The cost of switching a BS from \textit{ON} in the previous time-slot to \textit{OFF} in the current time-slot is given by $\mathsf{C}_0$ and the cost of maintaining a BS in the \textit{ON} state in the current time-slot is given by $\mathsf{C}_1$.
The activation state at time $t$ is denoted by $\mathbf{J}(t) = \left( J_m(t) \right)_{m \in [M]}$, where $J_m(t) := \mathds{1}\{ \text{BS } m \text{ is \textit{ON} at time } t\}$. We also denote the set of all possible activation states, $\{0, 1\}^M$,  by $\mathcal{J}$. The total cost of operation, which we refer to as the \textit{network cost}, at time $t$ is the sum of switching and activation cost and is given by
\begin{align}
\label{eq:cost-def}
C(t) := \mathsf{C}_0 \norm{\left( \mathbf{J}(t-1) - \mathbf{J}(t) \right)^+}_1 + \mathsf{C}_1 \norm{\mathbf{J}(t)}_1.
\end{align}
It is assumed that the current network channel-state $H(t)$ is
unavailable to the scheduler at the time of making activation
decisions.

\noindent {\bf Rate Allocation:} The network channel-state is observed after the BSs are switched \textit{ON} and before the packets are scheduled for transmission. Moreover, only the part of the channel state restricted to the activated BSs, which we denote by $H(t)|_{\mathbf{J}(t)}$, can be observed. For any $j \in \mathcal{J}, h \in \mathcal{H}$, let $\mathcal{R}(j,h) \subset \{0, 1, \dots, \bar{R}\}^{M \times n}$ denote the set of all possible service rate vectors that can be allocated when the activation set is $j$ and the channel state is $h$. A more precise definition of $\mathcal{R}(j,h)$ is as follows. For any $j \in \mathcal{J}$, $\mathbf{r} \in \RR^{M \times n}$, let the product $\mathbf{r} \circ j$ be an $\RR^{M \times n}$ matrix defined as
\begin{align}
\label{eq:rate-mx-reduction}
(\mathbf{r} \circ j)_{m,u} = \begin{cases} r_{m,u} & \text{if } j_m = 1, \\ 0 & \text{otherwise.} \end{cases}
\end{align}
Also for any set $\mathcal{R} \subset \RR^{M \times n}$, define  $\mathcal{R} \circ j := \left\lbrace \mathbf{r} \circ j : \mathbf{r} \in \mathcal{R}  \right\rbrace$.
We assume that
\begin{enumerate*}[label=(\roman*)]
\item a BS that is merely switched \textit{ON} but not transmitting packets does not cause any interference in the network, and
\item $\mathcal{R}(\mathbf{1}, h) \circ j \subseteq \mathcal{R}(\mathbf{1}, h)$ for any $j \in \mathcal{J}$.
\end{enumerate*}
 Based on these assumptions, we define $\mathcal{R}(j,h) := \mathcal{R}(\mathbf{1}, h) \circ j$. This means that $\mathcal{R}(j',h) \subseteq \mathcal{R}(j,h)$ for any $j',j \in \mathcal{J}$ such that $j' \leq j$, and $\mathcal{R}(\mathbf{1}, h)$ contains all possible rate vectors when the channel state is $h$ for any BS activation set. Given the channel observation $H(t)|_{\mathbf{J}(t)}$, the scheduler allocates a rate vector $\mathbf{S}(t) = \left( S_{m,u}(t) \right)_{m \in [M], u \in [n]}$ from the set $\mathcal{R}(\mathbf{J}(t), H(t))$ for packet transmission. This allows for draining of $S_{m,u}(t)$ packets from user $u$'s queue at BS $m$ for all $u \in [n]$ and $m \in [M]$.

Thus the resource allocation decision in any time-slot $t$ is given by the tuple $(\mathbf{J}(t), \mathbf{S}(t))$.
The sequence of operations in any time-slot can, thus, be summarized
as follows: (i) Arrivals, (ii) BS Activation-Deactivation, (iii)
Channel Observation, (iv) Rate Allocation, and (v) Packet Transmissions.


\begin{table}
\renewcommand{\arraystretch}{}
\caption{General Notation}
\label{tab:notation}
\centering
\begin{tabular}{||c | c||}
\hline
\bfseries Symbol & \bfseries Description	\\
\hline\hline
$n$ & Number of users	\\ \hline
$M$ & Number of BSs \\ \hline
$[l]$ & The set $\{1,2,\ldots, l\}$ for an integer $l$. \\ \hline
$A_{m,u}(t)$ & Arrival for user $u$ at BS $m$ at time $t$ \\ \hline
& Maximum number of arrivals \\
$\bar{\mathsf{A}}$ & to any queue in a time-slot \\ \hline
$\bm{\lambda}$ & Average arrival rate vector \\ \hline
$H(t)$ & Channel state at time $t$ \\ \hline
$\mathcal{H}$ & Set of all possible channel states \\ \hline
$\bm{\mu}$ & Probability mass function of channel state \\ \hline
&  Maximum service rate \\
$\bar{\mathsf{R}}$ & to any queue in a time-slot \\ \hline
$h|_{j}$ & Channel state $h$ restricted to the activated BSs in $j$ \\ \hline
$\mathcal{R}(j,h) \subseteq \mathbb{R}^{M \times n}$ & Set of all possible rate vectors for \\ & activation vector $j$ and  channel state $h$ \\ \hline
$\mathbf{J}(t) = \left( J_m(t) \right)$ & Activation vector at time $t$ \\ \hline
$\mathcal{J}$ & Set of all possible activation states \\ \hline
$\mathbf{S}(t) = \left( S_{m,u}(t) \right)$ & Rate allocation at time $t$ \\ \hline
$\mathsf{C}_1$ & Cost of operating a BS in $ON$ state   \\ \hline
$\mathsf{C}_0$ & Cost of switching a BS from $ON$ to $OFF$ state \\ \hline
$C(t)$ & Network cost at time $t$ \\ \hline
$Q_{m,u}(t)$ &  Queue of user $u$ at BS $m$ \\ & at the beginning of time-slot $t$	\\ \hline
$\mathcal{P}_l$ & Set of all probability (row) vectors in $\RR^l$	\\ \hline
$\mathcal{P}^2_{l}$ & Set of all stochastic matrices in $\RR^{l \times l}$	\\ \hline
$\mathcal{W}_{l}$ & Set of all stochastic matrices in \\ & $\RR^{l \times l}$ with a single ergodic class \\ \hline
$\mathbf{1}_l$ & All $1$'s Column vector of size $l$  \\ \hline
$\mathbf{I}_l$ & Identity matrix of size $l$ \\ \hline
\hline
\end{tabular}
\end{table}

\subsection{Model Extensions}
Some of the assumptions in the model above are made for ease of
exposition and can be extended in the following ways:

\noindent {\bf (i) Network Cost}: We assume that the cost of operating a BS in
  the \textit{OFF} state (sleep mode) is zero. However, it is easy to
  include an additional parameter, say $\mathsf{C}_1'$, which denotes
  the cost of a BS in the \textit{OFF} state. Similarly, for switching
  cost, although we consider only the cost of switching a BS from
  \textit{ON} to \textit{OFF} state, we can also include the cost of
  switching from \textit{OFF} to \textit{ON} state (say
  $\mathsf{C}_0'$). The analysis in this chapter can then be extended by
  defining the network cost as
\begin{align*}
C(t) & = \mathsf{C}_0 \norm{\left( \mathbf{J}(t-1) - \mathbf{J}(t) \right)^+}_1 + \mathsf{C}_1 \norm{\mathbf{J}(t)}_1	\\
& \quad + \mathsf{C}_0' \norm{\left( \mathbf{J}(t) - \mathbf{J}(t-1) \right)^+}_1 + \mathsf{C}_1' \left( M - \norm{\mathbf{J}(t)}_1 \right)
\end{align*}
instead of \eqref{eq:cost-def}.


\noindent {\bf (ii) Switching Hysteresis Time}: While our system allows
  switching decisions in every time-slot, we will see that the key to
  our approach is a slowing of activation set switching
  dynamics. Specifically, on average our algorithm switches activation
  states once every $1/\epsilon_s$ timeslots, where $\epsilon_s$ is a
  tunable parameter. Additionally, it is easy to incorporate ``hard
  constraints'' on the hysteresis time by restricting the frequency of
  switching decisions to, say once in every $L$ time-slots (for some
  constant $L$). This avoids the problem of switching too frequently
  and gives a method to implement time-scale separation between the
  channel allocation decisions and BS activation decisions. While our
  current algorithm has inter-switching times i.i.d.\ geometric with
  mean  $1/\epsilon_s$, it is easy to allow other distributions that have
  bounded means with some independence conditions (independent of each
  other and also the arrivals and the channel). We skip details in the
  proofs for notational clarity.


 \section{Optimization Framework}
For any $t \in \NN$, let $\mathcal{F}_t = \left( \mathbf{A}(l), \mathbf{J}(l), H(l)|_{\mathbf{J}(l)}, \mathbf{S}(l) \right)_{l = 1}^{t-1}$. A policy is given by a (possibly random) sequence of resource allocation decisions $\left( \mathbf{J}(t), \mathbf{S}(t) \right)_{t >  0}$ where, at any time $t$, the decision may depend on the information from random variables observed in the past but not the future, i.e., BS activation may depend on $\mathcal{F}_t$ and rate allocation on $\left( \mathcal{F}_t, \mathbf{J}(t), H(t)|_{\mathbf{J}(t)} \right)$. Let $\left( \mathbf{J}(t-1), \mathbf{Q}(t) \right)$ be the
network state at time $t$. The rationale behind this choice of network state is to construct policies that provide control over switching costs.
\paragraph*{Notation} We use $\PP_{\varphi}\left[ \cdot \right]$ and
$\EE_{\varphi}\left[ \cdot \right]$ to denote probabilities and
expectation under policy $\varphi$. We skip the subscript when the
policy is clear from the context.

\subsection{Stability, Network Cost, and the Optimization Problem}
\begin{definition}[Stability]
\label{def:stability}
A network is said to be \emph{stable} under a policy $\varphi$ if there exist constants $\bar{Q}$, $\rho > 0$ such that for any initial condition $\left( \mathbf{J}(0), \mathbf{Q}(1) \right)$,
\begin{align}
\label{eqn:defn1}
\liminf_{T \to \infty} \frac{1}{T} \sum_{t=1}^{T}\PP_{\varphi}\left[ \sum_{m \in [m], u \in [n]} \hspace*{-.1in} Q_{m,u}(t) \leq \bar{Q} \given[\Bigg]  \mathbf{J}(0), \mathbf{Q}(1) \right] > \rho.
\end{align}
\end{definition}

\begin{remark}
The above definition of stability is applicable for a general network state process that is not necessarily Markov. It is motivated by the fact that for an aperiodic and irreducible DTMC, Definition~\ref{def:stability} implies positive recurrence. Indeed, for such a DTMC, we can conclude from (\ref{eqn:defn1}) that
\begin{align}
\label{eqn:defn1-outcome}
\limsup_{T \to \infty} \PP_{\varphi}\left[ \sum_{m \in [m], u \in [n]} \hspace*{-.1in} Q_{m,u}(t) \leq \bar{Q} \given[\Bigg]  \mathbf{J}(0), \mathbf{Q}(1) \right] > \rho
\end{align}
holds and hence the DTMC is recurrent; further (\ref{eqn:defn1-outcome}) violates the necessary condition for null recurrence (\cite[Th.~21.17]{levin2009markov}): $$\lim_{t \to \infty} \PP_{\varphi}\left[ {\bf Q}(t) = {\bf q} \given[\Bigg]  \mathbf{J}(0), \mathbf{Q}(1) \right] = 0, ~\forall {\bf q},$$
and hence the DTMC is positive recurrent.
\end{remark}

Consider the set of all ergodic Markov policies $\mathfrak{M}$, including those that know the arrival and channel statistics. A policy $\varphi \in \mathfrak{M}$ if and only if it makes (possibly randomized) allocation decisions at time $t$ based only on the current state $\left( \mathbf{J}(t-1), \mathbf{Q}(t) \right)$ (and possibly the arrival and channel statistical parameters), and the resulting network state process is an ergodic Markov chain.
%
Later, in Section \ref{subsec:discussion-optimality}, we discuss why it is sufficient to restrict attention to this class of policies. We now define the support region of a policy and the capacity region.
\begin{definition}[Support Region of a Policy $\varphi$]
\label{def:support}
The support region $\Lambda^{\varphi}(\bm{\mu})$ of a policy $\varphi$ is the set of all arrival rate vectors for which the network is stable under the policy $\varphi$.
\end{definition}
\begin{definition}[Capacity Region]
\label{def:capacity}
The capacity region $\Lambda(\bm{\mu})$ is the set of all arrival rate vectors for which the network is stable under some policy in $\mathfrak{M}$, i.e.,
$\Lambda(\bm{\mu}) := \bigcup_{\varphi \in \mathfrak{M}} \Lambda^{\varphi}(\bm{\mu}).$
\end{definition}
\begin{definition}[Network Cost of a Policy $\varphi$]
\label{def:energy-cost}
The network cost $C^{\varphi}(\bm{\mu}, \bm{\lambda})$ under a policy $\varphi$ is the long term average network cost (BS switching and activation costs) per time-slot, i.e.,
\begin{align*}
C^{\varphi}(\bm{\mu}, \bm{\lambda}) := \limsup_{T \to \infty} \frac{1}{T} \sum_{t=1}^T \EE_{\varphi}\left[ C(t) \given[\big]  \mathbf{J}(0), \mathbf{Q}(1) \right].
\end{align*}
\end{definition}

We formulate the resource allocation problem in a network cost minimization framework. Consider the problem of network cost minimization under Markov policies $\mathfrak{M}$ subject to stability. The optimal network cost is given by
\begin{align}
C^{\mathfrak{M}}(\bm{\mu}, \bm{\lambda}) := \inf_{\{\varphi \in \mathfrak{M}: \lambda \in \Lambda^{\varphi}(\bm{\mu})\}} C^{\varphi}(\bm{\mu}, \bm{\lambda}). \label{opt-problem}
\end{align}

\subsection{Markov-Static-Split Rules}
\label{subsec:ms-rules}
The capacity region $\Lambda (\bm{\mu})$ will naturally be characterized by only those Markov policies that maintain all the BSs active in all the time-slots, i.e., $\mathbf{J}(t) = \mathbf{1} \, \forall t$. In the traditional scheduling problem without BS switching, it is well-known that the capacity region can be characterized by the class of \emph{static-split} policies \cite{AndKumRamStoVijWhi_00} that allocate rates in a random i.i.d.\ fashion given the current channel state. An arrival rate vector $\bm{\lambda} \in \Lambda (\bm{\mu})$ iff there exists convex combinations $\left\lbrace \bm{\alpha}(\mathbf{1}, h) \in \mathcal{P}_{\card{\mathcal{R}(\mathbf{1}, h)}} \right\rbrace_{h \in \mathcal{H}}$ such that
\begin{align*}
\bm{\lambda} < \sum_{h \in \mathcal{H}} \mu(h) \sum_{\mathbf{r} \in \mathcal{R}(\mathbf{1},h)} \alpha_{\mathbf{r}}(\mathbf{1}, h) \mathbf{r}.
\end{align*}
But note that static-split rules in the above class, in which BSs are not switched \textit{OFF}, do not optimize the network cost.

We now describe a class of activation policies called the \emph{Markov-static-split + static-split} rules which are useful in handling the network cost. A policy is a Markov-static-split + static-split rule if it uses a time-homogeneous Markov rule for BS activation in every time-slot, and an i.i.d.\ static-split rule for rate allocations. For any $l \in \NN$, let $\mathcal{W}_l$ denote the set of all stochastic matrices of size $l$ with a single ergodic class. A Markov-static-split + static-split rule is characterized by
\begin{enumerate}
\item a stochastic matrix $\mathbf{P} \in \mathcal{W}_{\card{\mathcal{J}}}$ with a single ergodic class,
\item convex combinations $\left\lbrace \bm{\alpha}(j,h) \in \mathcal{P}_{\card{\mathcal{R}(j,h)}} \right\rbrace_{j \in \mathcal{J},h \in \mathcal{H}}$.
\end{enumerate}
Here $\mathbf{P}$ represents the transition probability matrix that specifies the jump probabilities from one activation state to another in successive time-slots. 
 $\left\lbrace \bm{\alpha}(j,h) \right\rbrace_{j \in \mathcal{J},h \in \mathcal{H}}$ specify the static-split rate allocation policy given the activation state and the network channel-state.

Let $\mathfrak{MS}$ denote the class of all Markov-static-split + static-split rules. For a rule $\left( \mathbf{P}, \bm{\alpha} = \left\lbrace \bm{\alpha}(j,h) \right\rbrace_{j \in \mathcal{J},h \in \mathcal{H}} \right) \in \mathfrak{MS}$, let $\bm{\sigma}$ denote the invariant probability distribution corresponding to the stochastic matrix $\mathbf{P}$. Then the expected switching and activation costs are given by $\mathsf{C}_0 \sum_{j', j \in \mathcal{J}} \sigma_{j'} P_{j',j} \norm{\left( j' - j \right)^+}_1$ and $\mathsf{C}_1 \sum_{j \in \mathcal{J}} \sigma_{j} \norm{j}_1$ respectively. We prove in the following theorem that the class $\mathfrak{MS}$ can achieve the same performance as $\mathfrak{M}$, the class of all ergodic Markov policies.
\begin{theorem}
\label{thm:markov-static-split}
For any $\bm{\lambda}$, $\bm{\mu}$ and $\varphi \in \mathfrak{M}$ such that $\bm{\lambda} \in \Lambda^{\varphi}(\bm{\mu})$, there exists a $\varphi' \in \mathfrak{MS}$ such that $\bm{\lambda} \in \Lambda^{\varphi'}(\bm{\mu})$ and $C^{\varphi'}(\bm{\mu}, \bm{\lambda}) = C^{\varphi}(\bm{\mu}, \bm{\lambda})$. Therefore,
\begin{align*}
C^{\mathfrak{M}}(\bm{\mu}, \bm{\lambda}) = \inf_{\varphi' \in \mathfrak{MS}, \lambda \in \Lambda^{\varphi'}(\bm{\mu})} C^{\varphi'}(\bm{\mu}, \bm{\lambda}).
\end{align*}
\end{theorem}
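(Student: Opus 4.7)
The plan is to construct $\varphi' = (\mathbf{P}, \bm{\alpha}) \in \mathfrak{MS}$ from the stationary behaviour of $\varphi$, exploiting the key observation that the per-slot cost $C(t)$ depends only on the activation pair $(\mathbf{J}(t-1), \mathbf{J}(t))$ and not on the rate allocation; this decouples cost matching from stability, so that $\mathbf{P}$ and $\bm{\alpha}$ can be chosen separately. Let $\pi$ denote the unique invariant distribution of the ergodic Markov chain $\bigl(\mathbf{J}(t-1), \mathbf{Q}(t)\bigr)$ induced by $\varphi$, and let $\sigma$ be its $\mathcal{J}$-marginal. I would set
\[
  P_{j', j} \;:=\; \PP_\pi\bigl[\mathbf{J}(t) = j \mid \mathbf{J}(t-1) = j'\bigr]
\]
wherever the conditioning event has positive probability, completing the remaining rows by routing them into the support of $\sigma$. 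Then $\sigma\mathbf{P} = \sigma$, and because $\varphi$ induces a single recurrent class on $\mathcal{J}$, $\mathbf{P}$ has a single ergodic class, i.e., $\mathbf{P} \in \mathcal{W}_{\card{\mathcal{J}}}$.

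\textbf{Cost matching.} At stationarity under $\varphi'$, the pair $(\mathbf{J}(t-1), \mathbf{J}(t))$ has joint law $\sigma_{j'}\,P_{j', j}$, identical to the corresponding $\pi$-marginal under $\varphi$. Plugging this into the closed-form expressions for the expected switching and activation costs given just before the theorem statement yields $C^{\varphi'} = C^\varphi$, regardless of the choice of $\bm{\alpha}$.

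\textbf{Stability.} Let
\[
  \mathcal{S}(\sigma) \;:=\; \sum_{j} \sigma_j \sum_{h} \mu(h)\, \mathrm{conv}\bigl(\mathcal{R}(j,h)\bigr)
\]
denote the mean-service-rate region achievable by varying $\bm{\alpha}$ under activation stationary distribution $\sigma$. If I can exhibit $\bar{\mathbf{s}} \in \mathcal{S}(\sigma)$ with $\bar{\mathbf{s}} > \bm{\lambda}$ component-wise and pick $\bm{\alpha}$ realizing it, then the queue process under $\varphi'$ has Markov-modulated i.i.d.\ service with strictly larger mean than the i.i.d.\ arrival process, and a standard quadratic Foster--Lyapunov argument gives positive recurrence---strictly stronger than the stability notion of Definition~\ref{def:stability}. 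Such a strictly dominating $\bar{\mathbf{s}}$ is produced by the following strict-dominance lemma, which I would prove by contradiction: if no such $\bar{\mathbf{s}}$ existed, convex separation of $\mathcal{S}(\sigma) - \RR_+^{M \times n}$ from $\bm{\lambda}$ would yield a non-zero $\mathbf{c} \in \RR_+^{M \times n}$ with $\mathbf{c}\cdot\bm{\lambda} \geq \max_{\bar{\mathbf{s}} \in \mathcal{S}(\sigma)} \mathbf{c}\cdot\bar{\mathbf{s}}$. Combining flow balance at stationarity under $\varphi$ with $\mathbf{D}(t) \leq \mathbf{S}(t)$ and $\EE_\pi[\mathbf{c}\cdot\mathbf{S}(t)] \leq \mathbf{c}\cdot\bm{\lambda}$ then forces $\mathbf{c}\cdot\mathbf{Q}(t)$ to evolve as a bounded-increment process with zero mean drift and strictly positive conditional variance inherited from the non-degenerate i.i.d.\ arrivals, precluding the stationary tightness implicit in positive recurrence and contradicting $\bm{\lambda} \in \Lambda^{\varphi}(\bm{\mu})$.

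\textbf{Main obstacle.} The delicate step is precisely this strict-dominance lemma. The tempting shortcut of also setting $\alpha_{\mathbf{r}}(j, h) := \PP_\pi[\mathbf{S}(t) = \mathbf{r} \mid \mathbf{J}(t) = j, H(t) = h]$ merely reproduces the stationary mean service vector $\bar{\mathbf{s}}^\varphi$ of $\varphi$, which can equal $\bm{\lambda}$ coordinate-wise even when $\varphi$ is positive recurrent, because the adaptive $\varphi$ can use queue-length feedback to avoid wasting allocated service while a memoryless static-split cannot. Decoupling $\mathbf{P}$ (pinned down by the cost constraint) from $\bm{\alpha}$ (free to be chosen so that $\bar{\mathbf{s}}(\bm{\alpha}) > \bm{\lambda}$), and using a convex-analytic consequence of ergodic stability rather than mere stationary flow balance, is what lets a single $\varphi' \in \mathfrak{MS}$ satisfy both the cost identity and stability of $\bm{\lambda}$.
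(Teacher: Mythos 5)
Your construction of $\mathbf{P}$ from the stationary conditionals of $\varphi$ and the cost-matching step coincide with the paper's proof, and your observation that the conditional choice $\alpha_{\mathbf{r}}(j,h)=\PP_\pi[\mathbf{S}(t)=\mathbf{r}\mid\mathbf{J}(t)=j,H(t)=h]$ only reproduces $\EE_\pi[\mathbf{S}(t)]$, which need not strictly dominate $\bm{\lambda}$, is a fair criticism of the step the paper asserts without elaboration (``$\bm{\lambda}<\EE_\pi[\mathbf{S}(t)]$''). However, your replacement --- the strict-dominance lemma --- is precisely where your argument has a genuine gap, and the sketch does not close it. First, the lemma as stated is false under the paper's assumptions: arrivals are only assumed bounded i.i.d., so they may be degenerate. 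With deterministic arrivals of one packet per slot, a single BS, a deterministic unit-rate channel, and the always-on, always-serve policy, $\varphi$ is stable in the sense of Definition~\ref{def:stability}, yet your set $\mathcal{S}(\bm{\sigma})$ contains no point strictly dominating $\bm{\lambda}$; your contradiction argument explicitly invokes ``strictly positive conditional variance inherited from the non-degenerate i.i.d.\ arrivals,'' a hypothesis the model does not grant. In such boundary cases the theorem still holds, but only via a non-strict choice of $\bm{\alpha}$ and a direct stability argument, so your route cannot cover them.

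Second, even in the non-degenerate case the contradiction is not established by what you wrote. From the separating vector $\mathbf{c}\ge 0$ you obtain only the pathwise lower bound $\mathbf{c}\cdot\mathbf{Q}(t)\ \ge\ \mathbf{c}\cdot\mathbf{Q}(1)+\sum_{s<t}\bigl(\mathbf{c}\cdot\mathbf{A}(s)-\max_{\mathbf{r}\in\mathcal{R}(\mathbf{J}(s),H(s))}\mathbf{c}\cdot\mathbf{r}\bigr)$, and ``bounded increments, zero mean drift, positive variance'' of that partial-sum process does not preclude stability per Definition~\ref{def:stability}: a driftless walk spends, in the CLT limit, a constant fraction of time below any fixed level, so the time-averaged probabilities in \eqref{eqn:defn1} need not fall below $\rho$; moreover, any stationary bounded-increment process automatically has zero-mean increments, so ``zero drift'' by itself contradicts nothing. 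To make the lemma work you would need a genuine null-recurrence (or non-tightness) argument for the adaptive chain $\left(\mathbf{J}(t-1),\mathbf{Q}(t)\right)$ under $\varphi$ --- e.g., combine stationary flow balance ($\EE_\pi[\mathbf{c}\cdot\mathbf{D}(t)]=\mathbf{c}\cdot\bm{\lambda}$, hence no wasted $\mathbf{c}$-weighted service in stationarity) with a martingale or local-CLT argument showing $\mathbf{c}\cdot\mathbf{Q}(t)$ cannot satisfy \eqref{eqn:defn1}, in the spirit of the paper's remark that Definition~\ref{def:stability} rules out null recurrence --- none of which appears in your sketch. Your decoupling of $\mathbf{P}$ (pinned by the cost identity) from $\bm{\alpha}$ (chosen for stability) is a sensible repair strategy, but as written the crucial lemma is unproved and, in full generality, untrue.
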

\begin{proof}[Proof Outline]
The proof of this theorem is similar to the proof of characterization of the stability region using the class of static-split policies. It maps the time-averages of BS activation transitions and rate allocations of the policy $\varphi \in \mathfrak{M}$ to a Markov-static-split rule $\varphi' \in \mathfrak{MS}$ that mimics the same time-averages. (Detailed proof is in the Appendix.)
\end{proof}
From the characterization of the class $\mathfrak{MS}$, Theorem~\ref{thm:markov-static-split} shows that the optimal cost $C^{\mathfrak{M}}(\bm{\mu}, \bm{\lambda})$ is equal to the optimal value of the optimization problem $\mathit{V}(\bm{\mu}, \bm{\lambda})$, which is given by
\begin{align*}
\inf_{\mathbf{P}, \bm{\alpha}} \mathsf{C}_0 \sum_{j', j \in \mathcal{J}} \sigma_{j'} P_{j',j} \norm{\left( j' - j \right)^+}_1 + \mathsf{C}_1 \sum_{j \in \mathcal{J}} \sigma_{j} \norm{j}_1
\end{align*}
such that $\mathbf{P} \in \mathcal{W}_{\card{\mathcal{J}}}$ with unique invariant distribution $\bm{\sigma}  \in \mathcal{P}_{\card{\mathcal{J}}}$,
 and $\bm{\alpha}(j,h) \in \mathcal{P}_{\card{\mathcal{R}(j,h)}} \, \forall j \in \mathcal{J},h \in \mathcal{H}$ with
\begin{align}
\label{eq:stability}
& \bm{\lambda} < \sum_{j \in \mathcal{J}} \sigma_j \sum_{h \in \mathcal{H}} \mu(h) \sum_{\mathbf{r} \in \mathcal{R}(j,h)} \alpha_{\mathbf{r}}(j,h) \mathbf{r}.
\end{align}


\subsection{A Modified Optimization Problem}
\label{subsec:linear-prog}
Now, consider the linear program given by
$$
\min_{\bm{\sigma}, \bm{\beta}} \mathsf{C}_1 \sum_{j \in \mathcal{J}} \sigma_{j} \norm{j}_1, \ \ \ \rm {such \ that}
$$
\begin{align*}
\bm{\sigma} & \in \mathcal{P}_{\card{\mathcal{J}}} \\
\beta_{j,h,\mathbf{r}} & \geq 0 \quad \forall \mathbf{r} \in \mathcal{R}(j,h),  \forall j \in \mathcal{J}, h \in \mathcal{H},	\\
\sigma_j & = \sum_{\mathbf{r} \in \mathcal{R}(j,h)} \beta_{j,h,\mathbf{r}}  \quad \forall j \in \mathcal{J}, h \in \mathcal{H}, \numberthis \label{eq:sigmaj}  \\
\bm{\lambda} & \leq \sum_{\substack{j \in \mathcal{J}, h \in \mathcal{H}, \\ \mathbf{r} \in \mathcal{R}(j,h)}}  \beta_{j,h,\mathbf{r}} \mu(h) \mathbf{r}. \numberthis \label{eq:pseudo-stability}
\end{align*}
The constraint (\ref{eq:sigmaj}) forces the right-hand side to be a constant over $h \in \mathcal{H}$.

Let $d := \card{\mathcal{J}} + \sum_{j \in \mathcal{J}, h \in \mathcal{H}} \card{\mathcal{R}(j,h)}$ be the number of variables in the above linear program. We denote by $\mathit{L}_{\mathbf{c}}(\bm{\mu}, \bm{\lambda})$, a linear program with constraints as above and with $\mathbf{c} \in \RR^d$ as the vector of weights in the objective function. Thus, the feasible set of the linear program $\mathit{L}_{\mathbf{c}}(\bm{\mu}, \bm{\lambda})$ is specified by the parameters $\bm{\mu}, \bm{\lambda}$ and the objective function is specified by the vector $\mathbf{c}$. Let $C^*_{\mathbf{c}}(\bm{\mu}, \bm{\lambda})$ denote the optimal value of $\mathit{L}_{\mathbf{c}}(\bm{\mu}, \bm{\lambda})$ and $\mathcal{O}^*_{\mathbf{c}}(\bm{\mu}, \bm{\lambda})$ denote the optimal solution set. Also, let
\begin{align*}
\mathcal{S} := \left\lbrace (\bm{\mu}, \bm{\lambda}) : \bm{\lambda} \in \Lambda(\bm{\mu}) \right\rbrace,
\end{align*}
\begin{align*}
\mathcal{U}_{\mathbf{c}} := \left\lbrace (\bm{\mu}, \bm{\lambda}) \in \mathcal{S} : \mathit{L}_{\mathbf{c}} (\bm{\mu}, \bm{\lambda}) \text{ has a unique solution} \right\rbrace.
\end{align*}
We claim that $\mathit{L}_{\mathbf{c}^0}(\bm{\mu}, \bm{\lambda})$, with
\begin{align}
\label{eq:actual-cost}
\mathbf{c}^0 := \left( (\mathsf{C}_1 \norm{j}_1)_{j \in \mathcal{J}}, \mathbf{0} \right)
\end{align}
provides a lower bound on the value of the original optimization problem $\mathit{V}(\bm{\mu}, \bm{\lambda})$. To see this, observe that we can lower bound the value by removing the switching cost from the objective. Then change variables $\beta_{j,h,{\bf r}} = \sigma_j \alpha_{\bf r}(j,h)$ to reach the new form, but with strict inequality in the last constraint on (\ref{eq:pseudo-stability}), and then relax this inequality.
Finally, (\ref{eq:sigmaj}) is met because $\sum_{{\bf r} \in \mathcal{R}(j,h)} \alpha_{\bf r}(j,h)=1$. These observations establish the claim. Therefore
\begin{align}
\label{eq:lp-opt}
C^*_{\mathbf{c}}(\bm{\mu}, \bm{\lambda}) \leq C^{\mathfrak{M}}(\bm{\mu}, \bm{\lambda}).
\end{align}
We use results from \cite{wets85lp-continuity}, \cite{davidson96stability-extreme} to show (in the Lemma below) that the solution set and the optimal value of the linear program are continuous functions of the input parameters.
\begin{lemma}
\label{lem:lp-continuity}
\begin{enumerate}[label=(\Roman*)]
\item \label{lem:lp-continuity-opt-value} As a function of the weight vector  $\mathbf{c}$ and the parameters $\bm{\mu}, \bm{\lambda}$, the optimal value $C^*_{(\cdot)}(\cdot)$ is continuous at any $(\mathbf{c}, (\bm{\mu}, \bm{\lambda})) \in \RR^d \times \mathcal{S}$.
\item \label{lem:lp-continuity-opt-set} For any weight vector $\mathbf{c}$, the optimal solution set $\mathcal{O}^*_{\mathbf{c}}(\cdot)$, as a function of the parameters $(\bm{\mu}, \bm{\lambda})$, is continuous at any $(\bm{\mu}, \bm{\lambda}) \in \mathcal{U}_{\mathbf{c}}$.
\end{enumerate}
\end{lemma}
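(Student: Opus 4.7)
The plan is to recognize this as a parametric linear program with affine data dependence, and to invoke the stability theorems of \cite{wets85lp-continuity} and \cite{davidson96stability-extreme} after verifying their hypotheses. The first step is to rewrite $L_{\mathbf{c}}(\bm{\mu},\bm{\lambda})$ in the standard form with decision variable $x := (\bm{\sigma}, \bm{\beta}) \in \RR^d$. The constraints that $\bm{\sigma} \in \mathcal{P}_{|\mathcal{J}|}$, $\bm{\beta} \geq 0$, and the flow-conservation equalities \eqref{eq:sigmaj} are free of $(\bm{\mu}, \bm{\lambda})$. Only the throughput inequality \eqref{eq:pseudo-stability} carries the parameters: its coefficient matrix is linear in $\bm{\mu}$ (through the products $\beta_{j,h,\mathbf r}\mu(h)\mathbf r$) and its right-hand side is $\bm{\lambda}$. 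Hence the feasible-set correspondence $F:(\bm{\mu},\bm{\lambda})\mapsto$ feasible $x$'s depends affinely, and therefore continuously, on $(\bm{\mu},\bm{\lambda})$.

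Next I would verify two structural facts about $F$. \textbf{Uniform boundedness:} since $\sigma_j\in[0,1]$ and \eqref{eq:sigmaj} forces $0\le\beta_{j,h,\mathbf r}\le \sigma_j\le 1$, the feasible region always lies in the fixed compact cube $[0,1]^d$. \textbf{Non-emptiness on $\mathcal{S}$:} for $(\bm{\mu},\bm{\lambda})\in\mathcal{S}$ we have $\bm{\lambda}\in\Lambda(\bm{\mu})$, and Theorem \ref{thm:markov-static-split} produces a rule in $\mathfrak{MS}$ with stationary activation distribution $\bm{\sigma}$ and splits $\bm{\alpha}$; setting $\beta_{j,h,\mathbf r}:=\sigma_j\,\alpha_{\mathbf r}(j,h)$ yields a feasible point. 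These two facts — compact-valuedness with a uniform bounding box, closed graph, and non-emptiness — are exactly what is needed to put $F$ inside the framework of \cite{wets85lp-continuity}.

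For part \ref{lem:lp-continuity-opt-value}, I would argue joint continuity of $C^*_{\mathbf c}(\bm{\mu},\bm{\lambda})$ in $(\mathbf c,\bm{\mu},\bm{\lambda})$ by combining (i) the continuity of the objective $\mathbf c^{\top}x$ in $(\mathbf c,x)$, (ii) the inner and outer semicontinuity of $F$ on $\mathcal{S}$ that follows from its affine form together with uniform boundedness, and (iii) the linear-programming continuity theorem of \cite{wets85lp-continuity}, which under these hypotheses yields continuity of the value function. For part \ref{lem:lp-continuity-opt-set}, the same framework delivers upper semicontinuity of the optimal-solution correspondence $\mathcal O^*_{\mathbf c}(\cdot)$ on all of $\mathcal{S}$; at points of $\mathcal U_{\mathbf c}$ this correspondence is single-valued, and an upper-semicontinuous singleton-valued correspondence is continuous. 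This is precisely the setting analyzed in \cite{davidson96stability-extreme} for stability of extreme/optimal solutions to parametric LPs with uniqueness, so the conclusion follows by direct citation.

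The step I expect to be the main obstacle is the inner-semicontinuity (equivalently, non-emptiness of the feasible set under small perturbations) when $\bm{\lambda}$ sits on the boundary of $\Lambda(\bm{\mu})$: arbitrary perturbations of $\bm{\lambda}$ could exit $\Lambda(\bm{\mu})$ and make the LP infeasible, which would break any would-be Slater-type argument. This is handled by reading the lemma as continuity of the restriction of $C^*_{\mathbf c}$ and $\mathcal O^*_{\mathbf c}$ to $\mathcal{S}$ (respectively $\mathcal U_{\mathbf c}$) — i.e., along sequences that remain in the feasibility set — which is exactly the hypothesis under which the Wets and Davidson results apply. Once this interpretation is fixed, everything else is bookkeeping against the cited theorems.
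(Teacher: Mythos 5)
Your overall route coincides with the paper's: continuity of the optimal value via the parametric-LP stability results of \cite{wets85lp-continuity}, continuity of the (unique) optimal solution via \cite{davidson96stability-extreme}, together with the observation that the feasible set always lies in a fixed compact box. The problem is the central step of part \ref{lem:lp-continuity-opt-value}. You assert that inner (lower) semicontinuity of the feasible-set multifunction ``follows from its affine form together with uniform boundedness.'' That implication is false in general: the correspondence $\lambda \mapsto \{x \in [0,1] : \lambda x \le 0\}$ is cut out by a single inequality whose data are affine in the parameter, is uniformly bounded and nonempty for every $\lambda \ge 0$, yet fails to be inner semicontinuous at $\lambda = 0$ (where that inequality is tight for every feasible point); the associated LP value can jump. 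Lower semicontinuity is precisely the delicate hypothesis in parametric LP, and it requires a structural verification, not affineness. The paper supplies it by checking the hypotheses of Proposition 6 of \cite{wets85lp-continuity}: (i) the equality constraints ($\bm{\sigma}$ a probability vector and \eqref{eq:sigmaj}) do not involve $(\bm{\mu},\bm{\lambda})$, and (ii) at every $(\bm{\mu},\bm{\lambda}) \in \mathcal{S}$ no inequality constraint is tight for all feasible points --- which uses the fact that $\bm{\lambda} \in \Lambda(\bm{\mu})$ is characterized by a strict inequality, so a feasible point with strict slack in \eqref{eq:pseudo-stability} exists and can be mixed with an interior choice of $(\bm{\sigma},\bm{\beta})$ to make the nonnegativity constraints slack as well. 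Your closing remark about restricting to $\mathcal{S}$ identifies the right domain, but restricting the domain does not by itself yield this strict-feasibility argument, and your write-up never carries it out.

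Two smaller points. First, Theorem 2 of \cite{wets85lp-continuity} also requires continuity of the \emph{dual} feasible-set multifunction; the paper gets this from Corollary 11 of \cite{wets85lp-continuity} using exactly the primal boundedness you noted, but you never make that connection. Second, your argument for part \ref{lem:lp-continuity-opt-set} (outer semicontinuity of the argmin correspondence plus single-valuedness on $\mathcal{U}_{\mathbf{c}}$ implies continuity) is sound in spirit and somewhat more elementary than the paper's appeal to Theorem 3.1 of \cite{davidson96stability-extreme} on extreme-point sets, but it still rests on part \ref{lem:lp-continuity-opt-value} and on the unproven semicontinuity of the feasible set, so the gap above propagates to it.
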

\begin{remark}
Since $\mathcal{O}^*_{\mathbf{c}}(\bm{\mu}, \bm{\lambda})$ is a singleton if $(\bm{\mu}, \bm{\lambda}) \in \mathcal{U}_{\mathbf{c}}$, the definition of continuity in this context is unambiguous.
\end{remark}

\subsection{A Feasible Solution: Static-Split + Max-Weight}
\label{subsec:static-split+max-wt}
We now discuss how we can use the linear program $\mathit{L}$ to obtain a feasible solution for the original optimization problem (\ref{opt-problem}).
We need to deal with two modified constraints:

\noindent \textbf{(i) Single Ergodic Class -- Spectral Gap:}
For any $\bm{\sigma} \in \mathcal{P}_{\card{\mathcal{J}}}$ and $\epsilon_s \in (0,1)$, the stochastic matrix
\begin{align}
\label{eq:feasible-P-mx}
\mathbf{P}(\bm{\sigma}, \epsilon_s) := \epsilon_s \mathbf{1}_{\card{\mathcal{J}}} \bm{\sigma} + (1-\epsilon_s) \mathbf{I}_{\card{\mathcal{J}}}
\end{align}
 is aperiodic and has a single ergodic class given by $\{j: \sigma_j > 0 \}$ with $\bm{\sigma}$ as the invariant distribution. Therefore, given any optimal solution $\left( \bm{\sigma}, \bm{\beta} \right)$ for the relaxed problem $\mathit{L}_{\mathbf{c}}(\bm{\mu}, \bm{\lambda})$, we can construct a feasible solution $\left( \mathbf{P}(\bm{\sigma}, \epsilon_s), \bm{\alpha} \right)$ for the original optimization problem $\mathit{V}(\bm{\mu}, \bm{\lambda})$ such that the network cost for this solution is at most $\epsilon_s M \mathsf{C}_0$ more than the optimal cost. Note that $\epsilon_s$ is the spectral gap of the matrix $\mathbf{P}(\bm{\sigma}, \epsilon_s)$.


\noindent \textbf{(ii) Stability -- Capacity Gap:}
To ensure stability, it is necessary that the arrival rate is strictly less than the service rate (inequality~\eqref{eq:stability}). It can be shown that an optimal solution to the linear program satisfies the constraint~\eqref{eq:pseudo-stability} with equality, and therefore cannot guarantee stability. An easy remedy to this problem is to solve a modified linear program with a fixed small gap $\epsilon_g$ between the arrival rate and the offered service rate. We refer to the parameter $\epsilon_g$ as the \textit{capacity gap}. Continuity of the optimal cost of the linear program $\mathit{L}$ (from part~\ref{lem:lp-continuity-opt-value} of Lemma~\ref{lem:lp-continuity})  ensures that the optimal cost of the modified linear program is close to the optimal cost of the original optimization problem for sufficiently small $\epsilon_g$.

To summarize, if the statistical parameters $\bm{\mu}, \bm{\lambda}$ were known, one could adopt the following scheduling policy:\\ {\bf (a) BS activation}: Compute an optimal solution $\left( \bm{\sigma}^*, \bm{\beta}^* \right)$ for the linear program $\mathit{L}_{\mathbf{c}^0}(\bm{\mu}, \bm{\lambda}+\epsilon_g)$. At every time-slot, with probability $1-\epsilon_s$, maintain the BSs in the same state as the previous time-slot, i.e., no switching. With probability $\epsilon_s$, choose a new BS state according to the static-split rule given by $\bm{\sigma}^*$. The network can be operated at a cost close to the optimal by choosing $\epsilon_s$, $\epsilon_g$ sufficiently small.\\ {\bf (b) Rate allocation}: To ensure stability, use a queue-based rule such as the Max-Weight rule to allocate rates given the observed channel state:
    \begin{align}
    \label{eq:max-wt}
    \mathbf{S}(t) = \argmax_{\mathbf{r} \in \mathcal{R}\left( \mathbf{J}(t), H(t) \right)} \mathbf{Q}(t) \cdot \mathbf{r}.
    \end{align}
We denote the above static-split + Max-Weight rule with parameters $\epsilon_s$, $\epsilon_g$ by $\varphi(\bm{\mu}, \bm{\lambda}+\epsilon_g, \epsilon_s)$.
Theorem~\ref{thm:static-split-optimal} shows that the static-split + Max-Weight policy achieves close to optimal cost while ensuring queue stability.
\begin{theorem}
\label{thm:static-split-optimal}
For any $\bm{\mu}, \bm{\lambda}$ such that $(\bm{\mu}, \bm{\lambda}+2\epsilon_g) \in \mathcal{S}$, and for any $\epsilon_s \in (0,1)$, under the static-split + Max-Weight rule $\varphi(\bm{\mu}, \bm{\lambda}+\epsilon_g, \epsilon_s)$,
\begin{enumerate}
\item \label{item:cost-opt1} the network cost satisfies
\begin{align*}
C^{\varphi(\bm{\mu}, \bm{\lambda}+\epsilon_g, \epsilon_s)}(\bm{\mu}, \bm{\lambda}) \leq C^{\mathfrak{M}}(\bm{\mu}, \bm{\lambda}) + \kappa \epsilon_s + \gamma(\epsilon_g),
\end{align*}
for some constant $\kappa$ that depends on the network size and $\mathsf{C}_0$, $\mathsf{C}_1$, and for some increasing function $\gamma(\cdot)$ such that $\lim_{\epsilon_g \to 0} \gamma(\epsilon_g) = 0$, and
\item \label{item:stab1} the network is stable, i.e.,
 $$\bm{\lambda} \in \Lambda^{\varphi(\bm{\mu}, \bm{\lambda}+\epsilon_g, \epsilon_s)}(\bm{\mu}).$$
\end{enumerate}
\end{theorem}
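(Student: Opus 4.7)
I would split the argument into the cost claim (item~\ref{item:cost-opt1}) and the stability claim (item~\ref{item:stab1}). The cost bound reduces to a stationary calculation on the activation chain combined with \eqref{eq:lp-opt} and the continuity in part~\ref{lem:lp-continuity-opt-value} of Lemma~\ref{lem:lp-continuity}. Stability is the substantive half: because $\mathbf{P}(\bm{\sigma}^*,\epsilon_s)$ has a self-loop of probability $1-\epsilon_s$, the conditional one-slot expected service rate depends strongly on $\mathbf{J}(t-1)$, so a classical one-step Max-Weight drift cannot be made negative uniformly. My plan is to take $V(t) := \tfrac12\|\mathbf{Q}(t)\|_2^2$ and establish a $T_0$-step Foster--Lyapunov drift with window $T_0 = \Theta(1/(\epsilon_s\epsilon_g))$, sized so that the activation chain has enough slots to mix to $\bm{\sigma}^*$ within the window.

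For the cost bound, $\{\mathbf{J}(t)\}$ is an autonomous ergodic Markov chain with invariant law $\bm{\sigma}^*$, so its long-run activation cost equals $\mathsf{C}_1\sum_j\sigma^*_j\|j\|_1 = C^*_{\mathbf{c}^0}(\bm{\mu},\bm{\lambda}+\epsilon_g)$. A genuine switch occurs only on the $\epsilon_s$-probability resample step, and when it does the new state is an independent $\bm{\sigma}^*$-draw; hence the long-run switching cost is at most $\epsilon_s\mathsf{C}_0 \sum_{j',j}\sigma^*_{j'}\sigma^*_j\|(j'-j)^+\|_1 \le M\mathsf{C}_0\epsilon_s$. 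Combining \eqref{eq:lp-opt} with part~\ref{lem:lp-continuity-opt-value} of Lemma~\ref{lem:lp-continuity} gives $C^*_{\mathbf{c}^0}(\bm{\mu},\bm{\lambda}+\epsilon_g) \le C^{\mathfrak{M}}(\bm{\mu},\bm{\lambda}) + \gamma(\epsilon_g)$ with $\gamma(\epsilon_g) := C^*_{\mathbf{c}^0}(\bm{\mu},\bm{\lambda}+\epsilon_g) - C^*_{\mathbf{c}^0}(\bm{\mu},\bm{\lambda})$ vanishing as $\epsilon_g\to 0$ and nondecreasing (since the LP's feasible set shrinks in $\bm{\lambda}$). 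Taking $\kappa := M\mathsf{C}_0$ closes item~\ref{item:cost-opt1}.

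For stability, I telescope $V(t+T_0)-V(t)$ and use $\|\mathbf{Q}(t+s)-\mathbf{Q}(t)\|_\infty \le s(\bar{\mathsf{A}}+\bar{\mathsf{R}})$ to replace each $\mathbf{Q}(t+s)$ by $\mathbf{Q}(t)$ at net additive cost $O(T_0^2)$. The Max-Weight property $\mathbf{Q}(t+s)\cdot\mathbf{S}(t+s) \ge \mathbf{Q}(t+s)\cdot\mathbf{r}$, applied with the convex-combination comparison $\mathbf{r}=\sum_{\mathbf{r}'}\alpha^*_{\mathbf{r}'}(\mathbf{J}(t+s),H(t+s))\mathbf{r}'$, reduces the drift to a lower bound on $\mathbf{Q}(t)\cdot\EE[\sum_s\mathbf{v}^*(\mathbf{J}(t+s),H(t+s))\mid\mathbf{J}(t-1)]$, where $\mathbf{v}^*(j,h):=\sum_\mathbf{r}\alpha^*_\mathbf{r}(j,h)\mathbf{r}$. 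The idempotence of $\mathbf{1}\bm{\sigma}^*$ furnishes the closed form $\mathbf{P}(\bm{\sigma}^*,\epsilon_s)^{s+1} = (1-\epsilon_s)^{s+1}\mathbf{I} + (1-(1-\epsilon_s)^{s+1})\mathbf{1}\bm{\sigma}^*$, so $\sum_{s=0}^{T_0-1}\mathbf{P}^{s+1}(j_0,\cdot) = T_0\bm{\sigma}^* + \vec\eta$ with $\|\vec\eta\|_\infty \le 1/\epsilon_s$. Together with the LP slack $\sum_j\sigma^*_j\sum_h\mu(h)\mathbf{v}^*(j,h) \ge \bm{\lambda}+\epsilon_g$, this yields
\begin{equation*}
\EE\bigl[V(t+T_0)-V(t)\mid\mathbf{Q}(t),\mathbf{J}(t-1)\bigr] \le -\bigl(T_0\epsilon_g - C_1\bar{\mathsf{R}}/\epsilon_s\bigr)\|\mathbf{Q}(t)\|_1 + C_2\,T_0^2
\end{equation*}
for constants $C_1,C_2$ depending only on $M,n,\bar{\mathsf{A}},\bar{\mathsf{R}}$. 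Picking $T_0$ as the least integer exceeding $2C_1\bar{\mathsf{R}}/(\epsilon_s\epsilon_g)$ makes this bound strictly negative outside a bounded set of queue lengths, so a standard $T_0$-step Foster--Lyapunov argument delivers positive recurrence of $(\mathbf{J}(t-1),\mathbf{Q}(t))$ and hence stability in the sense of Definition~\ref{def:stability}.

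The main obstacle is scale matching: the window $T_0$ must be long enough ($\gtrsim 1/(\epsilon_s\epsilon_g)$) that the $O(1/\epsilon_s)$ mixing residue of the activation chain is dominated by the $T_0\epsilon_g$-scale negative drift contribution, yet short enough that the window-averaging errors of order $T_0^2$ do not overwhelm the linear-in-$\|\mathbf{Q}(t)\|_1$ term. The explicit spectral form of $\mathbf{P}(\bm{\sigma}^*,\epsilon_s)$ is what keeps this trade-off feasible for the Markov policy analyzed here; the learning algorithm of later sections, where $\bm{\sigma}^*$ is only estimated from samples and the transition matrices become slot-dependent, is exactly where the time-inhomogeneous Markov chain machinery flagged in the introduction will become necessary.
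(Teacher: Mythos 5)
Your proposal is correct and follows essentially the same route as the paper: the cost bound via convergence of the activation chain's marginals to $\bm{\sigma}^*$, the $\epsilon_s M \mathsf{C}_0$ switching surcharge, \eqref{eq:lp-opt} and Lemma~\ref{lem:lp-continuity}; and stability via a $T$-step quadratic Max-Weight drift against the static-split comparison $\bm{\alpha}^*$ followed by a multi-step Foster argument. The only cosmetic difference is that you fix the window explicitly as $T_0=\Theta\bigl(1/(\epsilon_s\epsilon_g)\bigr)$ using the closed form of $\mathbf{P}(\bm{\sigma}^*,\epsilon_s)^k$, whereas the paper chooses $T$ implicitly through the mixing condition \eqref{eq:pick-T}.
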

\begin{proof}[Proof Outline]
Since $\mathbf{P}(\bm{\sigma}^*, \epsilon_s)$ has a single ergodic class, the marginal distribution of the activation state $(\mathbf{J}(t))_{t>0}$ converges to $\bm{\sigma}^*$. Part~\ref{item:cost-opt1} of the theorem then follows from  \eqref{eq:lp-opt} and the continuity of the optimal value of $\mathit{L}$ (Lemma~\ref{lem:lp-continuity}\ref{lem:lp-continuity-opt-value}). Part~\ref{item:stab1} relies on the strict inequality gap enforced by $\epsilon_g$ in \eqref{eq:stability}. Therefore, it is possible to serve all the arrivals in the long-term. We use a standard Lyapunov argument which shows that the $T$-step quadratic Lyapunov drift for the queues is strictly negative outside a finite set for some $T >0.$
A complete proof of this theorem can be found in the Appendix.
\end{proof}
One can also achieve the above guarantees with a static-split + static-split rule which has BS activations as above, but channel allocation through a static-split rule with convex combinations given by $\bm{\alpha}^*$ such that
\begin{align}
\label{eq:change-var}
\alpha^*_{\mathbf{r}}(j,h) = \frac{\beta^*_{j,h,\mathbf{r}} }{\sigma^*_j} \quad \forall \mathbf{r} \in \mathcal{R}(j,h),  \forall j \in \mathcal{J}, h \in \mathcal{H}.
\end{align}

\subsection{Effect of Parameter Choice on Performance}
\label{subsec:param}

The constants $\epsilon_s$ and $\epsilon_g$ can be used as control parameters to
 trade-off between two desirable but conflicting features --- small
 queue lengths and low network cost.


\noindent {\bf (i) Spectral gap, $\epsilon_s$}: $\epsilon_s$ is the spectral
  gap of the transition probability matrix  $\mathbf{P}(\bm{\sigma}^*,
  \epsilon_s)$ and, therefore, impacts the mixing time of the
  activation state $\left( \mathbf{J}(t) \right)_{t > 0}$. Since the
  average available service rate is dependent on the distribution of
  the activation state, the time taken for the queues to stabilize
  depends on the mixing time, and consequently, on the choice of
  $\epsilon_s$.  With  $\epsilon_s = 1,$ we are effectively ignoring
  switching costs, as this corresponds to a
  rule that chooses the activation sets in an i.i.d. manner according
  to the distribution $\bm{\sigma}^*$. Thus, stability is ensured but
  at a penalty of larger average costs. At the other extreme, when
  $\epsilon_s = 0$, the transition
  probability matrix $\mathbf{I}_{\card{\mathcal{J}}}$ corresponds to
  an activation rule that never switches the BSs from their initial
  activation state. This extreme naturally achieves zero switching cost,
  but does not guarantee queue stability as the initial activation set is frozen for all time and may not be large enough to ensure stable queues.

\noindent {\bf (ii) Capacity gap, $\epsilon_g$}: Recall that  $\epsilon_g$ is
  the gap enforced between the arrival rate and the allocated service
  rate in the linear program $\mathit{L}_{\mathbf{c}^0}(\bm{\mu},
  \bm{\lambda}+\epsilon_g)$. Since the mean queue-length is known to vary inversely as
  the capacity gap, the parameter  $\epsilon_g$ can be used to control
  queue-lengths. A small $\epsilon_g$ results in low network cost and
  large mean queue-lengths.


 \section{Policy with Unknown Statistics}
\label{sec:algo}
In the setting where arrival and channel statistics are unknown, our interest is in designing policies that learn the arrival and channel statistics to make rate allocation and BS activation decisions. As described in Section~\ref{subsec:allocation}, channel rates are observed in every time-slot after activation of the BSs. Since only channel rates of activated BSs can be obtained in any time-slot, the problem naturally involves a trade-off between activating more BSs to get better channel estimates versus maintaining low network cost. Our objective is to design policies that achieve network cost close to $C^{\mathfrak{M}}$, while learning the statistics well enough to stabilize the queues.

\subsection{An Explore-Exploit Policy}
\label{subsec:explore-exploit-policy}
\begin{algorithm}
  \caption{Policy $\phi(\epsilon_p, \epsilon_s, \epsilon_g)$ with parameters $\epsilon_p$, $\epsilon_s$,  $\epsilon_g$}
  \begin{algorithmic}[1]
  	\State Generate a uniformly distributed random direction $\bm{\upsilon} \in \RR^d$, $\norm{\bm{\upsilon}}_2 = 1$.
  	\State Construct a perturbed weight vector
  	$$\mathbf{c}^{\epsilon_p} \leftarrow \mathbf{c}^0 + \epsilon_p \bm{\upsilon}.$$
  	\State Initialize $\hat{\bm{\mu}} \leftarrow \mathbf{0}$, $\hat{\bm{\lambda}} \leftarrow \mathbf{0}$  and $\tilde{\mathbf{J}}(0) \leftarrow \mathbf{J}(0)$.
    \ForAll{$t > 0$}
    \State Generate $E_s(t)$, an indep. Bernoulli($\epsilon_s$) sample.
    \If { $E_s(t) = 0$}
    \Comment \textit{No Switching}
    \State $\tilde{\mathbf{J}}(t) \leftarrow \tilde{\mathbf{J}}(t-1)$.
    \Else
    \State  Solve $\mathit{L}_{\mathbf{c}^{\epsilon_p}} \bigl( \hat{\bm{\mu}}, \hat{\bm{\lambda}}+\epsilon_g \bigl)$.
    \State Select an optimal solution $\bigl( \hat{\bm{\sigma}}(t), \hat{\bm{\beta}}(t) \bigr)$.
    \State Select $\tilde{\mathbf{J}}(t)$ according to the distribution $\hat{\bm{\sigma}}(t)$.
    \EndIf
    \State Set $\epsilon_l(t)  \leftarrow \frac{2\log t}{t}.$
    \State Generate $E_l(t)$, an indep. Bernoulli($\epsilon_l(t)$) sample.
    \If { $E_l(t) = 1$}
    \Comment \textit{Explore}
    \State $\mathbf{J}(t) \leftarrow \mathbf{1}$ (Activate all the BSs).
    \State Observe the channel state $H(t)$.
    \State Update empirical distributions $\hat{\bm{\mu}}$, $\hat{\bm{\lambda}}$.
    \Else
    \Comment \textit{Exploit}
    \State $\mathbf{J}(t) \leftarrow \tilde{\mathbf{J}}(t)$.
    \State Observe the channel state $H(t)|_{\mathbf{J}(t)}$.
    \EndIf
    \State Allocate channels according to the Max-Weight Rule,
    \begin{align*}
    \mathbf{S}(t)  \leftarrow \argmax_{\mathbf{r} \in \mathcal{R}\left( \mathbf{J}(t), H(t) \right)} \mathbf{Q}(t) \cdot \mathbf{r}.
    \end{align*}
 \EndFor
  \end{algorithmic}
    \label{alg:}
\end{algorithm}
Algorithm~\ref{alg:} gives a policy $\phi(\epsilon_p, \epsilon_s, \epsilon_g)$, which is an explore-exploit strategy similar to the $\epsilon$-greedy policy in the multi-armed bandit problem. Here, $\epsilon_p, \epsilon_s, \epsilon_g$ are fixed parameters of the policy. If an iterative scheme is used to solve the LP (line 15 of Algorithm \ref{alg:}), one could initialize the iteration at the solution parameterized by the previously obtained empirical distributions.
\subsubsection{Initial Perturbation of the Cost Vector}
Given the original cost vector $\mathbf{c}^0$ (given by \eqref{eq:actual-cost}), the policy first generates a slightly perturbed cost vector $\mathbf{c}^{\epsilon_p}$ by adding to $\mathbf{c}^0$ a random perturbation uniformly distributed on the $\epsilon_p$-ball. It is easily verified that, for any $(\bm{\mu}, \bm{\lambda}) \in \mathcal{S}$,
\begin{align*}
\card{C^*_{\mathbf{c}^{\epsilon_p}}(\bm{\mu}, \bm{\lambda}) - C^*_{\mathbf{c}^0}(\bm{\mu}, \bm{\lambda})} \leq \sqrt{\card{\mathcal{H}}+1} \mathsf{C}_1 \epsilon_p.
\end{align*}
In addition, the following lemma shows that the perturbed linear program has a unique solution with probability 1.
\begin{lemma}
\label{lem:perturbed-unique-soln}
For any $(\bm{\mu}, \bm{\lambda}) \in \mathcal{S}$,
\begin{align*}
\PP\left[(\bm{\mu}, \bm{\lambda}) \in \mathcal{U}_{\mathbf{c}^{\epsilon_p}}  \given \mathbf{J}(0), \mathbf{Q}(1) \right] = 1.
\end{align*}
\end{lemma}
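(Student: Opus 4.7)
My plan is to combine a standard linear-programming non-degeneracy fact with a measure-theoretic observation on the sphere: cost vectors that yield non-unique optima form a Lebesgue-null finite union of hyperplanes in $\RR^d$, and since $\bm{\upsilon}$ is uniform on the unit sphere $S^{d-1}$ with $d=\card{\mathcal{J}}+\sum_{j,h}\card{\mathcal{R}(j,h)}\geq 2$, the perturbed cost $\mathbf{c}^{\epsilon_p}=\mathbf{c}^0+\epsilon_p\bm{\upsilon}$ avoids this bad set almost surely. First I would note that, for $(\bm{\mu},\bm{\lambda})\in\mathcal{S}$, the feasible set $F(\bm{\mu},\bm{\lambda})\subset\RR^d$ of $\mathit{L}_{\mathbf{c}}(\bm{\mu},\bm{\lambda})$ is a non-empty bounded polytope: non-emptiness follows from membership in $\mathcal{S}$, and boundedness from $\sigma_j\in[0,1]$ together with~\eqref{eq:sigmaj}, which forces $\beta_{j,h,\mathbf{r}}\in[0,1]$. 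A bounded polytope has only finitely many edges; let $\mathcal{E}(\bm{\mu},\bm{\lambda})\subset\RR^d$ be the finite set of their directions. By standard polyhedral theory, the optimal face of $\mathbf{c}\cdot(\cdot)$ on $F$ has positive dimension (equivalently, the LP optimum is non-unique) iff $\mathbf{c}$ is orthogonal to some $\mathbf{e}\in\mathcal{E}(\bm{\mu},\bm{\lambda})$. Setting
\begin{align*}
\mathcal{B}:=\bigcup_{\mathbf{e}\in\mathcal{E}(\bm{\mu},\bm{\lambda})}\{\mathbf{c}\in\RR^d:\mathbf{c}\cdot\mathbf{e}=0\},
\end{align*}
one has $\mathbf{c}\notin\mathcal{B}\Rightarrow(\bm{\mu},\bm{\lambda})\in\mathcal{U}_{\mathbf{c}}$.

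The probabilistic step is then a union bound. For each fixed $\mathbf{e}\in\mathcal{E}(\bm{\mu},\bm{\lambda})$, the event $\mathbf{c}^{\epsilon_p}\cdot\mathbf{e}=0$ reduces to $\bm{\upsilon}\cdot\mathbf{e}=-(\mathbf{c}^0\cdot\mathbf{e})/\epsilon_p$, i.e., to $\bm{\upsilon}$ lying on a fixed affine hyperplane intersected with $S^{d-1}$. For $d\geq 2$, such an intersection is either empty, a single tangent point, or a $(d-2)$-sphere, and in every case has zero normalized surface (Haar) measure on $S^{d-1}$. Finiteness of $\mathcal{E}(\bm{\mu},\bm{\lambda})$ together with a union bound then yields $\PP[\mathbf{c}^{\epsilon_p}\in\mathcal{B}\mid\mathbf{J}(0),\mathbf{Q}(1)]=0$, the conditioning being harmless because $\bm{\upsilon}$ is drawn independently of the initial network state.

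The only place where care is needed is the LP characterization itself, which I view as the hardest step only in the sense that it must be stated cleanly. A short argument: the optimal face $F^*\subseteq F$ is itself a bounded polytope, so $\dim F^*\geq 1$ iff it contains an edge $[v_1,v_2]$; along that edge $\mathbf{c}\cdot(\cdot)$ is constant, giving $\mathbf{c}\cdot(v_1-v_2)=0$. Conversely, if $v_1$ is optimal and $\mathbf{c}\cdot(v_1-v_2)=0$ for an adjacent vertex $v_2$, then $v_2$ is optimal as well and the LP has more than one solution. The remaining ingredients -- finiteness of the edge count (bounded by the number of constraints in $\mathit{L}_{\mathbf{c}}$) and the spherical measure estimate -- are routine.
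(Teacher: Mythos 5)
Your argument is correct and follows essentially the same route as the paper's proof: both reduce non-uniqueness of the perturbed LP to $\mathbf{c}^{\epsilon_p}$ being orthogonal to a positive-dimensional face (you use edge directions, the paper uses faces of each dimension), then apply a zero-measure observation for the uniform direction $\bm{\upsilon}$ on the sphere plus a union bound over the finitely many faces/edges of the polytope, which is independent of $\mathbf{c}$. Your stated ``iff'' is slightly overstated (orthogonality to an edge not incident to the optimal face need not destroy uniqueness), but only the implication you actually use --- non-uniqueness implies orthogonality to some edge direction --- is needed, so this is harmless.
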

\subsubsection{BS Activation}
\paragraph*{\bf Estimated Markov-static-split rule}
The policy attempts to mimic the Markov-static-split rule using the empirical means  $(\hat{\bm{\mu}}, \hat{\bm{\lambda}})$. The vector $\tilde{\mathbf{J}}(t)$ is used to keep track of the BS activations according to the estimated Markov-static-split rule. To be precise, with probability $1-\epsilon_s$, the policy chooses to keep the same activation set as the previous time-slot's candidate, i.e., $\tilde{\mathbf{J}}(t-1)$. With probability $\epsilon_s$, it solves the linear program $\mathit{L}_{\mathbf{c}^{\epsilon_p}}\bigl( \hat{\bm{\mu}}, \hat{\bm{\lambda}}+\epsilon_g \bigr)$ with the perturbed cost vector $\mathbf{c}^{\epsilon_p}$ and parameters $\hat{\bm{\mu}}, \hat{\bm{\lambda}}+\epsilon_g$ given by the empirical distribution. From an optimal solution $\bigl( \hat{\bm{\sigma}}(t), \hat{\bm{\beta}}(t) \bigr)$ of the linear program, it chooses the BS activation vector $\tilde{\mathbf{J}}(t)$ according to the distribution $\hat{\bm{\sigma}}(t)$.

\paragraph*{\bf Explore-Exploit}
At each time, the policy chooses to either \textit{explore} or \textit{exploit} and accordingly selects the actual BS activation vector $\mathbf{J}(t)$. The probability that it explores, $\epsilon_l(t) = \frac{2\log t}{t}$, decreases with time.
\begin{itemize}
\item In the \textit{explore} phase, the policy activates all the BSs and observes the channel. It maintains $\hat{\bm{\mu}}, \hat{\bm{\lambda}}$, the empirical distribution of the channel and the empirical mean of the arrival vector respectively, obtained from samples in the explore phase.
\item In the \textit{exploit} phase, it simply chooses the activation vector given by the estimated Markov-static-split rule, i.e., $\mathbf{J}(t) = \tilde{\mathbf{J}}(t)$. 
\end{itemize}

\subsubsection{Rate Allocation}
The policy uses the Max-Weight Rule given by \eqref{eq:max-wt} for channel allocation.

\subsection{Performance Guarantees}
In Theorem~\ref{thm:algo-optimal}, we give stability and network cost guarantees for the proposed learning-cum-scheduling rule $\phi(\epsilon_p, \epsilon_s, \epsilon_g)$.
\begin{theorem}
\label{thm:algo-optimal}
For any $\bm{\mu}, \bm{\lambda}$ such that $(\bm{\mu}, \bm{\lambda}+2\epsilon_g) \in \mathcal{S}$, and for any $\epsilon_p, \epsilon_s \in (0,1)$, under the policy $\phi(\epsilon_p, \epsilon_s, \epsilon_g)$,
\begin{enumerate}
\item \label{item:cost-opt2} the network cost satisfies
\begin{align*}
C^{\phi(\epsilon_p, \epsilon_s, \epsilon_g)}(\bm{\mu}, \bm{\lambda}) \leq C^{\mathfrak{M}}(\bm{\mu}, \bm{\lambda}) + \kappa (\epsilon_p + \epsilon_s) + \gamma(\epsilon_g),
\end{align*}
for some constant $\kappa$ that depends on the network size and $\mathsf{C}_0$, $\mathsf{C}_1$, and for some increasing function $\gamma(\cdot)$ such that $\lim_{\epsilon_g \to 0} \gamma(\epsilon_g) = 0$, and
\item \label{item:stab2} the network is stable, i.e.,
 $$\bm{\lambda} \in \Lambda^{\phi(\epsilon_p, \epsilon_s, \epsilon_g)}(\bm{\mu}).$$
\end{enumerate}
\end{theorem}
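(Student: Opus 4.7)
The plan is to reduce the theorem to the already-established Theorem~\ref{thm:static-split-optimal} by arguing that, as time grows, the learning-driven activation dynamics approach a true Markov-static-split rule $\mathbf{P}(\bm{\sigma}^*_{\epsilon_p}, \epsilon_s)$ anchored at the unique optimizer $\bm{\sigma}^*_{\epsilon_p}$ of the perturbed LP $\mathit{L}_{\mathbf{c}^{\epsilon_p}}(\bm{\mu}, \bm{\lambda}+\epsilon_g)$. First I would establish a.s.\ consistency of the empirical estimates: since $\sum_t \epsilon_l(t) = \sum_t 2\log t/t = \infty$, the cumulative expected number of explore slots by time $t$ grows like $(\log t)^2$; combining an Azuma-type concentration with the i.i.d.\ arrival and channel samples harvested in those slots yields $(\hat{\bm{\mu}}(t), \hat{\bm{\lambda}}(t)) \to (\bm{\mu}, \bm{\lambda})$ almost surely, with an explicit polylog-in-$t$ rate.

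Second, I would invoke Lemma~\ref{lem:perturbed-unique-soln} at $(\bm{\mu}, \bm{\lambda}+\epsilon_g) \in \mathcal{S}$ to secure almost-sure uniqueness of the LP optimizer $\bm{\sigma}^*_{\epsilon_p}$, and Lemma~\ref{lem:lp-continuity}\ref{lem:lp-continuity-opt-set} to conclude $\hat{\bm{\sigma}}(t) \to \bm{\sigma}^*_{\epsilon_p}$ a.s. Conditional on the exploit branch, the activation transition matrix at time $t$ is $\mathbf{P}(\hat{\bm{\sigma}}(t), \epsilon_s)$, an asymptotically small perturbation of the fixed ergodic matrix $\mathbf{P}(\bm{\sigma}^*_{\epsilon_p}, \epsilon_s) \in \mathcal{W}_{\card{\mathcal{J}}}$, while the explore branch overrides the activation to $\mathbf{1}$ with vanishing probability $\epsilon_l(t)$. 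Applying the paper's convergence bounds for time-inhomogeneous Markov chains with such perturbed kernels then shows that the marginal law of $\mathbf{J}(t)$ converges in total variation to $\bm{\sigma}^*_{\epsilon_p}$. The Cesaro-average network cost is therefore bounded by the Markov-static-split cost associated with $(\mathbf{P}(\bm{\sigma}^*_{\epsilon_p}, \epsilon_s), \bm{\alpha}^*_{\epsilon_p})$ plus an $O(\log t / t)$ explore penalty; combining the $O(\epsilon_s M \mathsf{C}_0)$ switching gap implicit in \eqref{eq:feasible-P-mx}, the $\sqrt{\card{\mathcal{H}}+1}\,\mathsf{C}_1 \epsilon_p$ cost-perturbation bound, the lower bound \eqref{eq:lp-opt}, and Lemma~\ref{lem:lp-continuity}\ref{lem:lp-continuity-opt-value} applied in $\bm{\lambda}$ at $(\bm{\mu}, \bm{\lambda}+\epsilon_g)$ gives part~\ref{item:cost-opt2}.

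The main obstacle is part~\ref{item:stab2}: the joint process $(\mathbf{J}(t), \mathbf{Q}(t))$ is genuinely time-inhomogeneous, so no single-step Foster--Lyapunov theorem applies directly. The plan is a multi-step quadratic-Lyapunov drift over a window $[t, t+T]$, with $t$ chosen large enough (via the quantitative rates from the previous steps) so that the marginal law of $\mathbf{J}(s)$ lies within $O(\delta)$ of $\bm{\sigma}^*_{\epsilon_p}$ uniformly over $s \in [t, t+T]$. Since Max-Weight dominates, in queue-weighted expected service rate, the static-split rule $\bm{\alpha}^*_{\epsilon_p}$ of \eqref{eq:change-var}, and since the latter exceeds $\bm{\lambda}$ componentwise by $\epsilon_g$ via \eqref{eq:pseudo-stability}, the $T$-step expected drift of $\sum_{m,u} Q_{m,u}(t)^2$ is strictly negative outside a finite set, provided $T$ dominates the mixing time of $\mathbf{P}(\bm{\sigma}^*_{\epsilon_p}, \epsilon_s)$; this yields \eqref{eqn:defn1} for suitable $\bar{Q}, \rho$. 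The technical core is absorbing the pre-convergence transient of the learning process into an additive constant in the drift, and this is precisely where the explicit convergence rates for time-inhomogeneous Markov chains are decisive.
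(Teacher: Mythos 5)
Your part~\ref{item:cost-opt2} follows essentially the paper's own route: a.s.\ consistency of the empirical estimates, uniqueness via Lemma~\ref{lem:perturbed-unique-soln}, solution-set continuity via Lemma~\ref{lem:lp-continuity}\ref{lem:lp-continuity-opt-set}, convergence of the activation marginal via Lemma~\ref{lem:inhomo-dtmc-convergence}, and then the chain of bounds through $C^*_{\mathbf{c}^{\epsilon_p}}$, the $\sqrt{\card{\mathcal{H}}+1}\,\mathsf{C}_1\epsilon_p$ perturbation term, \eqref{eq:lp-opt} and Lemma~\ref{lem:lp-continuity}\ref{lem:lp-continuity-opt-value}; that part is fine. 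The gap is in part~\ref{item:stab2}. You propose to pick a deterministic $t$ ``large enough via the quantitative rates'' so that the law of $\mathbf{J}(s)$ is within $O(\delta)$ of $\bm{\sigma}^*$ over the window, and to absorb the pre-convergence transient ``into an additive constant in the drift.'' Neither step works as stated. The drift bound must hold conditionally on the current state $(\tilde{\mathbf{J}}(t-1),\mathbf{Q}(t))$, and what controls the conditional service rate is closeness of the estimated kernels $\mathbf{P}(\hat{\bm{\sigma}}(l),\epsilon_s)$ to $\mathbf{P}(\bm{\sigma}^*,\epsilon_s)$ for all $l$ in the window; this closeness is attained only after a \emph{random} time (the estimates converge almost surely, not uniformly in $t$), so no deterministic cutoff suffices --- convergence of the unconditional marginal of $\mathbf{J}(t)$, which is what your argument delivers, is not the right object. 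Moreover the transient cannot be swallowed by an additive constant: it lasts a random, unbounded time during which the queues grow linearly, so the Lyapunov function at the (random) start of the good phase is of order $\Gamma^2$ and must be controlled in expectation, not pointwise.

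The paper supplies exactly the two ingredients your plan is missing. First, Lemma~\ref{lem:est-convergence} shows that the random time $\Gamma=T_1(f(\delta_1))$ after which the empirical estimates (and hence, by Lemma~\ref{lem:P-mx-convergence}, all subsequent transition matrices) remain $\delta_1$-close satisfies $\EE[\Gamma^2]<\infty$; this is a concentration argument (Chernoff for the number of explore slots, which grows like $\log^2 t$, plus Hoeffding and a Pinsker-type bound for $\hat{\bm{\lambda}}$ and $\hat{\bm{\mu}}$) giving a tail of order $o(1/t^3)$. Second, Lemma~\ref{lem:fosters} is a modified Foster/telescoping argument showing that a $T$-step negative drift valid only for $t>\Gamma$, together with bounded arrivals and $\EE[\Gamma^2]<\infty$, still yields stability in the sense of Definition~\ref{def:stability}. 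Note also that the explicit inhomogeneous-chain rates (Lemma~\ref{lem:inhomo-dtmc-convergence}\ref{item:marginal-bound}, with $\Upsilon(\mathbf{P}(\bm{\sigma}^*,\epsilon_s))=1/\epsilon_s$) enter only to bound $\sum_{l=0}^{T-1}\norm{\mathbf{y}(t+l)-\bm{\sigma}^*}_1$ inside the post-$\Gamma$ drift computation; they do not, and cannot, control the learning transient, which is a concentration statement about the estimates. Without a moment bound on $\Gamma$ and a Foster-type argument adapted to a random start time, your stability argument does not close.
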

\begin{proof}[Proof Outline]
  As opposed to known statistical parameters for the arrivals and the
  channel in the Markov-static-split rule, the policy uses empirical
  statistics that change dynamically with time. Thus, the activation
  state process $(\mathbf{J}(t))_{t>0}$, in this case, is not a
  time-homogeneous Markov chain. However, we note that $\mathbf{J}(t)$
  along with the empirical statistics forms a time-inhomogeneous
  Markov chain with the empirical statistics converging to the
  true statistics almost surely. Specifically, we show that the time
  taken by the algorithm to learn the parameters within a small
  error has a finite second moment.

  We then use convergence results for time-inhomogeneous Markov chains
  (derived in Lemma~\ref{lem:inhomo-dtmc-convergence} in
  Section~\ref{sec:inhom-markov}) to show convergence of the marginal
  distribution of the activation state $(\mathbf{J}(t))_{t>0}$. As in
  Theorem~\ref{thm:static-split-optimal}, Part~\ref{item:cost-opt2}
  then follows from \eqref{eq:lp-opt} and the continuity of the
  optimal value of $\mathit{L}$ (Lemma~\ref{lem:lp-continuity}\ref{lem:lp-continuity-opt-value}).

  Part~\ref{item:stab2} requires further arguments. The queues have a
  negative Lyapunov drift only after the empirical estimates have
  converged to the true parameters within a small error. To bound the
  Lyapunov drift before this time, we use boundedness of the arrivals
  along with the existence of a second moment for the convergence time
  of the estimated parameters. By using a telescoping argument as in
  Foster's theorem, we show that this implies stability as per
  Definition~\ref{def:stability}. For the complete proof, please see the Appendix.
\end{proof}

\subsection{Optimality of static-split + max-weight policies}
\label{subsec:discussion-optimality}
We now address the restriction to ergodic Markov policies and the question of its optimality. Recall Definition~\ref{def:stability}. When there is only activation cost, and no switching cost, it is easy to see that the class of static-split policies is both cost and throughput optimal. This scenario is represented by the linear program $\mathit{L}_{\mathbf{c}}(\bm{\mu}, \bm{\lambda})$ in Section~\ref{subsec:linear-prog}. The optimal cost for the problem with switching cost cannot be lower than the value of $\mathit{L}_{\mathbf{c}}$; see~(\ref{eq:lp-opt}). The static-split + max-weight policies in Section~\ref{subsec:static-split+max-wt} can get arbitrarily close to this value; see Theorem~\ref{thm:static-split-optimal}. We can thus conclude that it is sufficient to consider the class of all ergodic Markov policies. Theorem~\ref{thm:algo-optimal} finally asserts that a Markov-static-split policy for BS activation can be implemented using estimated parameters.

\subsection{Discussion: Other Potential Approaches}
\label{optFramework-Discussion-VirtualQueues}

Recall that our system consists of two distinct time-scales: (a)
exogenous fast dynamics due the channel variability, that occurs on a
per-time-slot basis, and (b) endogenous slow dynamics of learning and
activation due to base-station active-sleep state dynamics. By
`exogenous', we mean that the time-scale is controlled by nature
(channel process), and by `endogenous', we mean that the time-scale is
controlled by the learning-cum-activation algorithm (slowed dynamics
where activation states change only infrequently).  To place this in
perspective, consider the following alternate approaches, each of
which has defects.

{\em 1. Virtual queues + MaxWeight:} As is now standard
\cite{GeoNeeTas_06,sryi14}, suppose that we encode the various costs
through virtual queues (or variants there-of), and apply a MaxWeight
algorithm to this collection of queues. Due to the switching cost, the
effective channel, i.e., the vector of channel rates on the active
collection of base-stations, has dependence across time (coupled
dynamics of channel and queues) through the activation set scheduling,
and voids the standard Lyapunov proof approach for showing
stability. Specifically, we cannot guarantee that the time average of
various activation sets chosen by this (virtual + actual queue)
MaxWeight algorithm equals the corresponding optimal fractions computed
using a linear program with known channel and arrival parameters.

{\em 2. Ignoring Switching Costs with Fast Dynamics:} Suppose we use
virtual queues to capture only the activation costs. In this case, a
MaxWeight approach (selecting a new activation set and channel
allocation in each time-slot) will ensure stability, but will not
provide any guarantees on cost optimality as there will be frequent
switching of the activation set.

{\em 3. Ignoring Switching Costs with Slowed Dynamics:} Again, we use
virtual queues for encoding only activation costs, and use block
scheduling. In other words, re-compute an activation + channel
schedule once every $R$ time-slots, and use this fixed schedule for
this block of time (pick-and-compare, periodic, frame-based algorithms
\cite{Tas_98,Neely02tradeoffsin,chsa06,yiprch08}). While this approach
minimizes switching costs (as activation changes occur infrequently),
stability properties are lost as we are not making use of opportunism
arising from the wireless channel variability (the schedule is fixed
for a block of time and does not adapt to instantaneous channel
variations).

Our approach avoids the difficulties in each of these approaches by
explicitly slowing down the time-scale of the activation set dynamics
(an engineered slow time-scale), thus minimizing switching
costs. However, it allows channels to be opportunistically
re-allocated in each time-slot based on the instantaneous channel state
(the fast time-scale of nature). The channel allocations are based on observations of channel state but only on the activated BSs. This fast-slow co-evolution of
learning, activation sets and queue lengths requires a new proof
approach. We combine new results (see Section~\ref{sec:inhom-markov})
on convergence of inhomogeneous Markov chains with Lyapunov analysis
to show both stability and cost (near) optimality.

 \section{Convergence of a Time-Inhomogeneous Markov Process}
\label{sec:inhom-markov}
 In this section, we derive some convergence bounds for perturbed time-inhomogeneous Markov chains which are useful in proving stability and cost optimality. Let $\mathscr{P} := \{\mathbf{P}_{\delta}, \delta\in\Delta\}$ be a collection of stochastic matrices in $\RR^{N\times N}$, with $\{\bm{\sigma}_\delta, \delta\in\Delta\}$ denoting the corresponding invariant probability distributions. Also, let $\mathbf{P}_*$ be an $N\times N$ aperiodic stochastic matrix with  a single ergodic class and invariant probability distribution $\bm{\sigma}_*$. 

Recall that for a stochastic matrix $\mathbf{P}$ the coefficient of ergodicity \cite{seneta2006non}
$\tau_1(\mathbf{P})$ is defined by
\begin{align}\label{E-tau1}
\tau_1(\mathbf{P}) \;:=\; \max_{\mathbf{z}^{\mathsf{T}}\mathbf{1}_N=0\,,\;\norm{\mathbf{z}}_1=1}\; \norm{\mathbf{P}^{\mathsf{T}} \mathbf{z}}_1.
\end{align}
It has the following basic properties \cite{seneta2006non}: 
\begin{enumerate}
\item $\tau_1(\mathbf{P}_1\mathbf{P}_2) \;\le\; \tau_1(\mathbf{P}_1) \tau_1(\mathbf{P}_2)$,
\item $\card{\tau_1(\mathbf{P}_1)-\tau_1(\mathbf{P}_2)} \;\le\; \norm{\mathbf{P}_1-\mathbf{P}_2}_\infty$,
\item $\norm{\mathbf{x} \mathbf{P} -  \mathbf{y}\mathbf{P}}_1\;\le\; \tau_1(\mathbf{P})\,\norm{\mathbf{x}-\mathbf{y}}_1  	\quad
\forall\, \mathbf{x},\mathbf{y}\in \mathcal{P}_N$,
 and
\item $\tau_1(\mathbf{P})<1$ if and only if $\mathbf{P}$ has no pair of orthogonal rows
(i.e., if it is a scrambling matrix).

\end{enumerate}

By  the results in \cite{anthonisse1977exponential}, if $\mathbf{P}_*$ is aperiodic
and has a single ergodic class then there exists an integer $\Hat{m}$ such that
$\mathbf{P}_*^k$ is scrambling for all $k\ge\Hat{m}$.
Therefore, $\tau_1(\mathbf{P}_*^{k}) < 1 \; \forall k \geq \Hat{m}$.

Define
\begin{align}
\label{eq:def-sup-norm}
\epsilon :=  \sup_{\delta\in\Delta} \, \norm{\mathbf{P}_{\delta}-\mathbf{P}_*}_1.
\end{align}

Now, consider a time-inhomogeneous Markov chain $\left( X(t)
\right)_{t \geq 0}$ with initial distribution  $\mathbf{y}(0)$, and
transition probability matrix at time $t$ given by
$\mathbf{P}_{\delta_t} \in \mathscr{P} \; \forall t > 0$. Let
$\left\lbrace \mathbf{y}(t) \right\rbrace_{t \geq 0}$ be the resulting
sequence of marginal distributions. The following lemma gives a bound
on the convergence of the limiting distribution of such a
time-inhomogeneous DTMC to $\bm{\sigma}_*$. Additional results are
available in the Appendix.
\begin{lemma}
\label{lem:inhomo-dtmc-convergence}
For any $\mathbf{y}(0)$, 
\begin{enumerate}[label=(\alph*)]
\item \label{item:marginal-bound} the marginal distribution satisfies
\begin{align}
\label{eq:marginal-bound}
\norm{\mathbf{y}(n)-\bm{\sigma}_*}_1 &\;\le\; \tau_1(\mathbf{P}_*^n) \norm{\mathbf{y}(0)-\bm{\sigma}_*}_1\nonumber\\
&\mspace{100mu}+ \epsilon\, \sum_{\ell=0}^{n-1} \tau_1(\mathbf{P}_*^\ell)\, ,
\end{align}
\item \label{item:limiting-bound} and the limiting distribution satisfies
\begin{align*}
\limsup_{n \to \infty} \, \norm{\mathbf{y}(n) - \bm{\sigma}_*}_1 \;\le\;  \epsilon \, \Upsilon(\mathbf{P}_*)
\end{align*}
where $\Upsilon(\mathbf{P}_*)  := \sum_{\ell=0}^{\infty} \tau_1(\mathbf{P}_*^\ell)\, \leq \frac{\Hat{m}}{1-\tau_1(\mathbf{P}_*^{\Hat{m}})}.$
\end{enumerate}

\end{lemma}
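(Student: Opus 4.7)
The plan is to compare the inhomogeneous product $\mathbf{Q}_n := \mathbf{P}_{\delta_1}\cdots\mathbf{P}_{\delta_n}$ against the time-homogeneous product $\mathbf{P}_*^n$ via a one-term-at-a-time telescoping identity, and then exploit the contracting action of $\tau_1(\mathbf{P}_*^\ell)$ on the zero-sum rows thereby produced. Writing $E_k := \mathbf{P}_{\delta_k}-\mathbf{P}_*$ and $\mathbf{Q}_0 := \mathbf{I}$, and using $\bm{\sigma}_*\mathbf{P}_*^n = \bm{\sigma}_*$, I would split
\begin{align*}
\mathbf{y}(n)-\bm{\sigma}_* \;=\; \bigl(\mathbf{y}(0)-\bm{\sigma}_*\bigr)\mathbf{P}_*^n \;+\; \mathbf{y}(0)\bigl[\mathbf{Q}_n-\mathbf{P}_*^n\bigr].
\end{align*}
Property (3) of $\tau_1$ handles the first summand and produces the $\tau_1(\mathbf{P}_*^n)\norm{\mathbf{y}(0)-\bm{\sigma}_*}_1$ contribution in (\ref{eq:marginal-bound}).

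For the second summand I would use the standard telescoping $\mathbf{Q}_n - \mathbf{P}_*^n = \sum_{k=1}^n \mathbf{Q}_{k-1}E_k\mathbf{P}_*^{n-k}$, so that
\begin{align*}
\mathbf{y}(0)\bigl[\mathbf{Q}_n - \mathbf{P}_*^n\bigr] \;=\; \sum_{k=1}^{n} \mathbf{y}(k-1)\, E_k\, \mathbf{P}_*^{n-k}.
\end{align*}
The crucial observation is that $\mathbf{y}(k-1)E_k$ is a zero-sum row vector: it equals $\mathbf{y}(k-1)\mathbf{P}_{\delta_k} - \mathbf{y}(k-1)\mathbf{P}_*$, a difference of two honest probability distributions. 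Property (3) of $\tau_1$ therefore applies to the trailing $\mathbf{P}_*^{n-k}$ and yields $\norm{\mathbf{y}(k-1)E_k\mathbf{P}_*^{n-k}}_1 \le \tau_1(\mathbf{P}_*^{n-k})\,\norm{\mathbf{y}(k-1)E_k}_1$, and the definition (\ref{eq:def-sup-norm}) of $\epsilon$ (using that $\mathbf{y}(k-1)$ is a probability vector) bounds the last factor by $\epsilon$. Summing over $k$ and re-indexing $\ell = n-k$ gives the $\epsilon\sum_{\ell=0}^{n-1}\tau_1(\mathbf{P}_*^\ell)$ contribution and finishes \ref{item:marginal-bound}.

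For \ref{item:limiting-bound} I would let $n\to\infty$ in \ref{item:marginal-bound}. By the result of \cite{anthonisse1977exponential} quoted just before the lemma, some power $\mathbf{P}_*^{\Hat{m}}$ is scrambling, so $\tau_1(\mathbf{P}_*^{\Hat{m}})<1$; submultiplicativity (property 1) then gives $\tau_1(\mathbf{P}_*^{k\Hat{m}+r}) \le \tau_1(\mathbf{P}_*^{\Hat{m}})^k$ for $0\le r<\Hat{m}$. This forces $\tau_1(\mathbf{P}_*^n)\to 0$, killing the initial-condition term, and bounds the tail by a convergent geometric series,
\begin{align*}
\Upsilon(\mathbf{P}_*) \;\le\; \Hat{m}\sum_{k=0}^\infty \tau_1(\mathbf{P}_*^{\Hat{m}})^k \;=\; \frac{\Hat{m}}{1-\tau_1(\mathbf{P}_*^{\Hat{m}})},
\end{align*}
delivering \ref{item:limiting-bound}. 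The hard part will not be any one of these manipulations individually but rather recognizing the zero-sum structure of $\mathbf{y}(k-1)E_k$: without that observation, $\tau_1$ does not contract and the telescoping yields only an additive-in-$n$ perturbation bound instead of the desired geometric-sum one.
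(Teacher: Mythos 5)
Your proposal is correct and follows essentially the same route as the paper: the telescoping identity $\mathbf{Q}_n-\mathbf{P}_*^n=\sum_{k}\mathbf{Q}_{k-1}E_k\mathbf{P}_*^{n-k}$ is just the unrolled form of the paper's one-step recursion \eqref{E-traj}, and the subsequent use of the zero-sum structure, the bound $\norm{\mathbf{y}(k-1)E_k}_1\le\epsilon$, and the scrambling/geometric-series argument for $\Upsilon(\mathbf{P}_*)$ match the paper's proof term for term (up to an immaterial relabeling of the $\delta$ indices).
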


\begin{proof}
The trajectory $(\mathbf{y}(n))_{n>0}$ satisfies $\forall n \geq 1,$
\begin{align}\label{E-traj}
\mathbf{y}(n)\;=\; \mathbf{y}(n-1) \mathbf{P}_* + \mathbf{y}(n-1)(\mathbf{P}_{\delta_{n-1}}-\mathbf{P}_*).
\end{align}
%
%
%
Using \eqref{E-traj} recursively, we have
\begin{align*}
\mathbf{y}(n) 
&\;=\; \mathbf{y}(0) \mathbf{P}_*^n +\sum_{k=1}^n \mathbf{y}(n-k) (\mathbf{P}_{\delta_{n-k}}-\mathbf{P}_*)\mathbf{P}_*^{k-1},
\end{align*}
which gives us
\begin{align}
\label{eq:deviation}
\mathbf{y}(n)-\bm{\sigma}_* & = (\mathbf{y}(0)-\bm{\sigma}_*) \mathbf{P}_*^n	\n \\
& \quad +\sum_{k=1}^n \mathbf{y}(n-k) (\mathbf{P}_{\delta_{n-k}}-\mathbf{P}_*)\mathbf{P}_*^{k-1}.
\end{align}
 Now, taking norms and using the definitions in \eqref{E-tau1}
and \eqref{eq:def-sup-norm}, we obtain 
\begin{align*}
\norm{\mathbf{y}(n)-\bm{\sigma}_*}_1\;\le\; \tau_1(\mathbf{P}_*^n) \norm{\mathbf{y}(0)-\bm{\sigma}_*}_1
+ \epsilon\, \sum_{\ell=0}^{n-1} \tau_1(\mathbf{P}_*^\ell)\,.
\end{align*}
This proves part~\ref{item:marginal-bound} of the lemma. Now, note that
\begin{align}\label{E4.7}
\tau_1(\mathbf{P}_*^k)\;\le\;
\bigl(\tau_1(\mathbf{P}_*^{m})\bigr)^{\lfloor \nicefrac{k}{m}\rfloor}
\end{align}
for any positive integers $k$, $m$.
Since $\tau_1(\mathbf{P}_*^{\Hat{m}})<1$, it follows that  $\lim_{n \to \infty} \tau_1(\mathbf{P}_*^n) = 0$, and
$$\Upsilon(\mathbf{P}_*) \,=\, \sum_{\ell=0}^{\infty} \tau_1(\mathbf{P}_*^\ell)\, \le \frac{\Hat{m}}{1-\tau_1(\mathbf{P}_*^{\Hat{m}})}.$$
Using this in \eqref{eq:marginal-bound}, we have
\begin{align*}
\limsup_{n \to \infty} \norm{\mathbf{y}(n)-\bm{\sigma}_*}_1 
\leq \epsilon\,\Upsilon(\mathbf{P}_*) \leq \frac{\epsilon\,\Hat{m}}{1-\tau_1(\mathbf{P}_*^{\Hat{m}})}\,,
\end{align*}
which proves part~\ref{item:limiting-bound} of the lemma.
\end{proof}

 \section{Simulation Results}
\label{sec:sims}

\begin{figure}[!b]
\begin{center}
\includegraphics[width=3.5in]{./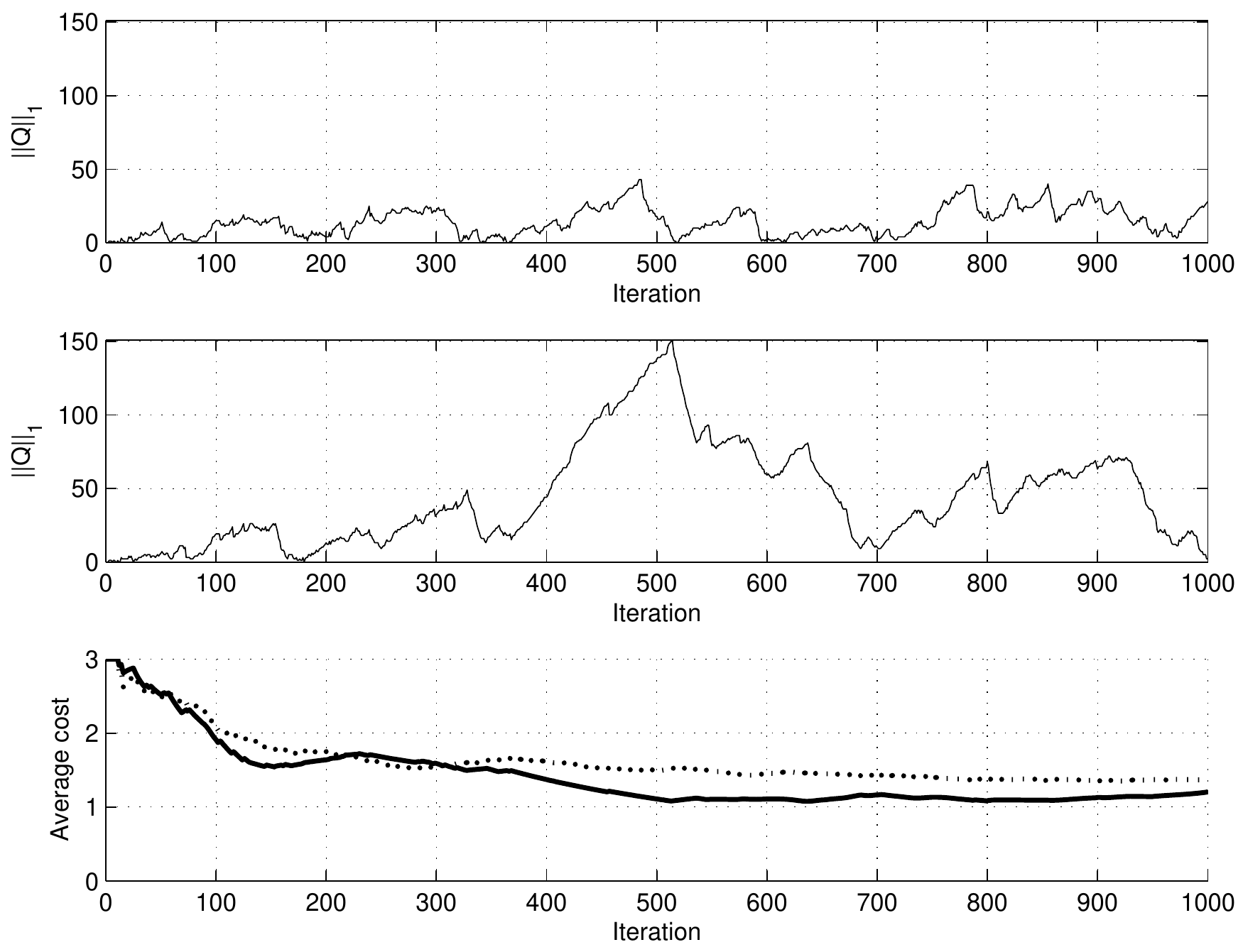}
\caption{The top two plots show the total queue size as a function of time when $\epsilon_s = 0.2$ and $\epsilon_s = 0.05$, respectively. The bottom plot shows the corresponding average costs (with the solid curve for $\epsilon_s = 0.05$). A smaller $\epsilon_s$ yields a lower average cost but has higher queue occupancy.}
\label{fig:simulation}
\end{center}
\end{figure}

\begin{figure}[!b]
\begin{center}
\includegraphics[width=3.5in]{./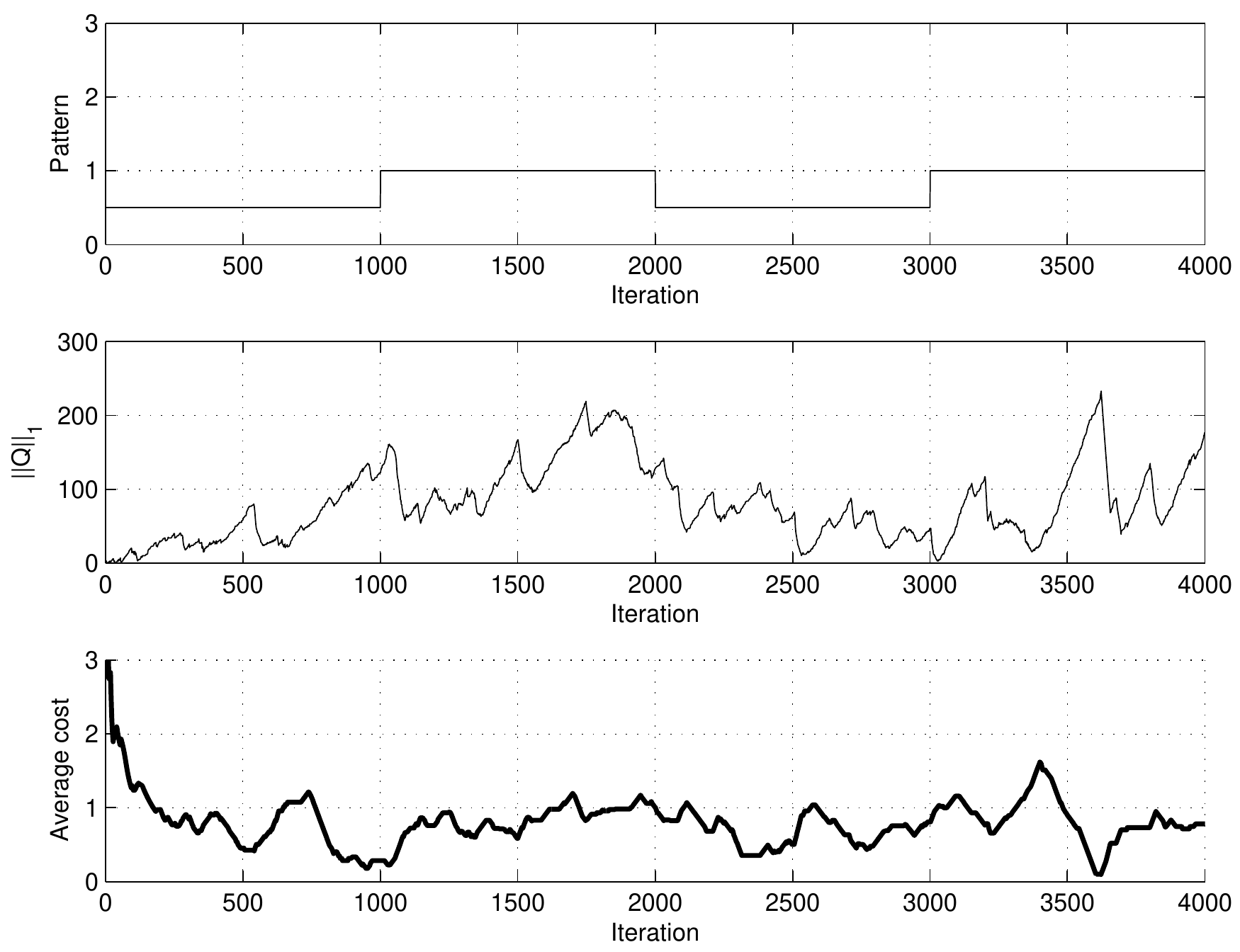}
\caption{The top plot shows a time-varying traffic pattern. The middle plot shows total queue size as a function of time when $\epsilon_s = 0.05$. The bottom plot shows the corresponding short-term averaged cost. The learning algorithm is modified and employs a constant learning rate so as to track the regime changes. Due to a constant learning rate, queue occupancy is a little higher, but the algorithm tracks the changes and stabilizes queue. The short-term average cost is kept small through the regime changes. The larger fluctuation in comparison to the bottommost plot in Figure \ref{fig:simulation} is due to the short-term nature of the average.}
\label{fig:simulation2}
\end{center}
\end{figure}

We present simulations that corroborate the theoretical results in this paper. The setting is as follows. There are five users and three BSs in the system. BS 1 can service users 1, 2, and 5. BS 2 can service users 1, 2, 3, and 4. BS 3 can service users 3, 4, and 5. The Bernoulli arrival rates on each queue (which have to be learned by the algorithm) is 0.1 packets/slot on each mobile-BS service connection. The total arrival rate to the system is thus 0.1 packet/slot $\times$ 10 connections, or 1 packet/slot. A good channel yields a service of 2 packets/slot while a bad channel yields 1 packet/slot. In our correlated fading model, either all channels are bad, or all connections to exactly one BS are good while the others bad. This yields four correlated channel states and all four are equiprobable (the probabilities being unknown to the algorithm). The fading process is independent and identically distributed over time. The activation constraint is that each BS can service at most one mobile per slot. The per BS  switching cost $ \mathsf{C}_0$ and activation cost $ \mathsf{C}_1$ are both taken to be 1.

Figure \ref{fig:simulation} provides the instantaneous queue sizes (first two plots) and time-averaged costs (third plot) for two values of $\epsilon_s$, namely, $0.2$ (first plot) and $0.05$ (second plot). The plots show that a smaller $\epsilon_s$ yields a lower average cost and stabilizes the queue, but has higher queue occupancy.

Figure \ref{fig:simulation2} considers a situation with regime changes (see top plot). A value 0.5 indicates that all instantaneous arrival rates are lowered by a factor 0.5. The parameter $\epsilon_s = 0.05$. The middle plot shows instantaneous total queue occupancy. The bottom plot is a short-term average cost (averaged over the past 200 slots). The algorithm was modified to keep the learning rate for estimating $\hat{\bf \lambda}$ and $\hat{\bf \mu}$ not below a threshold $(0.001)$ to help track regime changes. 
Figure \ref{fig:simulation2} indicates that the queues are stabilized but have a higher occupancy due to the use of a constant learning rate in comparison to the middle plot in Figure \ref{fig:simulation}. But the short-term average cost (bottom plot) is kept small through the regime changes.

\section{Conclusion}
\label{sec:conclusion}
We study the problem of jointly activating base-stations along with
channel allocation, with the objective of minimizing energy costs
(activation + switching) subject to packet queue stability. Our
approach is based on timescale decomposition, consisting of fast-slow
co-evolution of user queues (fast) and base-station activation sets
(slow). We develop a learning-cum-scheduling algorithm that can
achieve an average cost that is arbitrarily close to optimal, and
simultaneously stabilize the user queues (shown using convergence results for inhomogeneous Markov chains).

\section*{Acknowledgements}
This work was partially supported by NSF grants CNS-1017549,
CNS-1161868, CNS-1343383, CNS-1731658 and DMS-1715210, Army Research Office grant W911NF-17-1-0019,
the US DoT supported D-STOP Tier 1 University Transportation Center,
and the Robert Bosch Centre for Cyber Physical Systems.

\appendices

\section{Proof of Theorem~\ref{thm:markov-static-split} }
\begin{proof}[Proof of Theorem~\ref{thm:markov-static-split}]
Consider an ergodic Markov policy $\varphi \in \mathfrak{M}$ such that $\bm{\lambda} \in \Lambda^{\varphi}(\bm{\mu})$. We use the notation $\PP_{\pi}$ to denote probabilities corresponding to the stationary distribution under policy $\varphi$. Let for all $j',j \in \mathcal{J}$,
\begin{align*}
\sigma_j & = \PP_{\pi}\left[ \mathbf{J}(t) = j \right],	\\
P_{j',j} & =  \PP_{\pi}\left[ \mathbf{J}(t) = j \given \mathbf{J}(t-1) = j' \right] \ind{\sigma_j \sigma_{j'} > 0},
\end{align*}
and
\begin{align*}
\alpha_{\mathbf{r}}(j, h) & = \PP_{\pi}\left[ \mathbf{S}(t) = \mathbf{r} \given \mathbf{J}(t) = j, H(t) = h  \right] \\
& \quad \forall \mathbf{r} \in \mathcal{R}(j,h), \; \forall j \in \mathcal{J},  h \in \mathcal{H}.
\end{align*}
We first show that $\mathbf{P} \in \mathcal{W}_{\card{\mathcal{J}}}$. Since $\varphi \in \mathfrak{M}$, the network state process $\{X(t)\}_{t > 0}$, where $X(t) = \left( \mathbf{J}(t-1), \mathbf{Q}(t) \right)$, under policy $\varphi$ is an ergodic Markov chain. Therefore, for any $j',j \in \mathcal{J}$ such that $\sigma_j, \sigma_{j'} > 0$, there exists a constant $k \in \NN$ and $k$ states $(j_0 = j', \mathbf{q}_1), (j_1, \mathbf{q}_2), \dots, (j_{k-1} = j, \mathbf{q}_k) \in \mathcal{J} \times \mathbb{Z}^{M \times n}$ such that for all $1 \leq l \leq k-1$,
\begin{align*}
\PP_{\pi}\left[ X(l+1) = (j_l, \mathbf{q}_{l+1}) \given X(l) = (j_{l-1}, \mathbf{q}_{l}) \right] > 0,
\end{align*}
and
\begin{align*}
\PP_{\pi}\left[ X(l) = (j_{l-1}, \mathbf{q}_{l}) \right] > 0.
\end{align*}
This gives us that
\begin{align*}
P_{j_{l-1},j_l} & =  \PP_{\pi}\left[ \mathbf{J}(l) = j_l \given \mathbf{J}(l-1) = j_{l-1} \right] > 0.
\end{align*}
Therefore, for any $j',j \in \mathcal{J}$ such that $j \neq j'$ and $\sigma_j, \sigma_{j'} > 0$, there exists a $k \in \NN$ such that $P^{k}_{j',j} > 0$. A similar argument shows that $P$ is aperiodic. In addition, we also have $\bm{\sigma} = \bm{\sigma} \mathbf{P}$, from which we can conclude that $\mathbf{P} \mathbf{1}_{\mathcal{J}} = \mathbf{1}_{\mathcal{J}}$. This proves that  $\mathbf{P}$ is a stochastic matrix with a single ergodic class, i.e., $\mathbf{P} \in \mathcal{W}_{\card{\mathcal{J}}}$.

Further, it is easy to verify that
\begin{align*}
\mathsf{C}_0 \sum_{j', j \in \mathcal{J}} \sigma_{j'} P_{j',j} \norm{\left( j' - j \right)^+}_1 + \mathsf{C}_1 \sum_{j \in \mathcal{J}} \sigma_{j} \norm{j}_1 =  C^{\varphi}(\bm{\mu}, \bm{\lambda}),
\end{align*}
and
\begin{align*}
\sum_{j \in \mathcal{J}} \sigma_j \sum_{h \in \mathcal{H}} \mu(h) \sum_{\mathbf{r} \in \mathcal{R}(j,h)} \alpha_{\mathbf{r}}(j,h) \mathbf{r} = \EE_{\pi}\left[ \mathbf{S}(t) \right].
\end{align*}
Since $\bm{\lambda} \in \Lambda^{\varphi}(\bm{\mu})$, we have $\bm{\lambda} < \EE_{\pi}\left[ \mathbf{S}(t) \right]$.
Therefore, for $$\varphi' := \left( \mathbf{P}, \bm{\alpha} = \left\lbrace \bm{\alpha}(j,h) \right\rbrace_{j \in \mathcal{J},h \in \mathcal{H}} \right),$$ we have $\varphi' \in \mathfrak{MS}$, $\bm{\lambda} \in \Lambda^{\varphi'}(\bm{\mu})$ and $C^{\varphi'}(\bm{\mu}, \bm{\lambda}) = C^{\varphi}(\bm{\mu}, \bm{\lambda})$.
\end{proof}

\section{Proofs of  Lemmas~\ref{lem:lp-continuity} and \ref{lem:perturbed-unique-soln}}
\begin{proof}[Proof of Lemma~\ref{lem:lp-continuity}]
Let $\mathcal{F}$ and $\mathcal{D}$ denote the feasible sets of $\mathit{L}$ and its dual respectively. By Theorem 2 in \cite{wets85lp-continuity}, to prove \ref{lem:lp-continuity-opt-value}, it is sufficient to establish that $\mathcal{F}$ and $\mathcal{D}$ are continuous multifunctions on $\RR^d \times \mathcal{S}$. The feasible set of the linear program depends only on $(\bm{\mu}, \bm{\lambda})$ and not on $\mathbf{c}$. By Proposition 6 in \cite{wets85lp-continuity}, $\mathcal{F}$ is continuous on $\mathcal{S}$ if
\begin{enumerate}[label=(\roman*)]
\item the dimension of $\mathcal{F}$ is constant on $\mathcal{S}$, and
\item for any $\left( \bm{\mu}, \bm{\lambda} \right) \in \mathcal{S}$, there exists a neighborhood $\mathcal{V}$ of $\left( \bm{\mu}, \bm{\lambda} \right)$ such that, if a particular inequality constraint 
  is tight (satisfied with equality) for all $x \in \mathcal{F}\left( \bm{\mu}, \bm{\lambda} \right)$, then for any $\left( \bm{\mu}', \bm{\lambda}' \right) \in \mathcal{V}$, the corresponding constraint
  is tight for all $x \in \mathcal{F}\left( \bm{\mu}', \bm{\lambda}' \right)$.
\end{enumerate}
The above two conditions are satisfied if
\begin{enumerate}[label=(\roman*)]
\item the equality constraints are the same for every $\mathcal{F}\left( \bm{\mu}, \bm{\lambda} \right)$, and
\item for any $\left( \bm{\mu}, \bm{\lambda} \right) \in \mathcal{S}$, no inequality constraint is tight for every $x \in \mathcal{F}\left( \bm{\mu}, \bm{\lambda} \right)$. 
\end{enumerate}
These can be verified to be true for all $\left( \bm{\mu}, \bm{\lambda} \right) \in \mathcal{S}$. Therefore,  $\mathcal{F}$ is continuous on $\mathcal{S}$.

According to Corollary 11 in \cite{wets85lp-continuity}, $\mathcal{D}$ is continuous on $\RR^d \times \mathcal{S}$ if $\mathcal{F}$ is bounded
. This is again true since any feasible solution is a set of probability mass functions. Therefore, by Theorem 2 in \cite{wets85lp-continuity}, $C^*$ is continuous on $\RR^d \times \mathcal{S}$.

To prove \ref{lem:lp-continuity-opt-set}, i.e., that the optimal solution set $\mathcal{O}^*_{\mathbf{c}}(\cdot)$ is continuous on $\mathcal{U}_{\mathbf{c}}$, we first note that $\mathcal{O}^*_{\mathbf{c}}\left( \bm{\mu}, \bm{\lambda} \right)$ is the feasible set for a family of linear constraints, which is same as that for $\mathcal{F}\left( \bm{\mu}, \bm{\lambda} \right)$, in addition to the equality constraint
\begin{align*}
\mathbf{c} \cdot\left( \bm{\sigma}, \bm{\beta} \right)  = C^*_{\mathbf{c}}\left( \bm{\mu}, \bm{\lambda} \right).
\end{align*}
By definition, the set $\mathcal{O}^*_{\mathbf{c}}\left( \bm{\mu}, \bm{\lambda} \right)$ is non-empty for any $\left( \bm{\mu}, \bm{\lambda} \right) \in \mathcal{S}$, and is a singleton for any $\left( \bm{\mu}, \bm{\lambda} \right) \in \mathcal{U}_{\mathbf{c}}$. Now consider any $\left( \bm{\mu}, \bm{\lambda} \right) \in \mathcal{U}_{\mathbf{c}}$. Using \ref{lem:lp-continuity-opt-value} and Theorem 3.1 in \cite{davidson96stability-extreme}, the extreme point set of $\mathcal{O}^*_{\mathbf{c}}$ is continuous at $\left( \bm{\mu}, \bm{\lambda} \right)$. Since $\mathcal{O}^*_{\mathbf{c}}$ is convex and is a singleton at $\left( \bm{\mu}, \bm{\lambda} \right)$, it follows that $\mathcal{O}^*_{\mathbf{c}}$ is continuous at $\left( \bm{\mu}, \bm{\lambda} \right)$.
\end{proof}

\begin{proof}[Proof of Lemma~\ref{lem:perturbed-unique-soln}]
For any $(\bm{\mu}, \bm{\lambda}) \in \mathcal{S}$, it holds that $(\bm{\mu}, \bm{\lambda}) \in \mathcal{U}_{\mathbf{c}^{\epsilon_p}}$ if the vector $\mathbf{c}^{\epsilon_p}$ is not perpendicular to any of the faces of the polytope given by the feasible set of $\mathit{L}_{\mathbf{c}^{\epsilon_p}}(\bm{\mu}, \bm{\lambda})$. For any $1 \leq k \leq d-1$, consider any $k$-dimensional face of this polytope. The probability that the vector $\mathbf{c}^{\epsilon_p}$ lies in the $d-k$ dimensional space orthogonal to this face is zero. Since there are only a finite number of faces, by the union bound, we have
$$\PP\left[(\bm{\mu}, \bm{\lambda}) \notin \mathcal{U}_{\mathbf{c}^{\epsilon_p}}  \given \mathbf{J}(0), \mathbf{Q}(1) \right] = 0.$$
\end{proof}

\section{Proof of Theorem~\ref{thm:static-split-optimal}}
We use continuity of the linear program $\mathit{L}$ (Lemma~\ref{lem:lp-continuity}) to prove  part~\ref{item:cost-opt1} of Theorem~\ref{thm:static-split-optimal}. To prove part~\ref{item:stab1} of Theorem~\ref{thm:static-split-optimal}, we show that the long term Lyapunov drift is negative.
\subsection{Cost Optimality}
Part~\ref{item:cost-opt1} of the theorem follows easily from the continuity of the optimal value of the linear program $\mathit{L}$.
Since $\mathbf{P}(\bm{\sigma}^*, \epsilon_s)$ has a single ergodic class, the marginal distribution of the Markov chain $\left\lbrace \mathbf{J}(t) \right\rbrace_{t \geq 0}$ converges to $\bm{\sigma}^*$. This gives us
\begin{align*}
& \limsup_{t \to \infty} \EE\left[ C(t) \given[\big]  \mathbf{J}(0), \mathbf{Q}(1) \right]	\\	
& \leq (1 - \epsilon_s) \sum_{j' \in \mathcal{J}} \sigma^*_{j'} \mathsf{C}_1  \norm{j'}_1 + \epsilon_s \left( M \mathsf{C}_0 + \sum_{j \in \mathcal{J}} \sigma^*_{j} \mathsf{C}_1  \norm{j}_1 \right) 	\\
 & \leq C^*_{\mathbf{c}^{0}}\left( \bm{\mu}, \bm{\lambda}+\epsilon_g \right) +  M \mathsf{C}_0 \epsilon_s	\\
 & \leq C^*_{\mathbf{c}^0}\left( \bm{\mu}, \bm{\lambda} \right) +  M \mathsf{C}_0 \epsilon_s + \gamma(\epsilon_g),
\end{align*}
 for some increasing function $\gamma(\cdot)$ such that $\lim_{\epsilon_g \to 0} \gamma(\epsilon_g) = 0$. This follows  from the continuity of $C^*_{\mathbf{c}^0}\left( \bm{\mu}, \cdot \right)$ (part~\ref{lem:lp-continuity-opt-value} of Lemma~\ref{lem:lp-continuity}). Therefore, for $\kappa =  M \mathsf{C}_0$,
\begin{align*}
\limsup_{T \to \infty} \frac{1}{T} \sum_{t=1}^T &\EE\left[ C(t) \given[\big]  \mathbf{J}(0), \mathbf{Q}(1) \right]	\\	
& \leq C^*_{\mathbf{c}^0}(\bm{\mu}, \bm{\lambda}) + \kappa \epsilon_s + \gamma(\epsilon_g)	\\
& \leq C^{\mathfrak{M}}(\bm{\mu}, \bm{\lambda}) + \kappa \epsilon_s + \gamma(\epsilon_g),
\end{align*}
where the last inequality follows from \eqref{eq:lp-opt}. This proves part~\ref{item:cost-opt1} of Theorem~\ref{thm:static-split-optimal}.

\subsection{Stability: Negative Lyapunov Drift}
We show stability in the sense of Definition~\ref{def:stability} by showing that the quadratic Lyapunov drift for the Markov policy $\varphi(\bm{\mu}, \bm{\lambda}+\epsilon_g, \epsilon_s)$ is negative outside a finite set. Let $V(\mathbf{q}) := \sum_{m,u} q_{m,u}^2$ be the Lyapunov function. For any $T > 0$, $t > 0$, let $\Delta_T(t) := V(\mathbf{Q}(t+T)) - V(\mathbf{Q}(t))$ be the $T$-step Lyapunov drift. Due to Foster's theorem, it is sufficient to prove the following lemma to prove part~\ref{item:stab1} of Theorem~\ref{thm:static-split-optimal}.
\begin{lemma}
\label{lem:static-split-lyapunov-drift}
For any $\bm{\mu}, \bm{\lambda}$, there exists constants $T$, $B$ such that for any $t\in \NN$,
\begin{align*}
& \EE_{\varphi(\bm{\mu}, \bm{\lambda}+\epsilon_g, \epsilon_s)}\left[\Delta_T(t) \given[\big]  \mathbf{J}(t-1), \mathbf{Q}(t) \right]		\\ 	
& \leq B T -  \epsilon_g T \sum_{m,u} Q_{m,u}(t).
\end{align*}
\end{lemma}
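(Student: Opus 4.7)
The plan is to prove the $T$-step Lyapunov drift bound by comparing the Max-Weight allocation against the static-split allocation obtained from $\bm{\alpha}^*$ (via (\ref{eq:change-var})), and by exploiting the explicit geometric mixing of the activation chain driven by $\mathbf{P}(\bm{\sigma}^*,\epsilon_s) = \epsilon_s \mathbf{1}\bm{\sigma}^* + (1-\epsilon_s)\mathbf{I}$. First, from the queue recursion and the uniform bounds $\bar{\mathsf{A}}$, $\bar{\mathsf{R}}$, I would derive a one-step drift inequality of the form $V(\mathbf{Q}(s{+}1)) - V(\mathbf{Q}(s)) \le 2\mathbf{Q}(s)\cdot(\mathbf{A}(s)-\mathbf{S}(s)) + K_1$ for a constant $K_1$ depending only on $M$, $n$, $\bar{\mathsf{A}}$, $\bar{\mathsf{R}}$. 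Summing over $s=t,\dots,t{+}T{-}1$ and using $|Q_{m,u}(s) - Q_{m,u}(t)| \le T(\bar{\mathsf{A}}+\bar{\mathsf{R}})$ to replace $\mathbf{Q}(s)$ by $\mathbf{Q}(t)$ introduces an additive error of order $T^2$ (independent of $\mathbf{Q}(t)$).

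The Max-Weight rule guarantees $\mathbf{Q}(s)\cdot\mathbf{S}(s) \ge \mathbf{Q}(s)\cdot \mathbf{r}$ for every $\mathbf{r}\in\mathcal{R}(\mathbf{J}(s),H(s))$, so in particular for the random $\mathbf{r}$ drawn from the static-split kernel $\bm{\alpha}^*(\mathbf{J}(s),H(s))$. After the $\mathbf{Q}(s)\to\mathbf{Q}(t)$ substitution, this yields
\begin{align*}
\EE\bigl[\mathbf{Q}(t)\cdot\mathbf{S}(s)\bigm|\mathcal{F}_s,\mathbf{J}(s),H(s)\bigr] \;\ge\; \mathbf{Q}(t)\cdot \bar{\mathbf{R}}_{\mathbf{J}(s),H(s)} - O(T),
\end{align*}
where $\bar{\mathbf{R}}_{j,h} := \sum_{\mathbf{r}\in\mathcal{R}(j,h)} \alpha^*_{\mathbf{r}}(j,h)\mathbf{r}$. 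Averaging over $H(s)$ (i.i.d.\ with law $\bm{\mu}$) gives $\bar{\mathbf{R}}_{\mathbf{J}(s)} := \sum_h \mu(h)\bar{\mathbf{R}}_{\mathbf{J}(s),h}$, so that $\sum_{s=t}^{t+T-1}\EE[\mathbf{Q}(t)\cdot\mathbf{S}(s)\mid\mathbf{J}(t{-}1),\mathbf{Q}(t)] \ge \mathbf{Q}(t)\cdot\sum_{s=t}^{t+T-1}\EE[\bar{\mathbf{R}}_{\mathbf{J}(s)}\mid\mathbf{J}(t{-}1)] - O(T^2)$.

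The activation chain has the explicit $k$-step kernel $\mathbf{P}(\bm{\sigma}^*,\epsilon_s)^k = (1-\epsilon_s)^k \mathbf{I} + (1-(1-\epsilon_s)^k)\mathbf{1}\bm{\sigma}^*$, so that $\EE[\bar{\mathbf{R}}_{\mathbf{J}(t+k-1)}\mid\mathbf{J}(t{-}1)=j']$ equals $(1-\epsilon_s)^k \bar{\mathbf{R}}_{j'} + (1-(1-\epsilon_s)^k)\sum_j \sigma^*_j \bar{\mathbf{R}}_j$. Summing the geometric series over $s$ gives
\begin{align*}
\sum_{s=t}^{t+T-1} \EE\bigl[\bar{\mathbf{R}}_{\mathbf{J}(s)}\bigm|\mathbf{J}(t{-}1)\bigr] \;\ge\; T\!\sum_{j} \sigma^*_j \bar{\mathbf{R}}_j \;-\; \tfrac{1-\epsilon_s}{\epsilon_s}\bar{\mathsf{R}}\mathbf{1},
\end{align*}
componentwise, where the second term is a mixing residual independent of $T$. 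By the LP feasibility constraint (\ref{eq:pseudo-stability}) applied with arrival rate $\bm{\lambda}+\epsilon_g$, $\sum_j \sigma^*_j \bar{\mathbf{R}}_j \ge \bm{\lambda}+\epsilon_g$. Combining all pieces,
\begin{align*}
\EE[\Delta_T(t)\mid \mathbf{J}(t{-}1),\mathbf{Q}(t)] \le -2\epsilon_g T\!\sum_{m,u}\! Q_{m,u}(t) + C_{\mathrm{mix}}\!\sum_{m,u}\! Q_{m,u}(t) + K_2 T^2 + TK_1,
\end{align*}
with $C_{\mathrm{mix}}$ a constant proportional to $\bar{\mathsf{R}} Mn/\epsilon_s$.

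Finally, I would choose $T$ large enough that $\epsilon_g T \ge C_{\mathrm{mix}}$ (so the mixing residual is absorbed into the negative drift term), leaving a remaining bound $K_2 T^2 + K_1 T - \epsilon_g T\sum Q_{m,u}(t)$, which yields the lemma with $B := K_1 + K_2 T$. The main obstacle in this proof is dealing with the coupling between the queue state and the non-stationary activation chain over the window $[t,t+T{-}1]$: one must simultaneously (i) justify replacing $\mathbf{Q}(s)$ by $\mathbf{Q}(t)$ while preserving the Max-Weight inequality, and (ii) control the transient in $\mathbf{J}(s)$ so that its empirical service contribution tracks $\sum_j \sigma^*_j \bar{\mathbf{R}}_j$ uniformly in the starting state $\mathbf{J}(t{-}1)$. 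Both are handled because the error of (i) is $O(T^2)$ (a constant in $\mathbf{Q}$) and the error of (ii) is a $\mathbf{Q}$-linear term with a constant coefficient, which becomes negligible once $T\epsilon_g$ is taken large.
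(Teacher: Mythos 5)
Your proposal is correct and follows essentially the same route as the paper: a $T$-step quadratic drift bound, comparison of Max-Weight against the static-split allocation $\bm{\alpha}^*$ from the LP solution, replacement of $\mathbf{Q}(t+l)$ by $\mathbf{Q}(t)$ at $O(T^2)$ cost, and use of the activation chain's convergence to $\bm{\sigma}^*$ together with the $\epsilon_g$ slack in the LP constraint to extract the negative drift. The only cosmetic difference is that you bound the mixing residual explicitly via the closed form of $\mathbf{P}(\bm{\sigma}^*,\epsilon_s)^k$ (a geometric series of order $1/\epsilon_s$) and then take $T \gtrsim 1/(\epsilon_s\epsilon_g)$, whereas the paper simply chooses $T$ so that the cumulative total-variation gap is at most $T\epsilon_g/(2\bar{\mathsf{R}})$; these are the same argument.
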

\begin{proof}
Since the marginal distribution of the Markov chain $\left\lbrace \mathbf{J}(t) \right\rbrace_{t \geq 0}$ converges to $\bm{\sigma}^*$, we can choose a constant $T \in \NN$ such that
\begin{align}
\label{eq:pick-T}
\max_{j' \in \mathcal{J}} \sum_{l=0}^{T-1} \sum_{j \in \mathcal{J}} \card[\Big]{\PP\left[ \mathbf{J}(l) = j \given[\big] \mathbf{J}(0) = j' \right] - \sigma^*_j} \leq \frac{T \epsilon_g}{2 \bar{\mathsf{R}}}.
\end{align}
Since we have bounded arrivals and service, for $B' = n M \max\{\bar{\mathsf{A}}^2, \bar{\mathsf{R}}^2\}$, the $T$-step drift satisfies
\begin{align*}
\Delta_T(t) & \leq B' T + 2 \sum_{l=0}^{T-1} ( \mathbf{A}(t+l) - \mathbf{S}(t+l) ) \cdot \mathbf{Q}(t+l).
\end{align*}
Let $\bm{\alpha}^*$ be the set of convex combinations related to the unique optimal solution  $(\bm{\sigma}^*, \bm{\beta}^*)$ through \eqref{eq:change-var}. Since the policy $\varphi(\bm{\mu}, \bm{\lambda}+\epsilon_g, \epsilon_s)$ allocates rates according to the Max-Weight rule, we have
\begin{align*}
& \mathbf{S}(t+l)  \cdot \mathbf{Q}(t+l)	\\
& = \max_{\mathbf{r} \in \mathcal{R}(\mathbf{J}(t+l),H(t+l))} \mathbf{r}  \cdot \mathbf{Q}(t+l) 	\\
& \geq \sum_{\mathbf{r} \in \mathcal{R}(\mathbf{J}(t+l),H(t+l))} \alpha^*_{\mathbf{r}}(\mathbf{J}(t+l),H(t+l)) \mathbf{r}  \cdot \mathbf{Q}(t+l),
\end{align*}
for any $l \in \{0, 1, \dots, T-1\}$.
Using the above inequality and that the arrivals and service per time-slot are bounded at every queue, for $B = B' + n M \max\{\bar{\mathsf{A}}^2, \bar{\mathsf{R}}^2\}(T-1)$, we obtain
\begin{eqnarray*}
\lefteqn{\Delta_T(t)}\\
& \leq & B T + 2 \sum_{l=0}^{T-1}  \mathbf{A}(t+l) \cdot \mathbf{Q}(t)	\\
& & -  2 \sum_{l=0}^{T-1} \sum_{\mathbf{r} \in \mathcal{R}(\mathbf{J}(t+l),H(t+l))} \hspace*{-.2in} \alpha^*_{\mathbf{r}}(\mathbf{J}(t+l),H(t+l)) \mathbf{r}  \cdot \mathbf{Q}(t).
\end{eqnarray*}
Taking averages in the above inequality, we get
\begin{align}
\EE\left[\Delta_T(t) \given[\big]  \mathbf{J}(t-1), \mathbf{Q}(t) \right]	
 & \leq B T + 2 (T \bm{\lambda} - Z) \cdot \mathbf{Q}(t) \label{eq:avg-drift},
\end{align}
where
\begin{align*}
Z := \sum_{l=0}^{T-1}  \EE\left[ \sum_{\mathbf{r} \in \mathcal{R}(\mathbf{J}(t+l),H(t+l))} \hspace*{-.4in} \alpha^*_{\mathbf{r}}(\mathbf{J}(t+l),H(t+l)) \mathbf{r}  \given[\bigg]  \mathbf{J}(t-1) \right].
\end{align*}
Now, for any $l \in [0,T-1]$, let
\begin{align*}
y(t+l)_j & := \PP\left[ \mathbf{J}(t+l) = j \given[\big] \mathbf{J}(t-1) \right].
 \end{align*}
 Then,
\begin{align*}
& \sum_{l=0}^{T-1}  \EE\left[ \sum_{\mathbf{r} \in \mathcal{R}(\mathbf{J}(t+l),H(t+l))} \hspace*{-.1in} \alpha^*_{\mathbf{r}}(\mathbf{J}(t+l),H(t+l)) \mathbf{r}  \given[\bigg]  \mathbf{J}(t-1) \right]	\\
& = \sum_{l=0}^{T-1}  \sum_{j \in \mathcal{J}} y(t+l)_j  \left( \sum_{h \in \mathcal{H}} \mu(h) \sum_{\mathbf{r} \in \mathcal{R}(j,h)} \alpha^*_{\mathbf{r}}(j,h) \mathbf{r} \right)	\\
& \geq T   \left( \sum_{j \in \mathcal{J}} \sigma^*_j \sum_{h \in \mathcal{H}} \mu(h) \sum_{\mathbf{r} \in \mathcal{R}(j,h)}  \alpha^*_{\mathbf{r}}(j,h) \mathbf{r} \right)	\\
& \quad - \sum_{l=0}^{T-1} \norm{\mathbf{y}(t+l) - \bm{\sigma}^*}_1 \bar{\mathsf{R}}	\\
& \geq T (\bm{\lambda} + \epsilon_g/2 ),
\end{align*}
where the last inequality follows from \eqref{eq:pick-T} and the fact that any solution to the linear program $\mathit{L}_{\mathbf{c}^{0}}\left( \bm{\mu}, \bm{\lambda}+\epsilon_g \right)$ satisfies its constraints
\begin{align*}
\sum_{j \in \mathcal{J}} \sigma^*_j \sum_{h \in \mathcal{H}} \mu(h) \sum_{\mathbf{r} \in \mathcal{R}(j,h)} \alpha^*_{\mathbf{r}}(j,h) \mathbf{r} \geq \bm{\lambda} + \epsilon_g.
\end{align*}
Substituting this inequality in \eqref{eq:avg-drift}, we get the required result
\begin{align*}
\EE\left[\Delta_T(t)  \given[\big]  \mathbf{J}(t-1), \mathbf{Q}(t) \right]	
& \leq B T -  T \epsilon_g \sum_{m,u} Q_{m,u}(t).
\end{align*}
\end{proof}

\section{Proof of Theorem~\ref{thm:algo-optimal}}
As in the proof of Theorem~\ref{thm:static-split-optimal}, we use continuity of the linear program $\mathit{L}$ (Lemma~\ref{lem:lp-continuity}) to prove  part~\ref{item:cost-opt2} of Theorem~\ref{thm:algo-optimal}. To prove stability (part~\ref{item:stab2} of Theorem~\ref{thm:algo-optimal}), we show that the long term Lyapunov drift is negative outside a finite set given the event
\begin{align}
\label{eq:unique-soln-event}
\mathcal{E}^0 := \left( \bm{\mu}, \bm{\lambda}+\epsilon_g \right) \in \mathcal{U}_{\mathbf{c}^{\epsilon_p}}.
\end{align}
This negative Lyapunov drift, as in the Foster's theorem for time-homogeneous Markov chains, is then used to prove stability as per Definition~\ref{def:stability}.

\subsection{Cost Optimality}
According to the BS activation rule of policy $\phi(\epsilon_p, \epsilon_s, \epsilon_g)$, we have $\mathbf{J}(t) = (1-E_l(t))\tilde{\mathbf{J}}(t) + E_l(t) \mathbf{1}$ and $\mathbf{J}(t) \geq \tilde{\mathbf{J}}(t),$ which in turn implies that
\begin{align*}
& \norm{\left( \mathbf{J}(t-1) - \mathbf{J}(t) \right)^+}_1 	\\
& \leq M E_l(t-1) + \norm{\left( \mathbf{J}(t-1) - \mathbf{J}(t)  \right)^+}_1 (1-E_l(t-1))	\\
& \leq M E_l(t-1) + \norm{\left( \tilde{\mathbf{J}}(t-1) - \tilde{\mathbf{J}}(t) \right)^+}_1,
\end{align*}
where the last inequality can be checked via an straightforward case-by-case analysis of $E_l(t-1)$ and $E_l(t)$. Let
$$\mathbf{z}(t) := \left( \mathds{1}\left\lbrace \tilde{\mathbf{J}}(t) = j \right\rbrace \right)_{j \in \mathcal{J}},$$
$$\mathbf{y}(t) := \left( \mathds{1}\left\lbrace \mathbf{J}(t) = j \right\rbrace \right)_{j \in \mathcal{J}}.$$
Then, the expected cost at time $t$ under policy $\phi(\epsilon_p, \epsilon_s, \epsilon_g)$ is given by
\begin{align}
\label{eq:policy-cost}
& \EE\left[ C(t) \given[\big]  \mathbf{J}(0), \mathbf{Q}(1) \right]	\n	\\
& = \EE\left[ \mathsf{C}_0 \norm{\left( \mathbf{J}(t-1) - \mathbf{J}(t) \right)^+}_1 + \mathsf{C}_1 \norm{\mathbf{J}(t)}_1 \given[\Big]  \mathbf{J}(0) \right] 	\n	\\
& \leq \mathsf{C}_0 \left( M \epsilon_l(t-1) + \EE\left[ \norm{\left( \tilde{\mathbf{J}}(t-1) - \tilde{\mathbf{J}}(t) \right)^+}_1 \given[\Big]  \mathbf{J}(0) \right] \right) 	\n	\\
& \quad + \mathsf{C}_1 \EE\left[ \norm{\mathbf{J}(t)}_1 \given[\Big]  \mathbf{J}(0) \right]	\n	\\	
& = \mathsf{C}_0 M \epsilon_l(t-1)	\n	\\
& \quad + \mathsf{C}_0 \sum_{j', j \in \mathcal{J}} \EE\left[ z(t-1)_{j'} z(t)_{j}  \given[\big]  \mathbf{J}(0) \right] \norm{\left( j' - j \right)^+}_1 	\n	\\
& \quad + \mathsf{C}_1  \sum_{j \in \mathcal{J}} \EE[y(t)_j \given[\big]  \mathbf{J}(0)] \norm{j}_1.
\end{align}
In the rest of the proof, we will suppress the dependence on the initial state $\mathbf{J}(0)$ for convenience of notation.

Let $( \hat{\bm{\mu}}(t), \hat{\bm{\lambda}}(t) )$ be the estimated parameters at the beginning of time-slot $t$. Observe that $\hat{\bm{\mu}}(\cdot)$, $\hat{\bm{\lambda}}(\cdot)$, and consequently $\hat{\bm{\sigma}}(\cdot)$ and $\hat{\bm{\beta}}(\cdot)$, are modified only at times $t$ when $E_l(t)=1$.
Now, consider a sample path that fixes $(E_l(\cdot),  \hat{\bm{\mu}}(\cdot), \hat{\bm{\lambda}}(\cdot))$. Conditioned on this sample path, the process ${\bm{z}}(\cdot)$ is a time-inhomogeneous Markov chain with transition probability matrix $\mathbf{P}(\hat{\bm{\sigma}}(l), \epsilon_s)$ at time $l$. Hence
\begin{align*}
& \EE\left[ \mathbf{z}(t) \given[\big] (E_l(\cdot), \hat{\bm{\mu}}(\cdot), \hat{\bm{\lambda}}(\cdot)) \right]	\\
& = \EE\left[ \mathbf{z}(0) \prod_{l=1}^t \mathbf{P}(\hat{\bm{\sigma}}(l), \epsilon_s) \given[\big] (E_l(\cdot), \hat{\bm{\mu}}(\cdot), \hat{\bm{\lambda}}(\cdot)) \right],
\end{align*}
which when unconditioned yields
$$
\EE\left[ \mathbf{z}(t) \right] = \EE\left[ \mathbf{z}(0) \prod_{l=1}^t \mathbf{P}(\hat{\bm{\sigma}}(l), \epsilon_s) \right].
$$
Given $\mathcal{E}^0$ as defined in \eqref{eq:unique-soln-event}, let  $(\bm{\sigma}^*, \bm{\beta}^*) \in \mathcal{O}^*_{\mathbf{c}^{\epsilon_p}}\left( \bm{\mu}, \bm{\lambda}+\epsilon_g \right)$ be the unique solution to the linear program $\mathit{L}_{\mathbf{c}^{\epsilon_p}}\left( \bm{\mu}, \bm{\lambda}+\epsilon_g \right)$.
Since  $\lim_{t \to \infty} \sum_{s=1}^{t} E_l(s) \overset{\text{a.s.}}{=} \infty$, we have $\lim_{t \to \infty} \bigl( \hat{\bm{\mu}}(t), \hat{\bm{\lambda}}(t) \bigr) \overset{\text{a.s.}}{=} \left( \bm{\mu}, \bm{\lambda} \right)$ and from part~\ref{lem:lp-continuity-opt-set} of Lemma~\ref{lem:lp-continuity}, $\lim_{t \to \infty} \hat{\bm{\sigma}}(t) \overset{\text{a.s.}}{=} \bm{\sigma}^*$ and $\lim_{t \to \infty} \mathbf{P}(\hat{\bm{\sigma}}(t), \epsilon_s) \overset{\text{a.s.}}{=} \mathbf{P}(\bm{\sigma}^*, \epsilon_s)$. Furthermore, using Lemma~\ref{lem:inhomo-dtmc-convergence}\ref{item:limiting-bound} and the limit law under a sample path $(E_l(\cdot), \hat{\bm{\mu}}(\cdot), \hat{\bm{\lambda}}(\cdot))$ with all these properties, we also have
$$\lim_{t \to \infty} \mathbf{z}(0) \prod_{l=1}^t \mathbf{P}(\hat{\bm{\sigma}}(l), \epsilon_s) \overset{\text{a.s.}}{=} \bm{\sigma}^*,$$
which gives us $\lim_{t \to \infty} \EE\left[ \mathbf{z}(t) \right] = \bm{\sigma}^*$ by the bounded convergence theorem, and
$$\lim_{t \to \infty} \EE[\mathbf{y}(t)] = \lim_{t \to \infty} (1-\epsilon_l(t)) \EE\left[ \mathbf{z}(t) \right] + \epsilon_l(t) \bm{\eta} = \bm{\sigma}^*.$$
Similarly, for any $j', j \in \mathcal{J}$,
\begin{align*}
\lim_{t \to \infty} \EE\left[ z(t-1)_{j'} z(t)_j \right] & = \lim_{t \to \infty} \EE\left[ z(t-1)_{j'} \mathbf{P}(\hat{\bm{\sigma}}(t), \epsilon_s)_{j',j} \right]	\\
 & = \EE\left[ \lim_{t \to \infty} z(t-1)_{j'} \mathbf{P}(\hat{\bm{\sigma}}(t), \epsilon_s)_{j',j} \right]	\\
 & = \lim_{t \to \infty} \EE\left[ z(t-1)_{j'} \right] \mathbf{P}(\bm{\sigma}^*, \epsilon_s)_{j',j}	\\
 & = \sigma^*_{j'} \mathbf{P}(\bm{\sigma}^*, \epsilon_s)_{j',j},
\end{align*}
where the second equality is once again due to the bounded convergence theorem. Applying these along with Lemma~\ref{lem:perturbed-unique-soln} to \eqref{eq:policy-cost} yields
\begin{align*}
& \limsup_{t \to \infty} \EE\left[ C(t) \given[\big]  \mathbf{J}(0), \mathbf{Q}(1) \right]	\\	
& \leq \limsup_{t \to \infty} \mathsf{C}_0 \sum_{j', j \in \mathcal{J}} \EE\left[ z(t-1)_{j'} z(t)_j \right]
 \norm{\left( j' - j \right)^+}_1	\\
 & \quad + \limsup_{t \to \infty} \mathsf{C}_1  \sum_{j \in \mathcal{J}} \EE[y(t)_j] \norm{j}_1	\\
& = \mathsf{C}_0 \sum_{j', j \in \mathcal{J}} \sigma^*_{j'} \mathbf{P}(\bm{\sigma}^*, \epsilon_s)_{j',j} \norm{\left( j' - j \right)^+}_1 + \mathsf{C}_1 \sum_{j \in \mathcal{J}} \sigma^*_{j} \norm{j}_1	\\
 & \leq C^*_{\mathbf{c}^{\epsilon_p}}\left( \bm{\mu}, \bm{\lambda}+\epsilon_g \right) +  M \mathsf{C}_0 \epsilon_s	\\
 & \leq C^*_{\mathbf{c}^0}\left( \bm{\mu}, \bm{\lambda}+\epsilon_g \right) + \sqrt{\card{\mathcal{H}}+1}  \mathsf{C}_1 \epsilon_p +  M \mathsf{C}_0 \epsilon_s	\\
 & \leq C^*_{\mathbf{c}^0}\left( \bm{\mu}, \bm{\lambda} \right) + \kappa(\epsilon_p + \epsilon_s) + \gamma(\epsilon_g),
\end{align*}
 for some increasing function $\gamma(\cdot)$ such that $\lim_{\epsilon_g \to 0} \gamma(\epsilon_g) = 0$ and $\kappa = \max(\sqrt{\card{\mathcal{H}}+1}  \mathsf{C}_1, M \mathsf{C}_0)$.
 This gives us
\begin{align*}
& \limsup_{T \to \infty} \frac{1}{T} \sum_{t=1}^T \EE\left[ C(t) \given[\big]  \mathbf{J}(0), \mathbf{Q}(1) \right]	\\
& \leq C^*_{\mathbf{c}^0}(\bm{\mu}, \bm{\lambda}) + \kappa (\epsilon_p + \epsilon_s) + \gamma(\epsilon_g)	\\
& \leq C^{\mathfrak{M}}(\bm{\mu}, \bm{\lambda}) + \kappa (\epsilon_p + \epsilon_s) + \gamma(\epsilon_g),
\end{align*}
where the last inequality follows from \eqref{eq:lp-opt}. This proves part~\ref{item:cost-opt2} of Theorem~\ref{thm:algo-optimal}.

\subsection{Stability: Negative Lyapunov Drift}
Similar to Theorem~\ref{thm:static-split-optimal}, we show in Lemma~\ref{lem:lyapunov-drift} that the long term Lyapunov drift is negative outside a finite set. But unlike in Theorem~\ref{thm:static-split-optimal}, this negative drift condition holds only after a random time that has bounded second moment.
\begin{lemma}
\label{lem:lyapunov-drift}
For any $\bm{\mu}, \bm{\lambda}$, there exists constants $T$, $B$ and a random time $\Gamma$ such that $\EE\left[ \Gamma^2 \given \mathbf{J}(0), \mathbf{Q}(1) \right] < \infty$, and for any $t > \Gamma$,
\begin{align}
\label{eq:lyapunov-drift}
& \EE_{\phi(\epsilon_p, \epsilon_s, \epsilon_g)}\left[\Delta_T(t) \given[\big] \tilde{\mathbf{J}}(t-1), \mathbf{Q}(t), \mathbf{J}(0), \mathbf{Q}(1), \Gamma, \mathcal{E}^0 \right] 	\n	\\	
& \leq B T -  \epsilon_g T \sum_{m,u} Q_{m,u}(t).
\end{align}
\end{lemma}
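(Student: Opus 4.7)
The plan is to define $\Gamma$ as the first time after which the empirical estimates $(\hat{\bm{\mu}}(t),\hat{\bm{\lambda}}(t))$ are within a prescribed neighborhood $\mathcal{N}$ of the true parameters $(\bm{\mu},\bm{\lambda})$, where $\mathcal{N}$ is chosen small enough that (on $\mathcal{E}^0$) Lemma~\ref{lem:lp-continuity}\ref{lem:lp-continuity-opt-set} forces the LP solution $\hat{\bm{\sigma}}(t)$ to lie in a neighborhood of the unique optimizer $\bm{\sigma}^*$ of $\mathit{L}_{\mathbf{c}^{\epsilon_p}}(\bm{\mu},\bm{\lambda}+\epsilon_g)$, and in turn
\begin{align*}
\norm{\mathbf{P}(\hat{\bm{\sigma}}(t),\epsilon_s)-\mathbf{P}(\bm{\sigma}^*,\epsilon_s)}_1 \le \epsilon
\end{align*}
for some $\epsilon>0$ to be fixed later. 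To bound $\EE[\Gamma^2]$, I will exploit the exploration schedule $\epsilon_l(t)=2\log t/t$: the number $N(t)$ of explore slots up to time $t$ has mean $\Theta(\log^2 t)$, and a Chernoff bound shows $\PP[N(t)\le \tfrac{1}{2}\log^2 t]$ decays faster than any polynomial in $t$; conditional on $N(t)$ explore samples, Hoeffding's inequality (applied coordinate-wise to the bounded arrivals $A_{m,u}\le\bar{\mathsf{A}}$ and to the channel-state indicator vector) gives $\PP[(\hat{\bm{\mu}}(t),\hat{\bm{\lambda}}(t))\notin\mathcal{N}\mid N(t)]\le c_1\exp(-c_2 N(t))$. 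Combining, $\PP[\Gamma>t]$ decays super-polynomially, so $\EE[\Gamma^2 \mid \mathbf{J}(0),\mathbf{Q}(1)]<\infty$.

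For the drift itself, fix $t>\Gamma$ and condition on $(\tilde{\mathbf{J}}(t-1),\mathbf{Q}(t),\mathbf{J}(0),\mathbf{Q}(1),\Gamma,\mathcal{E}^0)$. The auxiliary chain $(\tilde{\mathbf{J}}(s))_{s\ge t-1}$ evolves under the (inhomogeneous) transition matrices $\{\mathbf{P}(\hat{\bm{\sigma}}(s),\epsilon_s)\}_{s\ge t}$, all of which lie within $\ell_1$-distance $\epsilon$ of $\mathbf{P}_*:=\mathbf{P}(\bm{\sigma}^*,\epsilon_s)$. Lemma~\ref{lem:inhomo-dtmc-convergence}\ref{item:marginal-bound} applied with reference matrix $\mathbf{P}_*$ therefore yields, for every $\ell\in\{0,\dots,T-1\}$,
\begin{align*}
\norm{\Tilde{\mathbf{y}}(t+\ell)-\bm{\sigma}^*}_1 \;\le\; \tau_1(\mathbf{P}_*^\ell)\cdot 2 + \epsilon\,\Upsilon(\mathbf{P}_*),
\end{align*}
where $\Tilde{\mathbf{y}}(s)$ is the conditional distribution of $\tilde{\mathbf{J}}(s)$; the marginal of $\mathbf{J}(s)=(1-E_l(s))\tilde{\mathbf{J}}(s)+E_l(s)\mathbf{1}$ differs from $\Tilde{\mathbf{y}}(s)$ by at most $2\epsilon_l(s)$. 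I will choose $T$ large enough that $\tfrac1T\sum_{\ell=0}^{T-1}\tau_1(\mathbf{P}_*^\ell)\le\epsilon_g/(8\bar{\mathsf{R}})$, then choose $\epsilon$ (hence $\mathcal{N}$, hence $\Gamma$) small enough that $\epsilon\,\Upsilon(\mathbf{P}_*)\le\epsilon_g/(8\bar{\mathsf{R}})$, and finally use $\epsilon_l(s)\to 0$ to absorb the explore perturbation, so that the $T$-step average marginal lies within $\epsilon_g/(2\bar{\mathsf{R}})$ of $\bm{\sigma}^*$.

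With this marginal control, the drift computation mirrors the proof of Lemma~\ref{lem:static-split-lyapunov-drift} verbatim. The Max-Weight rule dominates the static-split allocation with the weights $\bm{\alpha}^*$ associated to $(\bm{\sigma}^*,\bm{\beta}^*)$ through \eqref{eq:change-var}, so
\begin{align*}
\EE[\mathbf{S}(t+\ell)\cdot\mathbf{Q}(t)\mid\cdot]\;\ge\; \mathbf{Q}(t)\cdot\!\!\sum_{j\in\mathcal{J}}y(t+\ell)_j\!\sum_{h,\mathbf{r}}\mu(h)\alpha^*_{\mathbf{r}}(j,h)\mathbf{r},
\end{align*}
and by the feasibility of $(\bm{\sigma}^*,\bm{\beta}^*)$ for $\mathit{L}_{\mathbf{c}^{\epsilon_p}}(\bm{\mu},\bm{\lambda}+\epsilon_g)$ together with the marginal bound, the averaged service over the window exceeds $\bm{\lambda}+\epsilon_g/2$ component-wise. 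Plugging this into the quadratic-drift bound $\Delta_T(t)\le B'T+2\sum_{\ell=0}^{T-1}(\mathbf{A}(t+\ell)-\mathbf{S}(t+\ell))\cdot\mathbf{Q}(t+\ell)$, exactly as in Lemma~\ref{lem:static-split-lyapunov-drift}, yields \eqref{eq:lyapunov-drift} with $B=2B'+2nM\max\{\bar{\mathsf{A}}^2,\bar{\mathsf{R}}^2\}(T-1)$.

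The principal obstacle is the second-moment control of $\Gamma$: it is precisely the logarithmic factor in $\epsilon_l(t)=2\log t/t$ that gives $\EE[N(t)]=\Theta(\log^2 t)$ and, via concentration, super-polynomial decay of $\PP[\Gamma>t]$. A slower exploration schedule (e.g.\ $\epsilon_l(t)=c/t$) would give only $\EE[N(t)]=\Theta(\log t)$ and would \emph{not} suffice to ensure $\EE[\Gamma^2]<\infty$; a faster one would inflate cost. Everything else is a careful packaging of continuity of the LP (Lemma~\ref{lem:lp-continuity}), the perturbation estimate for inhomogeneous Markov chains (Lemma~\ref{lem:inhomo-dtmc-convergence}), and the standard Max-Weight drift identity.
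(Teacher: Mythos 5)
Your strategy coincides with the paper's: define $\Gamma$ as the learning time after which the empirical parameters stay in a neighborhood where, on $\mathcal{E}^0$, Lemma~\ref{lem:lp-continuity}\ref{lem:lp-continuity-opt-set} keeps $\hat{\bm{\sigma}}(\cdot)$, hence $\mathbf{P}(\hat{\bm{\sigma}}(\cdot),\epsilon_s)$, within a prescribed $\ell_1$-distance of $\mathbf{P}(\bm{\sigma}^*,\epsilon_s)$; bound $\EE[\Gamma^2]$ by a Chernoff bound on the number of explore slots (mean $\Theta(\log^2 t)$) combined with concentration of the estimates uniformly over later sample counts; then apply Lemma~\ref{lem:inhomo-dtmc-convergence}\ref{item:marginal-bound} with $\mathbf{P}_*=\mathbf{P}(\bm{\sigma}^*,\epsilon_s)$ (for which $\tau_1(\mathbf{P}_*^{\ell})=(1-\epsilon_s)^{\ell}$ and $\Upsilon(\mathbf{P}_*)=1/\epsilon_s$), choose $T$ and the perturbation size so the accumulated deviation is at most $T\epsilon_g/(2\bar{\mathsf{R}})$, and finish with the Max-Weight-dominates-static-split drift computation of Lemma~\ref{lem:static-split-lyapunov-drift}. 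The paper uses a Pinsker-type bound for $\hat{\bm{\mu}}$ where you use coordinate-wise Hoeffding; that difference is immaterial.

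The one genuine gap is your treatment of the explore slots inside the drift step. You pass to the marginal distribution of the actual activation $\mathbf{J}(s)=(1-E_l(s))\tilde{\mathbf{J}}(s)+E_l(s)\mathbf{1}$, bound its distance from $\tilde{\mathbf{y}}(s)$ by $2\epsilon_l(s)$, and propose to ``use $\epsilon_l(s)\to 0$ to absorb the explore perturbation.'' But \eqref{eq:lyapunov-drift} must hold for \emph{every} $t>\Gamma$, and $\Gamma$ is small with positive probability, in which case $\epsilon_l(t+\ell)=2\log(t+\ell)/(t+\ell)$ is not small and the extra $2\bar{\mathsf{R}}\sum_{\ell}\epsilon_l(t+\ell)$ term cannot be folded into the $T\epsilon_g/2$ slack. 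In addition, the perturbed mass lands on $j=\mathbf{1}$, where $\bm{\alpha}^*(\mathbf{1},\cdot)$ is not even defined by \eqref{eq:change-var} when $\sigma^*_{\mathbf{1}}=0$, so the sign of the perturbation is not automatically favorable for the static-split lower bound with the weights you invoke. The fix is cheap: either replace $\Gamma$ by $\max\{\Gamma,t_0\}$ for a deterministic $t_0$ beyond which $2\bar{\mathsf{R}}\,\epsilon_l(s)\le\epsilon_g/4$ (the second moment is unaffected), or---as the paper does---never touch the marginal of $\mathbf{J}$ at all: since $\mathbf{J}(s)\ge\tilde{\mathbf{J}}(s)$ implies $\mathcal{R}(\tilde{\mathbf{J}}(s),H(s))\subseteq\mathcal{R}(\mathbf{J}(s),H(s))$, Max-Weight on the actual activation dominates the static-split service computed on the candidate $\tilde{\mathbf{J}}(s)$, and the entire drift argument runs on the time-inhomogeneous chain $\tilde{\mathbf{J}}$ alone, whose transition matrices are exactly the ones you controlled; exploration then only enlarges the feasible rate set and needs no absorption.
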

Before we prove Lemma~\ref{lem:lyapunov-drift}, we show below that \eqref{eq:lyapunov-drift} implies stability as per Definition~\ref{def:stability}.
\begin{lemma}
\label{lem:fosters}
If the condition given by \eqref{eq:lyapunov-drift} is satisfied, then for any $b > 0$,
\begin{align*}
\lim_{t \to \infty} \frac{1}{t} \sum_{l=1}^{t}\PP_{\phi(\epsilon_p, \epsilon_s, \epsilon_g)}\left[ \mathbf{Q}(l) \in \mathcal{A} \given[\bigg]  \mathbf{J}(0), \mathbf{Q}(1) \right] > \frac{b}{\bar{B}},
\end{align*}
where $\bar{B} = B +b$, $\bar{Q} = \frac{\bar{B}}{\epsilon_g}$ and  $\mathcal{A} = \left\lbrace \mathbf{Q} \in \RR^{M \times n} :  \sum_{m,u} Q_{m,u} < \bar{Q} \right\rbrace.$  Therefore, the network is stable under the policy $\phi(\epsilon_p, \epsilon_s, \epsilon_g)$.
\end{lemma}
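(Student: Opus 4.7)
The plan is to run a Foster-type telescoping argument on the quadratic Lyapunov function $V$ along the time sub-grid $\{1, T+1, 2T+1, \ldots\}$, absorbing the pre-$\Gamma$ transient via the second-moment bound on $\Gamma$. The key algebraic observation is that outside $\mathcal{A}$ one has $\sum_{m,u} Q_{m,u}(t)\ge \bar{Q}=\bar{B}/\epsilon_g$, so the hypothesis \eqref{eq:lyapunov-drift} yields, for $t>\Gamma$,
\begin{align*}
\EE\bigl[\Delta_T(t)\,\bigl|\,\tilde{\mathbf{J}}(t{-}1),\mathbf{Q}(t),\mathbf{J}(0),\mathbf{Q}(1),\Gamma,\mathcal{E}^0\bigr]
\;\le\; BT\,\ind{\mathbf{Q}(t)\in\mathcal{A}} - bT\,\ind{\mathbf{Q}(t)\notin\mathcal{A}},
\end{align*}
which rewrites as $\bar{B}T\,\ind{\mathbf{Q}(t)\in\mathcal{A}} - bT$. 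By Lemma~\ref{lem:perturbed-unique-soln}, $\mathcal{E}^0$ occurs with probability one given $(\mathbf{J}(0),\mathbf{Q}(1))$, so conditioning on $\mathcal{E}^0$ costs nothing in the final statement.

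Next, telescope: for each $K\in\NN$,
\begin{align*}
0 \;\le\; \EE\bigl[V(\mathbf{Q}(KT{+}1))\,\bigl|\,\mathbf{J}(0),\mathbf{Q}(1)\bigr]
\;=\; V(\mathbf{Q}(1)) + \sum_{k=0}^{K-1}\EE\bigl[\Delta_T(kT{+}1)\,\bigl|\,\mathbf{J}(0),\mathbf{Q}(1)\bigr].
\end{align*}
I would split each summand according to whether $kT{+}1>\Gamma$ or not. For indices with $kT{+}1>\Gamma$, the tower rule together with the bound above gives $\EE[\Delta_T(kT{+}1)\,|\,\cdot]\le \bar{B}T\,\PP[\mathbf{Q}(kT{+}1)\in\mathcal{A}\,|\,\cdot] - bT\,\PP[kT{+}1>\Gamma\,|\,\cdot]$. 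For indices with $kT{+}1\le \Gamma$, use the pathwise crude bound that arises from $|A_{m,u}|\le\bar{\mathsf{A}}$ and $|S_{m,u}|\le\bar{\mathsf{R}}$: a one-step expansion of $V(\mathbf{Q}(t{+}1))-V(\mathbf{Q}(t))$ yields $\Delta_T(t)\le C_1\|\mathbf{Q}(t)\|_1 + C_2$ with constants $C_1,C_2$ depending only on $n,M,T,\bar{\mathsf{A}},\bar{\mathsf{R}}$. Since $\|\mathbf{Q}(t)\|_1\le \|\mathbf{Q}(1)\|_1+nM\bar{\mathsf{A}}\,(t-1)$ deterministically, the cumulative contribution of the pre-$\Gamma$ terms is bounded above by a polynomial in $\Gamma$ of degree two, and taking expectations this is $O(\EE[\Gamma^2\,|\,\mathbf{J}(0),\mathbf{Q}(1)])=O(1)$ by hypothesis.

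Assembling the two regimes, there exists a finite constant $D(\mathbf{J}(0),\mathbf{Q}(1))$ such that
\begin{align*}
bTK \;\le\; V(\mathbf{Q}(1)) + D(\mathbf{J}(0),\mathbf{Q}(1)) + \bar{B}T\sum_{k=0}^{K-1}\PP\bigl[\mathbf{Q}(kT{+}1)\in\mathcal{A}\,\bigl|\,\mathbf{J}(0),\mathbf{Q}(1)\bigr].
\end{align*}
Dividing by $\bar{B}TK$ and sending $K\to\infty$ yields $\liminf_{K\to\infty}\frac{1}{K}\sum_{k=0}^{K-1}\PP[\mathbf{Q}(kT{+}1)\in\mathcal{A}\,|\,\cdot]\ge b/\bar{B}$. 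To upgrade this Cesaro statement from the sub-grid to the full time axis, I would repeat the argument for each of the $T$ shifted grids $\{s, T{+}s, 2T{+}s,\ldots\}$, $s=1,\ldots,T$; averaging the $T$ resulting inequalities yields the claimed bound with $1/t\sum_{l=1}^{t}$ on the left. Since $b>0$ is arbitrary, stability in the sense of Definition~\ref{def:stability} holds with $\rho=b/\bar{B}$ and $\bar{Q}=\bar{B}/\epsilon_g$.

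The main obstacle is the first step above, namely handling the random cutoff $\Gamma$ cleanly: the drift hypothesis only kicks in after $\Gamma$, so one must show that the ``bad'' contributions before $\Gamma$ grow at most like $\EE[\Gamma^2]$ even though the Lyapunov function is quadratic and the queue may drift upward during this transient. This is where the second-moment assumption on $\Gamma$ is essential, and it is the point at which the argument departs from the textbook Foster--Lyapunov theorem for time-homogeneous chains.
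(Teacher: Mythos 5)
Your proposal is correct and follows essentially the same route as the paper's proof: rewrite the drift bound as $-bT+\bar{B}T\,\ind{\mathbf{Q}(t)\in\mathcal{A}}$, telescope the quadratic Lyapunov function over $T$-step blocks, absorb the pre-$\Gamma$ transient using the bounded arrivals and $\EE[\Gamma^2]<\infty$, invoke $\PP[\mathcal{E}^0]=1$ from Lemma~\ref{lem:perturbed-unique-soln}, and recover the full-time-axis Ces\`aro bound via the $T$ shifted grids (the paper simply sums over all $T$ offsets at once and splits the telescope at $I^*=\min\{i:(i-1)T\ge\Gamma\}$, bounding the transient through $\mathbf{Q}((I^*-1)T+l)\le\mathbf{Q}(l)+\bar{\mathsf{A}}(\Gamma+T)$ rather than through your per-term crude drift bound). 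The only detail to make explicit is that the post-$\Gamma$ terms contribute $-bT\sum_k\PP[kT+1>\Gamma]\le -bTK+bT(1+\EE[\Gamma]/T)$, a correction that is finite and absorbable into your constant $D$.
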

\begin{proof}
For ease of notation, we do not explicitly write the conditioning on $\mathbf{J}(0), \mathbf{Q}(1)$. Let the condition given by \eqref{eq:lyapunov-drift} be true. Then under policy $\phi(\epsilon_p, \epsilon_s, \epsilon_g)$, we have from \eqref{eq:lyapunov-drift} that
\begin{align*}
\EE\left[\Delta_T(t) \given[\big] \tilde{\mathbf{J}}(t-1), \mathbf{Q}(t), \Gamma, \mathcal{E}^0 \right]	
& \leq -b T + \bar{B} T \ind{\mathbf{Q}(t) \in \mathcal{A}},
\end{align*}
for any  $t > \Gamma$. Now, let $I^* := \min\{i: (i-1)T \geq \Gamma\}$.  Consider, for any $k \in \NN$,
\begin{align}
& \sum_{l=1}^{T} \EE\left[ V(\mathbf{Q}(kT+l)) - V(\mathbf{Q}(l))  \right]	\n	\\
& = \sum_{l=1}^{T} \EE\left[ V(\mathbf{Q}(kT+l) -  V(\mathbf{Q}((I^*-1)T+l))  \right]	\n	\\
& \quad +  \EE\left[ V(\mathbf{Q}((I^*-1)T+l)) - V(\mathbf{Q}(l))  \right].	\label{eq:fosters}
\end{align}
Now,
\begin{align*}
& \sum_{l=1}^{T} \EE\left[ V(\mathbf{Q}(kT+l)) -  V(\mathbf{Q}((I^*-1)T+l))  \right]	\\
& = \sum_{l=1}^{T} \EE\left[ \sum_{i=I^*-1}^{k-1}  \Delta_T(iT+l) \right]	\\
& = \sum_{l=1}^{T} \EE\left[ \sum_{i=I^*-1}^{k-1} \EE\left[ \Delta_T(iT+l) \given \mathbf{Q}(iT+l), \Gamma, \mathcal{E}^0 \right] \right] \label{eq:smooth}	\numberthis		\\
& \leq  \sum_{l=1}^{T} \EE\left[  \sum_{i=I^*-1}^{k-1} \left(  -b T + \bar{B} T \ind{\mathbf{Q}(iT+l) \in \mathcal{A}} \right)  \right]	\\
& \leq -b (k-1)T^2 + b T \EE\left[ \Gamma  \right]  + \bar{B} T \EE\left[ \sum_{t=\Gamma+1}^{kT} \ind{\mathbf{Q}(t) \in \mathcal{A}}  \right].
\end{align*}
In \eqref{eq:smooth}, we have used  that $\PP\left[ \mathcal{E}^0  \right] = 1$ from Lemma~\ref{lem:perturbed-unique-soln}.

Moreover, since $(I^*-1)T < \Gamma + T$, we have $$\mathbf{Q}((I^*-1)T+l) \leq \mathbf{Q}(l) + \bar{\mathsf{A}}(\Gamma + T)$$ for any $l \in \NN$, which gives
\begin{align*}
& \sum_{l=1}^{T} \EE\left[ V(\mathbf{Q}((I^*-1)T+l)) - V(\mathbf{Q}(l)) \right]	\\
 & \leq \sum_{l=1}^{T} \EE\left[ \sum_{m,u} \left(  (\bar{\mathsf{A}}(\Gamma + T) +  Q_{m,u}(l))^2 - Q_{m,u}(l)^2  \right) \right]	\\
 & \leq \EE\left[ n M T (\bar{\mathsf{A}}(\Gamma + T))^2 + 2 \sum_{l=1}^{T} \sum_{m,u}  \bar{\mathsf{A}}(\Gamma + T) Q_{m,u}(l) \right]	\\
 & \leq  \EE\left[ n M T (\bar{\mathsf{A}}(\Gamma + T))^2 \right]	\\
 & \quad + \EE\left[ 2 \bar{\mathsf{A}}(\Gamma + T) \sum_{l=1}^{T} \sum_{m,u}  \left( \bar{\mathsf{A}}(l-1) + Q_{m,u}(1) \right) \right] 	\\
 & \leq T \EE\left[ n M \bar{\mathsf{A}}^2 (\Gamma^2 + 3T \Gamma + 2T^2) \right]	\\
 & \quad + T \EE\left[  2 \bar{\mathsf{A}} (\Gamma + T) \sum_{m,u} Q_{m,u}(1) \right].
\end{align*}
Applying the above two inequalities in \eqref{eq:fosters} and rearranging the terms, we get
\begin{align*}
& \bar{B} \EE\left[ \sum_{t=\Gamma+1}^{kT} \ind{\mathbf{Q}(t) \in \mathcal{A}} \right] 	\\
& \geq bkT - \EE\left[Y \right]	\\
& \quad + \frac{1}{T}\sum_{l=1}^{T} \EE\left[ V(\mathbf{Q}(kT+l)) - V(\mathbf{Q}(l)) \right],
\end{align*}
where
\begin{align*}
Y & = n M \bar{\mathsf{A}}^2 \Gamma^2 + \left( 3 n M \bar{\mathsf{A}}^2T + 2 \bar{\mathsf{A}} \sum_{m,u} Q_{m,u}(1) + b \right) \Gamma 	\\
& \quad + 2 T^2 + 2 \bar{\mathsf{A}} T \sum_{m,u} Q_{m,u}(1) + bT.
\end{align*}
We have $\EE\left[ Y \right] < \infty$ since $\EE\left[ \Gamma^2 \right] < \infty$.
Additionally, we have
\begin{align*}
& \limsup_{k \to \infty} \frac{1}{kT} \EE\left[ \frac{1}{T}\sum_{l=1}^{T} V(\mathbf{Q}(l))  \right]	\\
& \leq \limsup_{k \to \infty} \frac{1}{kT} \EE\left[ \frac{1}{T}\sum_{l=1}^{T} V(\bar{\mathsf{A}}(l-1) + \mathbf{Q}(1))   \right] = 0.
\end{align*}
Therefore,
\begin{align*}
& \bar{B} \liminf_{k \to \infty} \frac{1}{kT} \sum_{t=1}^{kT} \PP\left[ \mathbf{Q}(t) \in \mathcal{A}   \right] \\
 & \geq \bar{B} \liminf_{k \to \infty} \frac{1}{kT} \EE\left[ \sum_{t=\Gamma+1}^{kT} \ind{\mathbf{Q}(t) \in \mathcal{A}}  \right]	\\
 & \geq b -  \limsup_{k \to \infty} \frac{1}{kT} \EE\left[ Y + \frac{1}{T}\sum_{l=1}^{T} V(\mathbf{Q}(l))   \right],
\end{align*}
which gives us the required result
\begin{align*}
\liminf_{k \to \infty} \frac{1}{kT} \sum_{t=1}^{kT} \PP\left[ \mathbf{Q}(t) \in \mathcal{A} \given[\big] \mathbf{J}(0),  \mathbf{Q}(1)  \right] \geq \frac{b}{\bar{B}}.
\end{align*}
\end{proof}
Before we proceed to prove Lemma~\ref{lem:lyapunov-drift}, we will prove an intermediate result which shows that the transition probability matrices used by the policy to select the activation vector converge to a matrix that is close to the optimal. This result along with Lemma~\ref{lem:inhomo-dtmc-convergence} allows us to show that the distribution of the activation vector converges to the optimal invariant distribution.

For any $k > 0$, let $\tilde{\bm{\mu}}(k)$, $\tilde{\bm{\lambda}}(k)$ denote the empirical distributions of channels and the empirical means of arrivals respectively obtained from the first $k$ explore samples.  For every $t > 0$, $\delta > 0$, define the events
\begin{align*}
\mathcal{E}_l(t) & := \left\lbrace \sum_{s=1}^{t} E_l(s) \geq \frac{1}{2}\log^2t \right\rbrace,	\\
\mathcal{E}_{\tilde{\mu}}(t, \delta) & := \left\lbrace \norm{\tilde{\bm{\mu}}(t) - \bm{\mu}}_1 \leq \delta \right\rbrace,	\\
\mathcal{E}_{\tilde{\lambda}}(t, \delta) & := \left\lbrace \norm{\tilde{\bm{\lambda}}(t) - \bm{\lambda}}_1 \leq\delta \right\rbrace,
\end{align*}
and for $\delta' = \frac{1}{2} \min (\delta, \epsilon_g/\bar{\mathsf{R}})$,
\begin{align*}
\mathcal{E}(t, \delta)  & := \mathcal{E}_l(t) \bigcap_{k \geq \frac{1}{2}\log^2t}\left\lbrace  \mathcal{E}_{\tilde{\mu}}(k, \delta') \cap \mathcal{E}_{\tilde{\lambda}}(k, \delta')\right\rbrace.
\end{align*}

\begin{lemma}
\label{lem:est-convergence}
For any $\delta > 0$,  let $$T_1(\delta) := \min\left\lbrace t: \mathcal{E}(t, \delta) \text{ is true} \right\rbrace.$$ Then, $\EE\left[ T_1(\delta)^2 \given \mathbf{J}(0), \mathbf{Q}(1)  \right]  < \infty.$
\end{lemma}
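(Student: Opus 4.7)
The plan is to show that $T_1(\delta)$ has a tail $\PP[T_1(\delta) > t]$ that decays faster than any polynomial, which in particular gives $\EE[T_1(\delta)^2] < \infty$. The key observation is that $\{T_1(\delta) > t\} \subseteq \{\mathcal{E}(t, \delta)^c\}$, because $\mathcal{E}(t, \delta)$ holding at some time $t$ immediately forces $T_1(\delta) \leq t$. Hence it suffices to bound $\PP[\mathcal{E}(t, \delta)^c]$. By a union bound,
\begin{align*}
\PP[\mathcal{E}(t, \delta)^c] \;\leq\; \PP[\mathcal{E}_l(t)^c] + \sum_{k \geq \frac{1}{2}\log^2 t} \!\!\bigl( \PP[\mathcal{E}_{\tilde{\mu}}(k, \delta')^c] + \PP[\mathcal{E}_{\tilde{\lambda}}(k, \delta')^c] \bigr).
\end{align*}

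The first term I would handle by Chernoff applied to the independent Bernoulli sum $\sum_{s=1}^t E_l(s)$ with $E_l(s) \sim \text{Ber}(2\log s/s)$. Its mean satisfies $\mu_t := \sum_{s=2}^t 2\log s/s = \log^2 t + O(\log t)$ by comparison with the integral $\int_1^t 2\log s/s\, ds = \log^2 t$. The multiplicative Chernoff bound $\PP[X \leq \mu_t/2] \leq \exp(-\mu_t/8)$ then yields $\PP[\mathcal{E}_l(t)^c] \leq \exp(-c_0 \log^2 t) = t^{-c_0 \log t}$ for some $c_0 > 0$ and all $t$ sufficiently large.

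For the second term, since the samples feeding $\tilde{\bm{\mu}}(k)$ and $\tilde{\bm{\lambda}}(k)$ are $k$ i.i.d.\ copies of the channel state (ranging over the finite set $\mathcal{H}$) and the bounded arrival vector (bounded by $\bar{\mathsf{A}}$) respectively, Hoeffding's inequality combined with a union bound over coordinates of $\mathcal{H}$ and over the $Mn$ queue indices gives
\begin{align*}
\PP[\mathcal{E}_{\tilde{\mu}}(k, \delta')^c] \leq 2|\mathcal{H}|\exp(-c_1 k), \quad \PP[\mathcal{E}_{\tilde{\lambda}}(k, \delta')^c] \leq 2nM\exp(-c_2 k),
\end{align*}
where $c_1, c_2 > 0$ depend on $\delta'$, $|\mathcal{H}|$, $n$, $M$ and $\bar{\mathsf{A}}$. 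Summing the geometric tails over $k \geq \frac{1}{2}\log^2 t$ gives another super-polynomial bound $\leq C\exp(-c \log^2 t) = C t^{-c\log t}$. Combining the two pieces, $\PP[\mathcal{E}(t,\delta)^c] \leq C' t^{-c' \log t}$ for all large $t$ (and is trivially $\leq 1$ for small $t$), and neither the arguments nor the bounds depend on the conditioning $(\mathbf{J}(0), \mathbf{Q}(1))$ since the Bernoulli explore indicators, channel samples, and arrival samples are independent of this initial state.

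Finally, I would use the identity $\EE[T_1(\delta)^2] = \sum_{s \geq 1}(2s-1)\PP[T_1(\delta) \geq s]$ together with the tail bound to conclude $\EE[T_1(\delta)^2] \leq C'' \sum_{s \geq 1} s \cdot s^{-c'\log s} < \infty$. The main (and only) obstacle is the reconciliation of the two different sample-size scales appearing in $\mathcal{E}(t, \delta)$: the natural time index $t$ (which governs the number of explore trials) and the number of accumulated explore samples $k$ (which governs the quality of the empirical estimates). The careful choice $\frac{1}{2}\log^2 t$ as the threshold in $\mathcal{E}_l(t)$ is precisely what converts the exponential-in-$k$ Hoeffding bounds into a super-polynomial-in-$t$ tail, so that the stationary $\EE[T_1^2]$ stays finite; everything else is routine concentration.
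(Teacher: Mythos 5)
Your proposal is correct and follows essentially the same route as the paper's proof: the inclusion $\{T_1(\delta)>t\}\subseteq\mathcal{E}(t,\delta)^c$, a Chernoff bound on $\sum_{s\le t}E_l(s)$ using the mean $\approx\log^2 t$, exponential-in-$k$ concentration for the empirical estimates summed over $k\ge\tfrac12\log^2 t$, and the tail-sum formula for the second moment. The only (immaterial) difference is that you bound $\norm{\tilde{\bm{\mu}}(k)-\bm{\mu}}_1$ by coordinate-wise Hoeffding plus a union bound over $\mathcal{H}$, whereas the paper uses the Pinsker-based $L_1$ concentration inequality $(k+1)^{\card{\mathcal{H}}}\exp(-\epsilon^2 k/2)$; both give the needed exponential decay in $k$.
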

\begin{proof}
For ease of notation, we do not explicitly write the conditioning on $\mathbf{J}(0), \mathbf{Q}(1)$.

Consider the mean number of explore samples in the first $t$ slots.
\begin{align*}
\EE\left[ \sum_{s=1}^{t} E_l(s) \right] & =  \sum_{s=2}^{t} \frac{2 \log s}{s} \geq \int_{s=e}^{t+1} \frac{2 \log s}{s} \diff s	\\
& = \log^2(t+1) - 1  \geq \frac{3}{4} \log^2(t),
\end{align*}
for all $t \geq 3$. Using the Chernoff bound for Bernoulli random variables, we have $\forall t \geq 3$,
\begin{align*}
\PP\left[ \mathcal{E}_l(t)^c \right] \leq \exp\left( -\frac{1}{32}\log^2t \right).
\end{align*}
Using the Hoeffding's inequality for Bernoulli random variables, for any  $k$, $\epsilon >0$,
\begin{align*}
\PP\left[ \mathcal{E}_{\tilde{\lambda}}(k, \epsilon)^c \right] & \leq \sum_{u=1}^n \PP\left[ \card{\tilde{\lambda}_{u}(k) - \lambda_{u}} \geq \frac{1}{n}\epsilon \right] 	\\
& \leq n \exp\left( -2 k \left( \frac{\epsilon}{n \bar{\mathsf{A}}} \right)^2 \right).
\end{align*}
In addition, using Pinsker's inequality it can be shown \cite{weissman2003inequalities} that for any $k$, $\epsilon >0$,
\begin{align*}
\PP\left[ \mathcal{E}_{\tilde{\mu}}(k, \epsilon)^c \right] \leq (k+1)^{\card{\mathcal{H}}} \exp\left( -\frac{\epsilon^2}{2} k \right).
\end{align*}
Now, let $\delta' = \frac{1}{2}\min(\delta, \epsilon_g/\bar{\mathsf{R}})$. Using the above inequalities, we have $\forall t \geq 3$,
\begin{align*}
 \PP\left[ T_1(\delta) > t \right]	
& \leq \PP\left[ \mathcal{E}(t, \delta)^c \right]	\\
& \leq \PP\left[ \mathcal{E}_l(t)^c \right]	\\
& \quad +  \sum_{k=\frac{1}{2}\log^2t }^{\infty} \left( \PP\left[ \mathcal{E}_{\tilde{\mu}}(k, \delta')^c \right] + \PP\left[ \mathcal{E}_{\tilde{\lambda}}(k, \delta')^c \right] \right)	\\
& = o\left( \frac{1}{t^3} \right),
\end{align*}
which gives us
\begin{align*}
\EE\left[ T_1(\delta)^2 \right] = 2 \sum_{t = 0}^ \infty t \PP\left[ T_1(\delta) > t \right]  < \infty.
\end{align*}
\end{proof}

Given $\mathcal{E}^0$, let  $(\bm{\sigma}^*, \bm{\beta}^*) \in \mathcal{O}^*_{\mathbf{c}^{\epsilon_p}}\left( \bm{\mu}, \bm{\lambda}+\epsilon_g \right)$ be the unique solution to the linear program $\mathit{L}_{\mathbf{c}^{\epsilon_p}}\left( \bm{\mu}, \bm{\lambda}+\epsilon_g \right)$. Due to the continuity of the solution set of $\mathit{L}_{\mathbf{c}^{\epsilon_p}}\left( \bm{\mu}, \bm{\lambda}+\epsilon_g \right)$ given $\mathcal{E}^0$ (from Lemma~\ref{lem:lp-continuity}), there exists a positive function $\delta_1 \mapsto f(\delta_1)$ such that, if $\left( \bm{\mu}', \bm{\lambda}'+\epsilon_g \right) \in \mathcal{S}$ and
$$\norm{\bm{\mu}' - \bm{\mu}}_1 + \norm{\bm{\lambda}' - \bm{\lambda}}_1 \leq f(\delta_1),$$
 then for any $(\bm{\sigma}', \bm{\beta}') \in \mathcal{O}^*_{\mathbf{c}^{\epsilon_p}}\left( \bm{\mu}', \bm{\lambda}'+\epsilon_g \right)$,
 $$\norm{\bm{\sigma}' - \bm{\sigma}^*}_1 \leq \delta_1.$$
 The following lemma shows that continuity of the solution of  $\mathit{L}_{\mathbf{c}^{\epsilon_p}}\left( \bm{\mu}, \bm{\lambda}+\epsilon_g \right)$ implies convergence of the activation vector transition probability matrices.
 \begin{lemma}
 \label{lem:P-mx-convergence}
If $(\bm{\mu}, \bm{\lambda}+2\epsilon_g) \in \mathcal{S}$, then for any $\delta_1$, $t$, the event $\mathcal{E}^0 \cap \mathcal{E}(t, f(\delta_1))$ implies the event
  $$\tilde{\mathcal{E}}(t, \delta_1) := \left\lbrace\norm{\mathbf{P}(\hat{\bm{\sigma}}(l), \epsilon_s) - \mathbf{P}(\bm{\sigma}^*, \epsilon_s)}_1 \leq \delta_1 \; \forall l > t\right\rbrace.$$
\end{lemma}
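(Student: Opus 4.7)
The plan is to reduce the claim to the explicit form of $\mathbf{P}(\cdot, \epsilon_s)$ and then invoke the continuity of the optimal solution set from Lemma~\ref{lem:lp-continuity}\ref{lem:lp-continuity-opt-set}. First, from the definition in \eqref{eq:feasible-P-mx} I would observe that
\[
\mathbf{P}(\hat{\bm{\sigma}}(l), \epsilon_s) - \mathbf{P}(\bm{\sigma}^*, \epsilon_s) \;=\; \epsilon_s \, \mathbf{1}_{\card{\mathcal{J}}}\bigl(\hat{\bm{\sigma}}(l) - \bm{\sigma}^*\bigr),
\]
a rank-one matrix whose $1$-norm is bounded by an absolute constant multiple of $\norm{\hat{\bm{\sigma}}(l) - \bm{\sigma}^*}_1$. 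Hence it suffices to show that on the event $\mathcal{E}^0 \cap \mathcal{E}(t, f(\delta_1))$, every $\hat{\bm{\sigma}}(l)$ chosen by the algorithm for $l > t$ satisfies $\norm{\hat{\bm{\sigma}}(l) - \bm{\sigma}^*}_1 \leq c\,\delta_1$ for a constant $c$ that can be absorbed into a harmless redefinition of $f$.

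Next, I would use $\mathcal{E}(t, f(\delta_1))$ to translate this into a parameter bound. On $\mathcal{E}_l(t)$, by time $l > t$ at least $\tfrac{1}{2}\log^2 t$ explore samples have been collected, and for every sample count $k \geq \tfrac{1}{2}\log^2 t$ the event enforces $\norm{\tilde{\bm{\mu}}(k) - \bm{\mu}}_1 \leq \delta'$ and $\norm{\tilde{\bm{\lambda}}(k) - \bm{\lambda}}_1 \leq \delta'$, where $\delta' = \tfrac{1}{2}\min\bigl(f(\delta_1), \epsilon_g/\bar{\mathsf{R}}\bigr)$. Since $\hat{\bm{\mu}}(l)$ and $\hat{\bm{\lambda}}(l)$ coincide with $\tilde{\bm{\mu}}(N_l)$ and $\tilde{\bm{\lambda}}(N_l)$ for some $N_l \geq \tfrac{1}{2}\log^2 t$, this yields $\norm{\hat{\bm{\mu}}(l) - \bm{\mu}}_1 + \norm{\hat{\bm{\lambda}}(l) - \bm{\lambda}}_1 \leq f(\delta_1)$ for every $l > t$. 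The second ``$\epsilon_g/\bar{\mathsf{R}}$'' piece of $\delta'$, together with the hypothesis $(\bm{\mu}, \bm{\lambda}+2\epsilon_g) \in \mathcal{S}$, will in addition guarantee $(\hat{\bm{\mu}}(l), \hat{\bm{\lambda}}(l)+\epsilon_g) \in \mathcal{S}$, so that $\mathit{L}_{\mathbf{c}^{\epsilon_p}}\bigl(\hat{\bm{\mu}}(l), \hat{\bm{\lambda}}(l)+\epsilon_g\bigr)$ is well-posed and $\hat{\bm{\sigma}}(l)$ is a bona fide LP optimizer.

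With the perturbed parameters positioned in a small neighbourhood of $(\bm{\mu}, \bm{\lambda}+\epsilon_g)$, I would now apply the defining property of $f$ given in the paragraph preceding the lemma. Because $\mathcal{E}^0$ places $(\bm{\mu}, \bm{\lambda}+\epsilon_g) \in \mathcal{U}_{\mathbf{c}^{\epsilon_p}}$, Lemma~\ref{lem:lp-continuity}\ref{lem:lp-continuity-opt-set} grants continuity of $\mathcal{O}^*_{\mathbf{c}^{\epsilon_p}}(\cdot)$ at that point, and $f(\delta_1)$ was chosen so that every element of $\mathcal{O}^*_{\mathbf{c}^{\epsilon_p}}(\bm{\mu}', \bm{\lambda}'+\epsilon_g)$ lies within $\norm{\cdot}_1$-distance $\delta_1$ of $\bm{\sigma}^*$ whenever $\norm{\bm{\mu}' - \bm{\mu}}_1 + \norm{\bm{\lambda}' - \bm{\lambda}}_1 \leq f(\delta_1)$. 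Specialising to $(\hat{\bm{\mu}}(l), \hat{\bm{\lambda}}(l))$ and combining with the rank-one bound from the first step (after redefining $f$ by the constant factor $c\epsilon_s$ if needed) yields $\norm{\mathbf{P}(\hat{\bm{\sigma}}(l), \epsilon_s) - \mathbf{P}(\bm{\sigma}^*, \epsilon_s)}_1 \leq \delta_1$ for all $l > t$, which is exactly $\tilde{\mathcal{E}}(t, \delta_1)$.

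The hard part will really just be careful bookkeeping: the three small quantities $f(\delta_1)$, $\epsilon_g/\bar{\mathsf{R}}$, and the $\epsilon_s$ prefactor must be chained cleanly, and one must verify that the set-valued continuity supplied by Lemma~\ref{lem:lp-continuity}\ref{lem:lp-continuity-opt-set} is robust to the algorithm's freedom in selecting a possibly non-unique optimizer at the perturbed parameters. Since that continuity is in the Hausdorff sense and the target set $\{\bm{\sigma}^*\}$ is a singleton under $\mathcal{E}^0$, any admissible selection $\hat{\bm{\sigma}}(l)$ is forced to lie uniformly close to $\bm{\sigma}^*$, so this freedom causes no actual difficulty.
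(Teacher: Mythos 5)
Your proposal is correct and follows essentially the same route as the paper: use $\mathcal{E}(t, f(\delta_1))$ to place $(\hat{\bm{\mu}}(l), \hat{\bm{\lambda}}(l)+\epsilon_g)$ in $\mathcal{S}$ within distance $f(\delta_1)$ of the true parameters, invoke the defining property of $f$ (continuity of the optimal solution set at the point guaranteed unique by $\mathcal{E}^0$) to get $\norm{\hat{\bm{\sigma}}(l) - \bm{\sigma}^*}_1 \leq \delta_1$, and then use the affine form \eqref{eq:feasible-P-mx} to transfer this to the transition matrices. The only cosmetic difference is your hedge about absorbing a constant into $f$: since the rows of the difference matrix are $\epsilon_s(\hat{\bm{\sigma}}(l)-\bm{\sigma}^*)$ and $\epsilon_s \leq 1$, the bound $\norm{\mathbf{P}(\hat{\bm{\sigma}}(l),\epsilon_s)-\mathbf{P}(\bm{\sigma}^*,\epsilon_s)}_1 \leq \norm{\hat{\bm{\sigma}}(l)-\bm{\sigma}^*}_1$ holds directly and no redefinition is needed.
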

\begin{proof}
The event $\mathcal{E}(t,  f(\delta_1))$ implies that for any $l > t$, $(\hat{\bm{\mu}}(l), \hat{\bm{\lambda}}(l)+\epsilon_g) \in \mathcal{S}$ and
$$\norm{\hat{\bm{\mu}}(l) - \bm{\mu}}_1 + \norm{\hat{\bm{\lambda}}(l) - \bm{\lambda}}_1 \leq  f(\delta_1).$$
 Therefore, we have
\begin{align*}
\norm{\hat{\bm{\sigma}}(l) - \bm{\sigma}^*}_1 \leq \delta_1,
\end{align*}
which gives us
\begin{align*}
\norm{\mathbf{P}(\hat{\bm{\sigma}}(l), \epsilon_s)  - \mathbf{P}(\bm{\sigma}^*, \epsilon_s)}_1
 \leq \norm{\hat{\bm{\sigma}}(l) - \bm{\sigma}^*}_1 \leq \delta_1.
\end{align*}
\end{proof}


We now prove the negative Lyapunov drift condition (Lemma~\ref{lem:lyapunov-drift}).
\begin{proof}[Proof of Lemma~\ref{lem:lyapunov-drift}]
\begin{figure*}[!t]
\normalsize
\begin{align}
Z & := \sum_{l=0}^{T-1}  \EE\left[ \sum_{\mathbf{r} \in \mathcal{R}(\tilde{\mathbf{J}}(t+l),H(t+l))} \alpha^*_{\mathbf{r}}(\tilde{\mathbf{J}}(t+l),H(t+l)) \mathbf{r} \given[\bigg]  \tilde{\mathbf{J}}(t-1), \mathbf{Q}(t), \mathbf{J}(0), \mathbf{Q}(1), \Gamma, \mathcal{E}^0  \right]	\label{eq:dummy-var}	\\
& = \sum_{l=0}^{T-1}  \EE\left[ \EE\left[ \sum_{\mathbf{r} \in \mathcal{R}(\tilde{\mathbf{J}}(t+l),H(t+l))} \alpha^*_{\mathbf{r}}(\tilde{\mathbf{J}}(t+l),H(t+l)) \mathbf{r} \given[\bigg]  Y(t), \mathcal{E}^0  \right]  \given[\bigg]  \tilde{\mathbf{J}}(t-1), \Gamma, \mathcal{E}^0  \right]	\n	\\
& = \sum_{l=0}^{T-1}  \EE\left[ \sum_{j \in \mathcal{J}} y(t+l)_j \left( \sum_{h \in \mathcal{H}} \mu(h) \sum_{\mathbf{r} \in \mathcal{R}(j,h)} \alpha_{\mathbf{r}}(j,h) \mathbf{r} \right) \given[\bigg]  \tilde{\mathbf{J}}(t-1), \Gamma, \mathcal{E}^0  \right]	\n	\\
& \geq \sum_{l=0}^{T-1}  \EE\left[ \left( \sum_{j \in \mathcal{J}} \sigma^*_j \sum_{h \in \mathcal{H}} \mu(h) \sum_{\mathbf{r} \in \mathcal{R}(j,h)}  \alpha_{\mathbf{r}}(j,h) \mathbf{r} \right) - \norm{\mathbf{y}(t+l) - \bm{\sigma}^*}_1 \bar{\mathsf{R}} \given[\bigg]  \tilde{\mathbf{J}}(t-1), \Gamma, \mathcal{E}^0  \right]. \label{eq:service-lb}
\end{align}

\hrulefill
\vspace*{1pt}
\end{figure*}

Fix constants $\delta_1 \in (0,1)$ and $T \in \NN$ such that 
$$\frac{2+ T \delta_1}{\epsilon_s} \leq \frac{T \epsilon_g}{2 \bar{\mathsf{R}}}.$$
Define the random time $\Gamma := T_1(f(\delta_1)).$
Using the fact that the policy $\phi(\epsilon_p, \epsilon_s, \epsilon_g)$ allocates rates according to the Max-Weight rule and that $\mathbf{J}(t+l) \geq \tilde{\mathbf{J}}(t+l)$, we have
\begin{align*}
& \mathbf{S}(t+l)  \cdot \mathbf{Q}(t+l)	\\
& = \max_{\mathbf{r} \in \mathcal{R}(\mathbf{J}(t+l),H(t+l))} \mathbf{r}  \cdot \mathbf{Q}(t+l) 	\\
& \geq \max_{\mathbf{r} \in \mathcal{R}(\tilde{\mathbf{J}}(t+l),H(t+l))} \mathbf{r}  \cdot \mathbf{Q}(t+l) 	\\
& \geq \sum_{\mathbf{r} \in \mathcal{R}(\tilde{\mathbf{J}}(t+l),H(t+l))} \alpha^*_{\mathbf{r}}(\tilde{\mathbf{J}}(t+l),H(t+l)) \mathbf{r}  \cdot \mathbf{Q}(t+l),
\end{align*}
for any $l \in \{0, 1, \dots, T-1\}$.
Following the same argument in the proof of Theorem~\ref{thm:static-split-optimal}, i.e., using the above inequality and that the arrivals and service per time-slot are bounded at every queue, for $B = B' + n M \max\{\bar{\mathsf{A}}^2, \bar{\mathsf{R}}^2\}(T-1)$, we obtain
\begin{eqnarray*}
\lefteqn{\Delta_T(t)}\\
& \leq & B T + 2 \sum_{l=0}^{T-1}  \mathbf{A}(t+l) \cdot \mathbf{Q}(t)	\\
& & -  2 \sum_{l=0}^{T-1} \sum_{\mathbf{r} \in \mathcal{R}(\tilde{\mathbf{J}}(t+l),H(t+l))} \hspace*{-.2in} \alpha^*_{\mathbf{r}}(\tilde{\mathbf{J}}(t+l),H(t+l)) \mathbf{r}  \cdot \mathbf{Q}(t).
\end{eqnarray*}
Taking averages in the above inequality, we have for any $t > \Gamma$,
\begin{align*}
& \EE\left[\Delta_T(t)  \given[\big]  \tilde{\mathbf{J}}(t-1), \mathbf{Q}(t), \mathbf{J}(0), \mathbf{Q}(1), \Gamma, \mathcal{E}^0  \right]	\\
 & \leq B T + 2 (T \bm{\lambda} - Z) \cdot \mathbf{Q}(t),
\end{align*}
where $Z$ is given by \eqref{eq:dummy-var}.

 To prove \eqref{eq:lyapunov-drift}, it is sufficient to prove that  $\EE\left[ \Gamma^2 \given \mathbf{J}(0), \mathbf{Q}(1) \right]  < \infty$, and for any $t > \Gamma$,
\begin{align}
\label{eq:excess-service}
Z \geq T (\bm{\lambda} + \epsilon_g/2 ).
\end{align}
From Lemma~\ref{lem:est-convergence}, we have that
$\EE\left[ \Gamma^2 \given \mathbf{J}(0), \mathbf{Q}(1) \right]  < \infty$.

Now to prove \eqref{eq:excess-service}, let $Y(t) = (\tilde{\mathbf{J}}(t-1), \hat{\bm{\mu}}(t), \hat{\bm{\lambda}}(t), \Gamma)$, and for any $l \in [0,T-1]$,
\begin{align*}
y(t+l)_j & := \PP\left[ \tilde{\mathbf{J}}(t+l) = j \given[\big] Y(t), \mathcal{E}^0 \right].
 \end{align*}
From Lemma~\ref{lem:P-mx-convergence}, given $\mathcal{E}^0$,   for any $t > \Gamma$ and $l \in [0,T-1]$, we have
$$\norm{\mathbf{P}(\hat{\bm{\sigma}}(t+l), \epsilon_s)  - \mathbf{P}(\bm{\sigma}^*, \epsilon_s)}_1 \leq \delta_1.$$
Recall that
$$\mathbf{P}(\bm{\sigma}^*, \epsilon_s) = \epsilon_s \mathbf{1}_{\card{\mathcal{J}}} \bm{\sigma}^* + (1-\epsilon_s) \mathbf{I}_{\card{\mathcal{J}}},$$
 and for any $l \in \NN$,
 $$\mathbf{P}(\bm{\sigma}^*, \epsilon_s)^l = \left( 1-(1-\epsilon_s)^l \right) \mathbf{1}_{\card{\mathcal{J}}} \bm{\sigma}^* + (1-\epsilon_s)^l \mathbf{I}_{\card{\mathcal{J}}}.$$
 Since $\tau_1(\mathbf{1}_{\card{\mathcal{J}}} \bm{\sigma}^*) = 0$, using the definition in \eqref{E-tau1}, it can be verified that $\tau_1(\mathbf{P}(\bm{\sigma}^*, \epsilon_s)^l) = (1 - \epsilon_s)^l$. Therefore, $$\Upsilon(\mathbf{P}(\bm{\sigma}^*, \epsilon_s)) = \sum_{l=0}^{\infty} \tau_1(\mathbf{P}(\bm{\sigma}^*, \epsilon_s)^l) = \frac{1}{\epsilon_s}.$$
From Lemma~\ref{lem:inhomo-dtmc-convergence}\ref{item:marginal-bound}, we have
\begin{align*}
& \sum_{l=0}^{T-1} \norm{\mathbf{y}(t+l) - \bm{\sigma}^*}_1	\\
 & \leq \sum_{l=0}^{T-1} \left( 2 \tau_1(\mathbf{P}(\bm{\sigma}^*, \epsilon_s)^l) + \delta_1 \Upsilon(\mathbf{P}(\bm{\sigma}^*, \epsilon_s)) \right)	\\
& \leq \frac{2 + T \delta_1}{\epsilon_s} \leq \frac{T \epsilon_g}{2 \bar{\mathsf{R}}}.
\end{align*}
Since any solution to the linear program $\mathit{L}_{\mathbf{c}^{\epsilon_p}}\left( \bm{\mu}, \bm{\lambda}+\epsilon_g \right)$ satisfies its constraints, we also have
\begin{align*}
\sum_{j \in \mathcal{J}} \sigma^*\left( \bm{\mu}, \bm{\lambda}+\epsilon_g \right)_j \sum_{h \in \mathcal{H}} \mu(h) \sum_{\mathbf{r} \in \mathcal{R}(j,h)} \alpha^*_{\mathbf{r}}(j,h) \mathbf{r} \geq \bm{\lambda} + \epsilon_g.
\end{align*}

Using the above two inequalities in \eqref{eq:service-lb} gives us
\begin{align*}
Z
 & \geq T (\bm{\lambda} + \epsilon_g/2 ),
\end{align*}
and this proves \eqref{eq:excess-service}.
\end{proof}
Combining Lemmas~\ref{lem:lyapunov-drift} and \ref{lem:fosters}, we have part~\ref{item:stab2} of Theorem~\ref{thm:algo-optimal}.

\bibliographystyle{IEEEtran}
\bibliography{energy}

\end{document}